\documentclass{scrartcl}

\usepackage{multirow}
\usepackage{mathrsfs}
\usepackage{authblk}
\usepackage{paralist}
\usepackage[pagebackref,menucolor=orange!40!black,filecolor=magenta!40!black,urlcolor=blue!40!black,linkcolor=red!40!black,citecolor=green!40!black,colorlinks]{hyperref}
\usepackage{mathtools}
\usepackage{amsthm}
\usepackage{subfigure} 
\usepackage{todonotes} 
\usepackage{caption}
\usepackage{amsmath}
\usepackage{amssymb}
\usepackage{comment}
\usepackage{amsfonts}
\makeatletter  
\newif\if@restonecol  
\makeatother

\usepackage[linesnumbered,ruled,vlined]{algorithm2e}
\usepackage{algpseudocode}  
\usepackage{amsmath}  

\usepackage{tikz}
\usetikzlibrary{positioning,backgrounds,patterns,calc}
\tikzstyle{vert}=[circle,draw=black,minimum size=8pt,inner sep=1pt]
\tikzstyle{vertex2}=[circle,draw=black,minimum size=15pt,inner sep=2pt]
\tikzstyle{edge}=[]
\tikzstyle{ypath}=[ultra thick]
\tikzstyle{dottedEdge}=[dotted,thick]
\tikzstyle{small-vertex}=[circle,draw=black,minimum size=6pt,inner sep=0pt,fill=white]
\tikzstyle{thinedges}=[draw=gray!30]
 \tikzstyle{boxes}=[draw,thick, rounded corners=3mm,text width=2.7cm,align=center,text opacity=1,fill opacity=1,fill=white]
\tikzstyle{unk}=[fill=gray!25!white]
\usetikzlibrary{decorations.pathreplacing}

\usepackage{bbding}
\usepackage{placeins}
\usepackage{dsfont}

\newif\ifautorefabbr
\autorefabbrfalse

\theoremstyle{theorem}
\newtheorem{thm}{Theorem}{\bfseries}{\normalfont}
\newtheorem{obs}{Observation}{\bfseries}{\normalfont}
\newtheorem{cor}{Corollary}{\bfseries}{\normalfont}
{\bfseries}{\normalfont}
\newtheorem{lem}{Lemma}{\bfseries}{\normalfont}
\theoremstyle{definition}
\newtheorem{rrule}{Reduction Rule}{\bfseries}{\normalfont}
\newtheorem{rrulev}{Reduction Rule}{\bfseries}{\normalfont}
\newtheorem{defi}{Definition}{\bfseries}{\normalfont}

\DeclareMathOperator{\dist}{dist}

\usepackage{chngcntr}

\newcounter{rrulevariant}
\counterwithin{rrulev}{rrulevariant}

\newcommand{\rrulecounter}{\refstepcounter{rrulevariant}}

\newcommand{\decprob}[3]{%
  \begin{center}%
    \begin{minipage}{0.9\linewidth}%
      \textsc{#1}\\
      \textbf{Input:} #2\\
      \textbf{Question:} #3
    \end{minipage}%
  \end{center}%
}

\newcommand{\decprobnolineend}[3]{%
  \begin{center}%
    \begin{minipage}{0.9\linewidth}%
      \textsc{#1}\\
      \textbf{Input:} #2
      \textbf{Question:} #3
    \end{minipage}%
  \end{center}%
}

\newcommand{\CE}{\textsc{Cluster Editing}\xspace}

\newcommand{\CC}{\textsc{Cluster Completion}\xspace}
\newcommand{\ODCE}{\textsc{Original Dynamic Cluster Editing}\xspace}
\newcommand{\DCE}{\textsc{Dynamic Cluster Editing}\xspace}
\newcommand{\DCC}{\textsc{Dynamic Cluster Completion}\xspace}
\newcommand{\DCD}{\textsc{Dynamic Cluster Deletion}\xspace}

\newcommand{\DCEEDlong}{\textsc{Dynamic Cluster Editing with Edge-Based Distance}\xspace}
\newcommand{\DCDEDlong}{\textsc{Dynamic Cluster Deletion with Edge-Based Distance}\xspace}
\newcommand{\DCCEDlong}{\textsc{Dynamic Cluster Completion with Edge-Based Distance}\xspace}
\newcommand{\DCshort}{DC}
\newcommand{\DCEED}{\textsc{\DCshort Editing (Edge Dist)}\xspace}
\newcommand{\DCDED}{\textsc{\DCshort Deletion (Edge Dist)}\xspace}
\newcommand{\DCCED}{\textsc{\DCshort Completion (Edge Dist)}\xspace}
\newcommand{\DCEMDlong}{\textsc{Dynamic Cluster Editing with Matching-Based Distance}\xspace}
\newcommand{\DCDMDlong}{\textsc{Dynamic Cluster Deletion with Matching-Based Distance}\xspace}
\newcommand{\DCCMDlong}{\textsc{Dynamic Cluster Completion with Matching-Based Distance}\xspace}
\newcommand{\DCEMD}{\textsc{\DCshort Editing (Matching Dist)}\xspace}
\newcommand{\DCDMD}{\textsc{\DCshort Deletion (Matching Dist)}\xspace}
\newcommand{\DCCMD}{\textsc{\DCshort Completion (Matching Dist)}\xspace}
\newcommand{\MCK}{\textsc{Multi-Choice Knapsack}\xspace}
\newcommand{\XC}{\textsc{Exact Cover by 3-Sets}\xspace}

\usepackage[numbers,sort]{natbib}
\setlength{\bibsep}{0.0pt}

\makeatletter
\def\NAT@spacechar{~}
\makeatother

\newcommand{\Clique}{\textsc{Clique}\xspace}
\newcommand{\CliqueReg}{\textsc{Clique on Regular Graphs}\xspace}
\newcommand{\MCC}{\textsc{Multicolored Clique}\xspace}

\newcommand{\Partition}{\textsc{3-Partition}\xspace}

\newcommand{\NP}{\ensuremath{\textsf{NP}}\xspace}

\newcommand{\FPT}{\ensuremath{\textsf{FPT}}\xspace}
\newcommand{\W}[1]{\ensuremath{\textsf{W[#1]}}\xspace}

\newcommand{\N}{\mathds{N}}

\graphicspath{{images/}}

\usepackage{etoolbox}
\usepackage{booktabs}

\newcommand{\appsymb}{$\star$}
\newcommand{\appref}[1]{\appsymb}
\newcommand{\apprefProof}[1]{\appsymb}

\usepackage[noabbrev,nameinlink,capitalize]{cleveref}
\crefname{table}{Table}{Tables}
\crefname{figure}{Figure}{Figures}
\crefname{cor}{Corollary}{Corollaries}
\crefname{step}{Step}{Steps}
\crefname{rrule}{Reduction Rule}{Reduction Rules}
\crefname{rrulev}{Reduction Rule}{Reduction Rules}
\crefname{thm}{Theorem}{Theorems}
\Crefname{thm}{Thm.}{Thm.}
\crefname{obs}{Observation}{Observations}
\crefname{lem}{Lemma}{Lemmas}
\crefname{section}{Section}{Sections}

\title{Parameterized Dynamic Cluster Editing\footnote{An extended abstract of this work appears in the proceedings of the 38th IARCS Annual Conference on Foundations of Software Technology and Theoretical Computer Science (FSTTCS~'18)~\cite{LMNN18}.
Unfortunately, the conference version contains a claim whose unpublished proof
contained an error (see~\cref{table:main-results}). This full version contains
all proof details.}}

\author[1]{Junjie~Luo\thanks{Supported by
CAS-DAAD Joint Fellowship Program for Doctoral Students of UCAS.
Work done while with TU Berlin.}}
\affil[1]{{\small Algorithmics and Computational Complexity, Faculty~IV, TU Berlin, {Berlin, Germany}\\
Academy of Mathematics and Systems Science, Chinese Academy of Sciences, Beijing, China\\ School of Mathematical Sciences, University of Chinese Academy of Sciences, Beijing, China\\ \texttt{junjie.luo@campus.tu-berlin.de, luojunjie@amss.ac.cn}}}

\author[2]{Hendrik~Molter\thanks{Supported by the DFG, project MATE (NI 369/17).}}
\affil[2]{{\small Algorithmics and Computational Complexity, Faculty~IV, TU Berlin, {Berlin, Germany}\\ \texttt{\{h.molter, andre.nichterlein, rolf.niedermeier\}@tu-berlin.de}}}

\author[2]{Andr\'{e}~Nichterlein}

\author[2]{Rolf~Niedermeier}

\date{}

\begin{document}

\maketitle

\begin{abstract}
We introduce a dynamic version of the \NP-hard graph problem \textsc{Cluster Editing}.
The essential point here is to take into account dynamically evolving input 
graphs: Having a cluster graph (that is, a disjoint union of cliques) regarding a solution for thr first input graph, can we cost-efficiently transform it into
a ``similar'' cluster graph that is a solution for the second (``subsequent'') input graph? 
This model is motivated by several application scenarios, including 
incremental clustering, the search for compromise clusterings, or also local 
search in graph-based data clustering.
We thoroughly study six problem variants (edge editing, edge deletion, 
edge insertion; each combined with two distance measures between cluster graphs).
We obtain both fixed-parameter tractability as well as (parameterized) hardness
results, thus (except for three open questions) providing a fairly complete
picture of the parameterized computational complexity landscape under the two 
perhaps most natural parameterizations: the distances of the new ``similar'' cluster graph to (i)~the second input graph 
and to (ii)~the input cluster graph.

\bigskip

\noindent \textbf{Keywords:} graph-based data clustering,
incremental clustering, compromise clustering, correlation clustering, local
search, goal-oriented clustering,  NP-hard problems, fixed-parameter
tractability, parameterized complexity, kernelization, multi-choice knapsack
\end{abstract}

\section{Introduction}\label{sec:intro}
The \NP-hard \CE{} problem~\cite{ben-dor_clustering_1999,SST04}, also known as 
\textsc{Correlation Clustering}~\cite{bansal2004correlation}, is one of the most popular graph-based data clustering problems 
in algorithmics. 
Given an undirected graph, the task is to transform it into a disjoint 
union of cliques (also known as cluster graph) 
by performing a minimum number of edge modifications 
(deletions or insertions). Being \NP-hard, \CE{} gained 
high popularity in studies concerning parameterized 
algorithmics, e.g.~\cite{Abu17,AEGS018,bocker_cluster_2013,cao_cluster_2012,chen2017parameterized,fomin_tight_2014,GGHN05,HH15,komusiewicz2012cluster}. To the best of our knowledge, 
to date these parameterized studies mostly focus on a ``static 
scenario''. 
\citet{chen2017parameterized} are an exception by also
looking at temporal and multilayer graphs. In their work, the input is a set of
graphs (multilayer) or an ordered list of graphs (temporal), in both cases
defined over the same vertex set. The goal is to transform each input graph
into a cluster graph
 such that, in the multilayer case, the number of vertices in which any two
 cluster graphs may differ is upper-bounded, and in the temporal case, the
 number of vertices in which any two consecutive (with respect to their position
 in the list) cluster graphs may differ is upper-bounded. 
 
In this work, we introduce a dynamic view on \CE by, roughly speaking, assuming
that the input graph changes once. 
Thus we seek to efficiently and effectively adapt an existing
solution, namely a cluster graph. In contrast to the work of
\citet{chen2017parameterized}, we do \emph{not} assume that all future changes are known. We consider the scenario where given an input graph, we only know changes that lie immediately ahead, that is, we know the ``new'' graph that the input graph changes to.
Motivated by the assumption that 
the ``new'' cluster graph should only change moderately but still be a valid representation of the data, 
we parameterize both on the 
number of edits necessary to obtain the ``new'' cluster graph 
and the difference between the ``old'' and the ``new'' cluster graph.
We finally remark that there have been previous parameterized studies of 
dynamic (or incremental) graph problems, dealing with 
coloring~\cite{HN13}, domination~\cite{Downey+2014,Abu-KhzamCESW17}, or 
vertex deletion~\cite{Abu-Khzam+2015,krithika2018dynamic}
problems. 

\subparagraph{Mathematical model.}
In principle, the input for a dynamic version of a static problem~$X$ are two instances~$I$ and~$I'$ of~$X$, a solution~$S$ for~$I$, and an integer~$d$.
The task is to find a solution~$S'$ for~$I'$ such that the distance between~$S$ and~$S'$ is upper-bounded by~$d$.
Often, there is an additional constraint on the size of~$S'$.
Moreover, the symmetric difference between~$I$ and~$I'$ is used as a parameter for the problem many times.
We arrive at our following ``original dynamic version'' of \CE{} (phrased as
decision version).

\decprobnolineend{\ODCE}
{Two undirected graphs~$G_1$ and~$G_2$ and a cluster graph $G_c$ over the same
vertex set, and two nonnegative integers: a budget~$k$ and a distance upper
bound~$d$ such that
$|E(G_1) \oplus E(G_c)| \le k$. 

}
{Is there a cluster graph~$G'$ for~$G_2$ such that
\begin{compactenum}
\item $|E(G_2) \oplus E(G')| \le k$ and
\item $\dist(G',G_c) \le d$?
\end{compactenum}
}

Herein, $\oplus$ denotes the symmetric difference between two sets and $\dist(\cdot,\cdot)$ is a generic distance function for cluster graphs, which we discuss later. 
Moreover, $G_c$ is supposed to be the ``solution'' given for the input graph~$G_1$.
However, since the question in this problem formulation is independent from $G_1$ we can remove this graph from the input and arrive at the following simplified version of the problem.
For the remainder of this paper we focus on this simplified formulation of \DCE.

\decprob{\DCE}
{An undirected graph~$G$ and a cluster graph~$G_c$ over the same vertex set, and
two nonnegative integers: a budget~$k$ and a distance upper bound~$d$.} {Is
there a cluster graph~$G'$ for~$G$ such that
 \begin{compactenum}
  \item $|E(G) \oplus E(G')| \le k$ and
  \item $\dist(G', G_c) \le d$?
\end{compactenum}
}

There are many different distance measures for cluster graphs~\cite{meila2005comparing,meila2012local}.
Indeed, we will study two standard ways of measuring the distance between two 
cluster graphs. 
One is called classification error distance, which measures the number of vertices one needs to move between cliques to make two cluster graphs the same---we subsequently refer to it as \emph{matching-based distance}. 
The other is called disagreement distance, which is the 
symmetric distance between two edge sets---we subsequently refer to it as \emph{edge-based distance}. 
Notably, the edge-based distance upper-bounds the matching-based distance.
We give formal definitions in \cref{sec:prelims}.

\subparagraph{Motivation and related work.}
Beyond parameterized algorithmics and static \CE{}, 
dynamic clustering in general has been subject to 
many studies, mostly in applied computer 
science~\cite{tang_clustering_2009,dong_clustering_2012,dey_temporal_2017,tantipathananandh_framework_2007,tantipathananandh_finding_2011,charikar2004incremental}. 
We mention in passing that there are also close ties to reoptimization 
(e.g., \cite{BHMW08,schieber2018theory,abs-1809-10578}) and parameterized local
search (e.g., \cite{FFLRSV12,GKOSS12,GHNS13,HN13,MS10}).  

There are several natural application scenarios that motivate the 
study of \DCE{}. Next, we list four of them.

\begin{description}
\item [\textbf{Dynamically updating an existing cluster graph.}] \DCE{} can be interpreted to model a smooth transition 
between cluster graphs, reflecting that ``customers'' working 
with clustered data in a dynamic setting may only tolerate a moderate 
change of the clustering from ``one day to another'' since ``revolutionary'' 
transformations would require too dramatic changes in their work.
In this spirit, when employing small parameter values, \DCE{} has kind of an evolutionary flavor with respect to the history of the various cluster graphs in a dynamic setting.
\item [\textbf{Editing a graph into a target cluster graph.}] For a given graph~$G$,
there may be many cluster graphs which are at most~$k$ edge modifications away.
The goal then is to find one of these which is close to the given target
cluster graph~$G_c$ since in a corresponding application one is already ``used to'' work with~$G_c$. 
Adapting a different point of view, the editing into the target cluster
graph~$G_c$ might be too expensive (that is, $|E(G) \oplus E(G_c)|$ is too big),
and one has to find a solution cluster graph with small enough modification
costs but being still close to the target~$G_c$.
\item[\textbf{Local search for an improved cluster graph.}]
Here the scenario is that one may have found an initial clustering expressed 
by~$G_c$, and one searches for another solution~$G'$ for~$G$ within a certain local region around~$G_c$ (captured by our parameter~$d$).
\item[\textbf{Editing into a compromise clustering.}]
When focusing on the edge-based distance, one may generalize the definition 
of \DCE{} by allowing $G_c$ to be any graph (not necessarily a cluster graph).
This may be used as a model for ``compromise cluster editing''
in the sense that the goal cluster graph then is a compromise for 
a cluster graph suitable for both input graphs since it is close to both of them.
\end{description}

\newcommand{\mrrb}[2]{\multirow{#1}{*}{\rotatebox[origin=c]{90}{#2}}}
\begin{table}[t]
\caption{Result overview for \DCE{}. We primarily categorize the problem variants by the distance measure (Matching, Edge) they use and secondarily by the allowed modification operation. 
\NP-completeness for all problem variants (last column) even holds if the input graph~$G$ is a cluster graph. PK stands for polynomial problem kernel.}
\label{table:main-results}
\centering
\begin{tabular}{l@{\hspace{.5em}}l|l@{\hspace{.6em}}l|l@{\hspace{.4em}}l|l@{\hspace{.4em}}l|l@{\hspace{.7em}}l}

\toprule
&&\multicolumn{6}{c|}{Parameter}&&\\
\multicolumn{2}{l|}{Problem Variant} & \multicolumn{2}{l|}{$k+d$} & \multicolumn{2}{l|}{$k$} & \multicolumn{2}{l|}{$d$} &  & \\ 
\midrule
\mrrb{3}{\small Matching} & Editing & \FPT(PK) & \mrrb{3}{\Cref{thm:polykernel}}  & \W1-h &
\Cref{thm:Whard} & \W1-h  & \multirow{2}{*}{$\Big\}$ \Cref{thm:Whard}} &\NP-c & \mrrb{3}{\Cref{thm:completionhardness}}\\
& Deletion & \FPT(PK) & & \emph{open} & & \W1-h & &\NP-c &\\
& Completion & \FPT(PK) & & \emph{open} & & \FPT  & \Cref{thm:fpt} & \NP-c & \\
\midrule
\mrrb{3}{\small Edge} & Editing & \FPT(PK) & \mrrb{3}{\Cref{thm:polykernel}} & \W1-h & \Cref{thm:Whard} & \W1-h & \multirow{2}{*}{$\Big\}$ \Cref{thm:Whard}} &\NP-c & \mrrb{3}{\Cref{thm:completionhardness}}\\
& Deletion & \FPT(PK) & & \FPT & 
\Cref{thm:fpt}
& \W1-h & &\NP-c &\\
& Completion & \FPT(PK) & & \emph{open}\footnotemark & & \FPT &
\Cref{thm:fpt} & \NP-c & \\
\bottomrule
\end{tabular}
\end{table}
\footnotetext{In the conference version~\cite{LMNN18} of this paper we claimed
that \DCCED is in FPT when parameterized by~$k$. Unfortunately, the unpublished
proof for this claim contained an irreparable error.}

\subparagraph{Our results.}
We investigate the (parameterized) computational complexity of
\DCE{}. We study \DCE{} as well as two restricted versions where 
only edge deletions (``Deletion'') or edge insertions (``Completion'') 
are allowed. We show that all problem variants (notably also the completion variants, whose static counterpart is trivially polynomial-time solvable) are \NP-complete even if the input graph~$G$ is already a cluster graph. 
\Cref{table:main-results} surveys our main complexity results.

The general versions of \DCE{} all turn out to be parameterized intractable
(\W1-hard) by the single natural parameters ``budget~$k$'' and
``distance~$d$''; however, when both parameters are combined, one achieves a
polynomial-size problem kernel, also implying fixed-parameter tractability.
We also derive a generic approach, based on a reduction to \MCK, to derive
fixed-parameter tractability for several deletion and completion variants with
respect to the parameters budget~$k$ as well as the
distance~$d$.

\subparagraph{Organization of the paper.}
Our work, after introducing basic notation (\cref{sec:prelims}), consists of two main parts.
In \cref{sec:hardness}, we provide all our (parameterized) hardness results.
In \cref{sec:fpt}, we develop several positive algorithmic results, namely
polynomial-size problem kernels through polynomial-time data reduction, and
fixed-parameter solving algorithms.
We conclude with a summary and directions for future work (\cref{sec:concl}).

\section{Preliminaries and Problems Variants} 
\label{sec:prelims}
In this section we give a brief overview on concepts and notation of graph theory and parameterized complexity theory that are used in this paper. We also give formal definitions of the distance measures for cluster graphs we use and of our problem variants.
We use~$\oplus$ to denote the symmetric difference, that is, for two sets~$A,B$
we have~$A \oplus B := (A \setminus B) \cup (B \setminus A)$.

\subparagraph{Graph-theoretic concepts and notations.} 
Given a graph $G=(V,E)$, we say that a vertex set $C\subseteq V$ is a \emph{clique in~$G$} if $G[C]$ is a complete graph. 
We say that a vertex set $C\subseteq V$ is \emph{isolated} in $G$ if there is no edge~$\{u,v\}\in E$ with $u\in C$ and $v\in V\setminus C$. 
A $P_3$ is a path with three vertices. 
We say that vertices $u,v,w\in V$ form an induced $P_3$ in $G$ if $G[\{u,v,w\}]$ is a $P_3$. 
We say that an edge~$\{u,v\}\in E$ is part of a $P_3$ in $G$ if there is a vertex~$w\in V$ such that $G[\{u,v,w\}]$ is a $P_3$. 
Analogously, we say that a non-edge~$\{u,v\}\notin E$ is part of a $P_3$ in $G$ if there is a vertex~$w\in V$ such that $G[\{u,v,w\}]$ is a $P_3$.
A graph $G=(V,E)$ is a \emph{cluster graph} if for all $u,v,w\in V$ we have that $G[\{u,v,w\}]$ is not a $P_3$, or in other words, $P_3$ is a forbidden induced subgraph for cluster graphs.

\subparagraph{Distance measures for cluster graphs.}
A cluster graph is simply a disjoint union of cliques.
We use two basic distance measures for cluster
graphs~\cite{meila2005comparing,meila2012local}. The first one is called ``matching-based
distance'' and counts how many vertices have to be moved from one cluster to another to make two cluster graphs the same.  See \cref{fig: d_M
and d_E} for an illustrating example. It is formally defined as follows.
\begin{defi}[Matching-based distance]
Let $G_1=(V,E_1)$ and $G_2=(V,E_2)$ be two cluster graphs defined over the same vertex set. 
Let~$B(G_1,G_2)=(V_1\uplus V_2, E, w)$ be a weighted complete bipartite graph, where each vertex $u\in V_1$ corresponds to one cluster of~$G_1$, denoted by $C_u\subseteq V$, and each vertex $v\in V_2$ corresponds to one cluster of~$G_2$, denoted by $C_v\subseteq V$. 
The weight of the edge between $u\in V_1$ and $v\in V_2$ is~$w(\{u,v\})=|C_u\cap C_v|$. 
Let~$W$ be the weight of a maximum-weight matching in $B(G_1,G_2)$.
The \emph{matching-based distance}~$d_M$ between~$G_1$ and~$G_2$ is $d_M(G_1,G_2):=|V|-W$.
\end{defi}
The second distance measure is called ``edge-based distance'' and simply measures the symmetric distance between the edge sets of two cluster graphs.
\begin{defi}[Edge-based distance]
Let $G_1=(V,E_1)$ and $G_2=(V,E_2)$ be two cluster graphs defined over the same vertex set. The \emph{edge-based distance} $d_E$ between~$G_1$ and~$G_2$ is~$d_E(G_1,G_2):=|E_1 \oplus E_2|$.
\end{defi}
See \cref{fig: d_M and d_E} for an example illustration of two cluster graphs~$G_1$ and~$G_2$ defined over the same vertex set
$V=\{u_1,u_2,u_3,u_4,u_5,u_6,v_1,v_2,w\}$.
In~$G_1$ there are three cliques (clusters)~$C_1=\{u_1,u_2,u_3,u_4,u_5,u_6\}$, $C_2=\{v_1,v_2\}$ and~$C_3=\{w\}$.
In~$G_2$ there are two cliques~${C_1}'=\{u_1,u_2,u_3,v_1,v_2\}$ and~${C_2}'=\{u_4,u_5,u_6,w\}$.
Then in~$B(G_1,G_2)$ we have three vertices on the left side for the cliques in~$G_1$ and two vertices on the right side for the cliques in~$G_2$.
A maximum-weight matching for~$B(G_1,G_2)$ matches~${C_1}$ with~$C_2'$ and~${C_2}$ with~$C_1'$, and has weight~$W=5$.
Thus we have~$d_M(G_1, G_2)=|V|-W=9-5=4$, 
while~$d_E(G_1,G_2)=3^2 +2 \cdot 3 +1 \cdot 3 =18$.

\begin{figure}[t]
\begin{center}
\begin{tikzpicture}[line width=1pt, scale=1]
		 	
\node[draw,circle,inner sep=1pt] (u1) at (1,0) {$u_1$};
\node[draw,circle,inner sep=1pt] (u2) at (2,0) {$u_2$};
\node[draw,circle,inner sep=1pt] (u3) at (3,0) {$u_3$};
\node[draw,circle,inner sep=1pt] (u4) at (4,0) {$u_4$};
\node[draw,circle,inner sep=1pt] (u5) at (5,0) {$u_5$};
\node[draw,circle,inner sep=1pt] (u6) at (6,0) {$u_6$};
\node[draw,circle,inner sep=1pt] (v1) at (1.5,-1) {$v_1$};
\node[draw,circle,inner sep=1pt] (v2) at (2.5,-1) {$v_2$};
\node[draw,circle,inner sep=1pt] (w) at (5,-1) {$w$};

\draw[blue, rounded corners=15pt,dashed]  (0.6,-1.6) rectangle ++(2.8,2.2);	
\draw[blue, rounded corners=15pt,dashed]  (3.6,-1.6) rectangle ++(2.8,2.2);	
\draw[red, rounded corners=10pt,dotted]  (0.4,-0.4) rectangle ++(6.2,0.8);	
\draw[red, rounded corners=10pt,dotted]  (0.9,-1.4) rectangle ++(2.2,0.8);	
\draw[red, dotted] (5,-1) circle (0.4cm);	

\node[blue]  at (2,-2) {${C_1}'$};	
\node[blue]  at (5,-2) {${C_2}'$};	
\node[red]   at (0,0) {$C_1$};
\node[red]   at (0,-1) {$C_2$};
\node[red]   at (4.2,-1) {$C_3$};	

\begin{scope}[xshift=2.5cm, yshift=0cm]
\node[blue, draw,circle, scale=1] (C1') at (10,0.5) {};
\node[blue] () at (10.5,0.5) {${C_1}'$};
\node[blue, draw,circle, scale=1] (C2') at (10,-1.5) {};
\node[blue] () at (10.5,-1.5) {${C_2}'$};

\node[red, draw,circle, scale=1] (C1) at (8,0.5) {};
\node[red] () at (7.5,0.5) {$C_1$};
\node[red, draw,circle, scale=1] (C2) at (8,-0.5) {};
\node[red] () at (7.5,-0.5) {$C_2$};
\node[red, draw,circle, scale=1] (C3) at (8,-1.5) {};
\node[red] () at (7.5,-1.5) {$C_3$};
\draw (C1') -- (C1) node [midway,above] {3};
\draw[line width=0.7mm] (C1') -- (C2) node [midway,above] {\textbf{2}};
\draw (C1') -- (C3) node [near start,right] {0};
\draw[line width=0.7mm] (C2') -- (C1) node [near start,above] {\textbf{3}};
\draw (C2') -- (C2) node [midway,below] {0};
\draw (C2') -- (C3) node [midway,below] {1};
\node () at (6.2,0) {$B(G_1,G_2)\colon$};
\end{scope}
	
\end{tikzpicture}
\caption{An illustration of the matching-based distance measure.
On the left side, red dotted boundaries represent cliques in cluster graph~$G_1$, and blue dashed boundaries represent cliques in cluster graph~$G_2$.
The bipartite graph on the right side is the edge-weighted bipartite graph~$B(G_1,G_2)$.
The maximum-weight matching for~$B(G_1,G_2)$ is formed by the two edges represented by the two bold lines.}
\label{fig: d_M and d_E}
\end{center}
\end{figure}
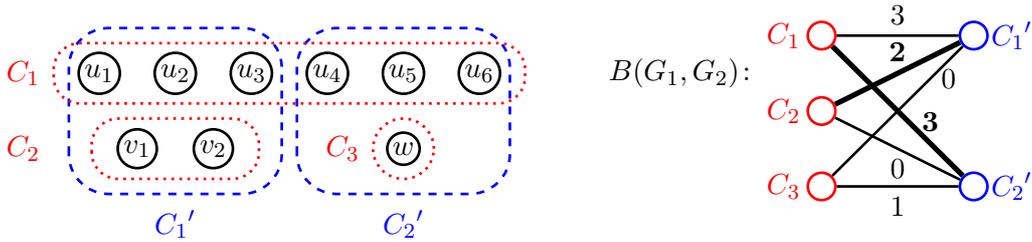

\subparagraph{Problem names and definitions.} 
In the following we present the six problem variants we are considering. We use \DCE\ as a basis for our problem variants. 
In \DCD\ we add the constraint that $E(G') \subseteq E(G)$ and in \DCC\ we add
the constraint that $E(G) \subseteq E(G')$.
For each of these three variants we distinguish a matching-based version and an edge-based version, where the generic ``dist'' in the problem definition of \DCE is replaced by~$d_M$ and~$d_E$, respectively. 
This gives us a total of six problem variants. 
We use the following abbreviations for our problem names. The letters ``DC'' stand for ``Dynamic Cluster'', and ``Matching Dist'' is short for ``Matching-Based Distance''. 
Analogously, ``Edge Dist'' is short for ``Edge-Based Distance''. 
This yields the following list of studied problems:
\begin{compactitem}
	\item \DCEMDlong, \\ abbreviation: \DCEMD. 
	\item \DCDMDlong, \\ abbreviation: \DCDMD. 
	\item \DCCMDlong, \\ abbreviation: \DCCMD. 
	\item \DCEEDlong, \\ abbreviation: \DCEED. 
	\item \DCDEDlong, \\ abbreviation: \DCDED. 
	\item \DCCEDlong, \\ abbreviation: \DCCED. 
\end{compactitem}

\subparagraph{Parameterized complexity.} 
We use standard notation and terminology from parameterized
complexity~\cite{downey2013fundamentals,flum2006parameterized,Nie06,CyganFKLMPPS15}
and give here a brief overview of the most important concepts.
A \emph{parameterized problem} is a language $L\subseteq \Sigma^* \times \mathbb{N}$, where $\Sigma$ is a finite alphabet. We call the second component
the \emph{parameter} of the problem.
A parameterized problem is \emph{fixed-pa\-ram\-e\-ter tractable} (in the complexity class \FPT{})
if there is an algorithm that solves each instance~$(I, r)$ in~$f(r) \cdot |I|^{O(1)}$ time,
for some computable function $f$. 
A parameterized problem $L$ admits a \emph{polynomial kernel} if there is a polynomial-time algorithm that transforms each instance $(I,r)$ into an instance $(I', r')$ such that $(I,r)\in L$ if and only if $(I',r')\in L$ and $|(I', r')|\le f(r)$,
for some computable function $f$. 
If a parameterized problem is hard for the parameterized complexity class \W1, then it is (presumably) not in~\FPT{}.
The complexity class \W1 is closed under parameterized reductions, which may run in \FPT-time and additionally set the new parameter to a value that exclusively depends on the old parameter.

\section{Intractability Results}\label{sec:hardness}
In this section we first establish \NP-completeness for all problem variants of \DCE, even if the input graph~$G$ is already a cluster graph. 
\begin{thm}
	\label{thm:completionhardness}
	All considered problem variants of \DCE\ are \NP-complete, even if the input graph~$G$ is a cluster graph.
\end{thm}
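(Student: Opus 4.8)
The plan is to prove NP-completeness for all six variants by a reduction from a suitable NP-hard problem, exploiting the fact that when the input graph $G$ is already a cluster graph, the combinatorial difficulty lies entirely in reconciling the modification budget $k$ with the distance constraint to $G_c$. Containment in \NP{} is routine for all six variants: given a candidate cluster graph $G'$, one verifies in polynomial time that it is indeed a disjoint union of cliques, that $|E(G)\oplus E(G')|\le k$ (together with the deletion/completion containment constraint, where applicable), and that $\dist(G',G_c)\le d$; the matching-based distance is computable in polynomial time via maximum-weight bipartite matching, and the edge-based distance trivially so. Hence the work is to establish hardness.

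**For the hardness reduction, I would start** from a number/partition-style \NP-hard problem, since the natural source of intractability here is the tension between an additive budget and an additive distance target. A promising candidate is \Partition{} (or \XC{}, which the paper has defined a macro for): encode the input numbers as cliques of corresponding sizes inside $G$, so that $G$ itself is already a cluster graph (satisfying the theorem's strengthening). Design $G_c$ so that its cliques force the solution $G'$ to regroup the vertices of $G$ into exactly two (or a prescribed number of) target clusters, where the feasible regroupings that simultaneously meet the budget $k$ and the distance bound $d$ correspond precisely to the ``yes'' instances of the partition problem. The budget $k$ controls how much edge modification a regrouping costs, while $d$ controls how far the result may sit from $G_c$; by balancing these two quantities against the arithmetic of the item sizes, a valid partition becomes equivalent to a feasible transformation.

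**The main obstacle I anticipate** is handling all six variants uniformly, since completion and deletion impose the one-sided constraints $E(G)\subseteq E(G')$ and $E(G')\subseteq E(G)$ respectively, and the two distance measures $d_M$ and $d_E$ behave differently. The completion variants are the most delicate, as their static counterpart is trivial, so the hardness must come entirely from the interplay of $d$ and $k$: here I would make $G$ a union of small cliques that can only be \emph{merged} (not split) to reach a target configuration encoded by $G_c$, turning the question of which cliques to merge into the partition instance. For the deletion and editing variants the analogous construction should go through with $G$ built so that the relevant operation (splitting off vertices, or arbitrary editing) realizes the partition. I would present one core gadget construction and then argue, via small modifications and separate accounting of $d_M$ versus $d_E$, that it yields hardness in each of the six cases.

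**A secondary technical point** to get right is the exact bookkeeping of distances: because the edge-based distance upper-bounds the matching-based distance (as noted in the excerpt), the two cases cannot share identical numerical thresholds, so I would compute $d_E$ and $d_M$ explicitly for the intended solutions of the gadget and set $d$ accordingly in each variant. I expect the arithmetic to be routine once the gadget is fixed, so the crux is designing a single flexible construction whose feasibility condition is forced to coincide with the partition predicate across all operation/distance combinations.
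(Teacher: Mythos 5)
Your plan coincides with the paper's strategy exactly where you work it out: the paper proves the completion variants hard by reductions from the strongly \NP-hard \Partition{} problem in which the numbers $a_i$ become small cliques of $G$, large anchor cliques and a suitable $G_c$ (a complete graph for \DCCED; an overlapping clustering plus a huge padding clique for \DCCMD) force any feasible solution to merge each small clique with exactly one anchor, and the arithmetic of $k$ and $d$ encodes the partition condition (\cref{lem:DCCEDhard,lem:DCCMDhard}). The deletion variants also fit your outline: \DCDED{} follows by swapping $(G,k)$ with $(G_c,d)$ in the \DCCED{} instance (\cref{obs: symmetric swap}), and \DCDMD{} is proved by a splitting gadget from \XC{} (\cref{lem:DCDMDhard}). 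One point you should make explicit: encoding numbers as clique \emph{sizes} is a polynomial-time reduction only if the numbers are polynomially bounded, so the source problem must be strongly \NP-hard---this is why \Partition{} (i.e., 3-Partition) works but two-way \textsc{Partition} would not.

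The genuine gap is your treatment of the editing variants, which you dispose of with ``the analogous construction should go through \dots{} via small modifications.'' That is precisely where the one-sided constraint your gadget leans on disappears, and it is not a routine extension. For \DCEED{} the partition gadget can be salvaged, but only via a device you never articulate: choose $k+d=|E(G)\oplus E(G_c)|$ exactly; then every solution $G'$ satisfies $E(G)\oplus E(G')\subseteq E(G)\oplus E(G_c)$ (\cref{obs: exact modification}), and since $G_c$ is complete while $G$ is a cluster graph, this set contains no edge of $G$, so no deletions are possible and editing collapses to completion. Without this observation nothing in your construction prevents an editing solution from splitting the number-cliques. For \DCEMD{} (editing with matching-based distance) even this trick is unavailable: the matching-based distance does not obey the additive identity that the exact-budget argument rests on, and in the \DCCMD{} gadget the budget $k$ (of order $m^2MB$) is large enough to pay for splitting small cliques and peeling many vertices off the big cliques, so the ``merging only'' analysis in the backward direction genuinely breaks down. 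This is why the paper abandons the partition approach for the editing variants and instead uses separate reductions from \Clique{} with carefully separated weight scales $\ell^7,\ell^4,\ell^2$ chosen so that $k$ is smaller than every clique that must stay intact; these reductions simultaneously yield \W1-hardness in $k$ (\cref{lem:DCEMDWhardwrtk,lem:DCEEDWhardwrtk}). As written, your single-gadget plan is incomplete for \DCEED{} and fails for \DCEMD.
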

Intuitively, \cref{thm:completionhardness} means that on top of the \NP-hard
task of transforming a graph into a cluster graph, it is computationally hard to improve an already found clustering with respect to being closer to the target cluster graph.
Notably, while the dynamic versions of \CC\ turn out to be \NP-complete, it is
easy to see that classical \CC is solvable in polynomial time.

In a second part of this section we show \W1-hardness results both for budget
parameter~$k$ and for distance parameter~$d$ for several variants of \DCE.
Formally, we show the following.

\begin{thm}
	\label{thm:Whard}
	The following problems are \W1-hard when parameterized by the budget~$k$:
	\begin{compactitem}
		\item \DCEMD,
		\item \DCEED.
	\end{compactitem}
	The following problems are \W1-hard when parameterized by the distance~$d$:

	\begin{compactitem}
		\item \DCEMD,
		\item \DCDMD,
		\item \DCEED, and
		\item \DCDED.
	\end{compactitem}
\end{thm}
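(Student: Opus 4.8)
The plan is to give parameterized reductions from \Clique (equivalently from \MCC or \CliqueReg when color structure or regularity is convenient), treating the two parameterizations separately, since they force opposite regimes: under the parameter~$d$ I keep $d=\poly(\ell)$ while letting~$k$ grow with~$|E_H|$, and under the parameter~$k$ I must keep $k=\poly(\ell)$ while letting~$d$ be large. Throughout, $H=(V_H,E_H)$ with clique-size target~$\ell$ is the source instance.

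For the four \emph{distance}-parameterized statements I would start with the matching-based variants, where a clean reduction exists. Take the input graph $G:=H$, let the target~$G_c$ be the all-singletons cluster graph (the empty graph), and set $d:=\ell-1$ and $k:=|E_H|-\binom{\ell}{2}$. Since the target is all singletons, $d_M(G',G_c)=\sum_C(|C|-1)$, so writing $s_C:=|C|-1$ the distance bound reads $\sum_C s_C\le \ell-1$. In the deletion variant every cluster of~$G'$ is a clique of~$H$, so the number of kept edges is $\sum_C\binom{s_C+1}{2}$, and the budget forces $\sum_C\binom{s_C+1}{2}\ge\binom{\ell}{2}$. The key point is that $\sum_C\binom{s_C+1}{2}=\tfrac12\bigl(\sum_C s_C^2+\sum_C s_C\bigr)$ is maximized, under the \emph{linear} constraint $\sum_C s_C\le \ell-1$, by a single part, attaining the value $\binom{\ell}{2}$ exactly when one cluster has size~$\ell$ and all others are singletons. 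Hence both constraints hold simultaneously iff~$H$ has an $\ell$-clique. The editing variant uses the same parameters: expanding $|E(G)\oplus E(G')|=\sum_C\binom{|C|}{2}+|E_H|-2m_{\mathrm{in}}$, where $m_{\mathrm{in}}$ counts the $H$-edges inside clusters, and combining with the distance bound forces both $\sum_C\binom{|C|}{2}=\binom{\ell}{2}$ and $m_{\mathrm{in}}=\binom{\ell}{2}$, again pinning down a single $\ell$-clique that is complete in~$H$. This handles \DCDMD and \DCEMD under~$d$.

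The edge-based distance under~$d$ is the first real obstacle, because with the singleton target $d_E(G',G_c)=\sum_C\binom{|C|}{2}$ counts edges on both sides, so the superadditivity that concentrated the mass into one clique disappears (e.g. two triangles and one $4$-clique become indistinguishable). My fix is to reintroduce a linear per-vertex cost by adding an anchor clique~$A$ of size~$a=\poly(\ell)$ that is fully joined to~$V_H$ in~$G$; then moving a vertex into~$A$'s cluster costs~$a$ edges toward~$G_c$, mimicking the matching-distance mass. With $G_c$ equal to the anchor clique plus singletons, $d:=a\ell+\binom{\ell}{2}$, and~$k$ tuned to the intended solution, pulling exactly~$\ell$ vertices that form an $H$-clique into~$A$ meets both bounds. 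The delicate part, and the main technical hurdle, is choosing~$a$ and~$k$ so that no spurious configuration survives: a large~$a$ forbids pulling more than~$\ell$ vertices into~$A$ but simultaneously frees distance budget to build stray cliques among~$V_H$, so one must balance~$a$ against~$d$ (and possibly pad~$H$ or invoke \CliqueReg to bound stray clique sizes) so that the only way to respect both bounds is to select a single $\ell$-clique. The same anchored instance serves \DCDED and \DCEED under~$d$.

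For the two \emph{budget}-parameterized statements (\DCEMD and \DCEED under~$k$) the regime is reversed and the reduction must look genuinely different: plain \CE is fixed-parameter tractable in~$k$, so the hardness can only come from the distance constraint to~$G_c$; moreover, since the deletion variant \DCDED is in \FPT\ for~$k$, the construction must essentially exploit edge \emph{additions}. I would therefore build~$G$ as an almost-cluster graph assembled from one selection gadget per color class of an \MCC instance together with an anchor, so that a clique is encoded by a balanced bundle of additions and deletions of total size $k=\poly(\ell)$, while~$G_c$ together with a large~$d$ rules out all but the consistent one-vertex-per-class selections. Identifying gadgets for which the \emph{only} budget-$k$ edit sets landing within distance~$d$ of~$G_c$ are exactly the clique encodings---and in particular ensuring that the freedom to add edges cannot be abused to bypass the color and adjacency checks---is the crux of this second reduction and the step I expect to be hardest to get exactly right.
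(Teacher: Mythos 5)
Your treatment of the two matching-based claims under the distance parameter~$d$ is correct and takes a genuinely different, considerably simpler route than the paper. With $G:=H$, $G_c$ the all-singletons graph, $d:=\ell-1$, and $k:=|E_H|-\binom{\ell}{2}$, your identity $d_M(G',G_c)=\sum_C(|C|-1)$ is right (each cluster is matched to one of its own singletons), and the concentration bound $\sum_C\binom{s_C+1}{2}\le\frac{1}{2}\bigl((\sum_C s_C)^2+\sum_C s_C\bigr)\le\binom{\ell}{2}$, tight only when a single $s_C$ is nonzero and $\sum_C s_C=\ell-1$, correctly pins down one cluster of size~$\ell$; your budget computation for the editing variant ($2m_{\mathrm{in}}\ge\sum_C\binom{|C|}{2}+\binom{\ell}{2}$ together with $m_{\mathrm{in}}\le\sum_C\binom{|C|}{2}\le\binom{\ell}{2}$) additionally forces every cluster to be a clique of~$H$. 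The paper instead proves \DCEMD under~$d$ by a reduction from \CliqueReg with a complete target graph (\cref{lem:DCEMDWhardwrtd}) and needs a separate gadget with a universal vertex and pendant copies for \DCDMD (\cref{lem:DCCMDWhardwrtd}); your single construction handles both at once. This, however, settles only two of the six claims.

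For \DCDED and \DCEED under~$d$, your anchor construction has a genuine soundness hole, and no choice of $a$ and $k$ can close it. Concretely, take $\ell\ge 3$ and let $H$ be a perfect matching on $n=2\bigl(a\ell+\binom{\ell}{2}\bigr)$ vertices, so $|E_H|=a\ell+\binom{\ell}{2}$ and $H$ has no clique of size~$3$. Your parameters give $d=a\ell+\binom{\ell}{2}$ and $k=a(n-\ell)+|E_H|-\binom{\ell}{2}=an$. Now let $G'$ keep $A$ as its own cluster and keep every matching edge of $H$ as a $2$-clique: then $|E(G)\oplus E(G')|=an=k$ (only the anchor--$V_H$ edges are deleted) and $|E(G')\oplus E(G_c)|=|E_H|=d$, so this no-instance of \Clique maps to a yes-instance, for both the deletion and the editing variant. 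The root cause is that distance spent on stray cliques inside $V_H$ converts into budget savings at a one-to-one exchange rate independent of~$a$; to exclude $\Omega(n)$ disjoint $2$-cliques you would need $a$ (hence $d$) to grow with $n$, which destroys the parameterized reduction, and regular graphs still contain linear-size matchings, so invoking \CliqueReg or padding $H$ does not help. The idea you are missing is the paper's exact-budget trick: arrange the instance so that $k+d=|E(G)\oplus E(G_c)|$, whence by \cref{obs: exact modification} every solution performs exactly $k$ modifications, all inside $E(G)\oplus E(G_c)$ --- precisely eliminating the slack your stray cliques exploit. This is how \cref{lem:DCDEDWhardwrtd} (reduction from \MCC) handles \DCDED, and \DCEED under~$d$ then follows from the budget-parameter construction by exchanging the roles of $G$ and $G_c$ (\cref{obs: symmetric swap}, \cref{cor: DCEED is W1-hard wrt d exact}).

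For \DCEMD and \DCEED under the budget~$k$ you give no construction at all, only the (correct) meta-observations that the hardness must come from the distance constraint and must exploit insertions; you say yourself this is the step you cannot yet carry out. These are the technically heaviest parts of the theorem in the paper (\cref{lem:DCEMDWhardwrtk,lem:DCEEDWhardwrtk}): both reductions from \Clique use families of cliques whose sizes are separated polynomials in $\ell$ (of orders $\ell^2$, $\ell^4$, $\ell^7$, $\ell^8$) so that a budget such as $k=\binom{\ell}{2}(2\ell^4+1)+\ell\binom{\ell-1}{2}$ decomposes \emph{uniquely} into the intended operations (split an edge-clique, regroup the freed singletons), much like digits in a positional numeral system; nothing in your sketch substitutes for this analysis. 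So, in total, the proposal proves the theorem for \DCEMD and \DCDMD parameterized by~$d$ and leaves the remaining four claims open.
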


The proof of \cref{thm:Whard} is based on several parameterized reductions which are presented in \cref{sec:Whardness}.
The proof of \cref{thm:completionhardness} is based on nonparameterized polynomial-time many-one reductions (see \cref{sec:NP-hardness}) and some parameterized reductions that also imply NP-hardness (see \cref{sec:Whardness}).
More precisely, \cref{thm:completionhardness} follows from
\cref{lem:DCCEDhard,lem:DCCMDhard,obs: symmetric swap,lem:DCDMDhard} presented
in 
\cref{sec:NP-hardness}, as well as \cref{lem:DCEMDWhardwrtk,lem:DCEEDWhardwrtk}
presented in \cref{sec:Whardness}.

\subsection{Polynomial-time many-one reductions}
\label{sec:NP-hardness}
 
We first present two polynomial-time many-one reductions from the strongly \NP-hard \Partition problem~\cite{GJ79} for both \DCCMD and \DCCED with input graphs~$G$ that are already cluster graphs. We start with the latter. 
\begin{lem}
\label{lem:DCCEDhard}
\DCCED is \NP-complete, even if the input graph~$G$ is a cluster graph.
\end{lem}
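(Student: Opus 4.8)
The plan is to reduce from the strongly \NP-hard \Partition{} problem: given $3n$ positive integers $a_1,\dots,a_{3n}$ with $T/4<a_i<T/2$ and $\sum_i a_i=nT$, decide whether the $a_i$ can be partitioned into $n$ triples each summing to~$T$. Membership of \DCCED{} in \NP{} is immediate, since a candidate cluster graph~$G'$ can be verified in polynomial time, so the interesting direction is hardness.

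Given a \Partition{} instance, I would build an input graph~$G$ that is a disjoint union of cliques $A_1,\dots,A_{3n}$, where $A_i$ has $a_i$ vertices (using that strong \NP-hardness lets me assume the~$a_i$, and hence all clique sizes, are polynomially bounded). Thus~$G$ is already a cluster graph on $nT$ vertices. The key structural observation is that, since we may only insert edges and must end up with a cluster graph, every solution~$G'$ is obtained by merging whole cliques of~$G$: the clusters of~$G'$ induce a partition of $\{A_1,\dots,A_{3n}\}$ into groups, and merging a group whose cliques have sizes $s_1,\dots,s_r$ and total size~$S$ inserts exactly $\tfrac12(S^2-\sum_j s_j^2)$ edges. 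Hence the budget~$k$ controls only $\sum_g S_g^2$, where $S_g$ denotes the total size of group~$g$.

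I would then set the budget so that ``$|E(G)\oplus E(G')|\le k$'' is equivalent to $\sum_g S_g^2\le nT^2$. By convexity (Cauchy--Schwarz, using $\sum_g S_g=nT$) this already forces the number of groups to be at least~$n$, and whenever there are exactly~$n$ groups it forces every~$S_g$ to equal~$T$, i.e.\ a balanced partition; together with $T/4<a_i<T/2$ this means exactly three items per group, precisely a \Partition{}. The remaining job is to design the target cluster graph~$G_c$ and the distance bound~$d$ so that ``$d_E(G',G_c)\le d$'' forces the complementary inequality, namely that~$G'$ has \emph{at most}~$n$ clusters, thereby pinning the number of clusters to exactly~$n$.

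The main obstacle is exactly this last coupling. For a cluster-graph input under the completion (edge-insertion) operation, both the budget and the edge-based distance to a ``symmetric'' target (such as a single clique) turn out to depend only on $\sum_g S_g^2$, so on their own they cannot rule out spurious unbalanced clusterings that use more than~$n$ clusters yet still meet a prescribed value of $\sum_g S_g^2$. I would overcome this by additively padding the item sizes (replacing each~$a_i$ by $a_i+L$ and~$T$ by $T+3L$ for a sufficiently large, still polynomial,~$L$) and by equipping~$G_c$ with an auxiliary skeleton of~$n$ large target cliques, so that any grouping deviating from ``$n$ clusters of three padded items each'' incurs a distance penalty of order~$L$ that exceeds~$d$. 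Proving that these penalties indeed exceed~$d$ for every clustering that does not correspond to a \Partition{}, while the intended clustering stays within both~$k$ and~$d$, is where the careful bookkeeping lies, and it is the step I expect to be the crux of the argument.
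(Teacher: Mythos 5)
Your framework coincides with the paper's own proof up to the critical point: the paper also reduces from strongly \NP-hard 3-Partition, also encodes each item~$a_i$ as a clique of size~$a_i$, and also finishes with the same convexity argument pinning $\sum_g S_g^2$. Your diagnosis of the obstacle is exactly right as well --- with a ``symmetric'' target both constraints are functions of $\sum_g S_g^2$ alone, so clusterings with \emph{more} than $n$ groups that happen to satisfy $\sum_g S_g^2 = nT^2$ are not excluded. But this is precisely the crux, and you defer it rather than solve it, so the proof is incomplete; moreover, the fix you sketch does not close it. Padding every item by~$L$ only enforces ``three items per group'' \emph{after} the group sums are already pinned to $T+3L$; it does nothing against finer clusterings. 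And an ``auxiliary skeleton of $n$ large target cliques'' inside~$G_c$ runs into an assignment problem: you cannot place item vertices into specific target cliques of~$G_c$ without presupposing the very partition you are trying to compute, and if the target cliques consist only of auxiliary vertices, then they are isolated cliques identical in $G$ and $G_c$, contribute nothing to $d_E$ (indeed \cref{rrule:sameclique} would delete them), and no penalty ``of order~$L$'' ever materializes for deviating groupings.

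The paper closes the gap with two ingredients your proposal lacks. First, the anchor cliques are placed in $G$, not in~$G_c$: $G$ gets $m$ big cliques of size $M=4(mB)^2$ each, while $G_c$ stays a single complete clique. Second, the distance bound is set to $d=|E(G)\oplus E(G_c)|-k$, so that (by the triangle inequality for symmetric differences, formalized in \cref{obs: exact modification}) every solution must insert \emph{exactly}~$k$ edges, all of them inside $E(G)\oplus E(G_c)$. This exactness turns the budget from an upper bound into a spending obligation, and the anchors convert that obligation into a bound on the number of groups: merging two big cliques costs $M^2>k$, and if even one small clique were left unmerged with a big clique, the total number of insertable edges would be at most $M(mB-1)+(mB)^2<k$, so the budget could not be exhausted. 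Hence every small clique merges with exactly one big clique, the number of groups is pinned to exactly~$m$, and only then does your convexity argument on $\sum_g S_g^2$ force every group sum to equal~$B$. This exact-spending mechanism (or an equivalent device forcing an upper bound on the number of clusters) is the missing idea; without it the two inequality constraints alone cannot complete the reduction.
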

\begin{proof}
We present a polynomial-time reduction from \Partition, where given a multi-set
of~$3m$ positive integers~$\{a_1,a_2,\dots,a_{3m}\}$ with~$\sum_{1 \le i \le
3m}a_i=mB$ and for~$1 \le i \le 3m$ it holds that $B/4 < a_i < B/2$, the task is
to determine whether this multi-set can be partitioned into~$m$~disjoint
subsets~$A_1,A_2, \dots, A_m$ such that for each~$1 \le i \le m$, $\sum_{a_j
\in A_i}a_j=B$.
Given an instance~$\{a_1,a_2,\dots,a_{3m}\}$ of \Partition, we construct an
instance~$(G,G_c,k,d)$ of \DCCED as follows.
The construction is illustrated in \cref{fig: DCCED NP-c}.
For graph~$G$, we first create~$m$ disjoint \emph{big} cliques each
with~$M=4(mB)^2$ vertices.
Then for every integer~$a_i$, we create a \emph{small} clique~$C_i$
with~$|C_i|=a_i$ vertices.
We set~$G_c$ to be a complete graph.
Further, we set~$k=mMB+\frac{m}{2}B^2 -\frac{1}{2}\sum_{1 \le i \le 3m}{a_i}^2$
and~$d=|E(G) \oplus E(G_c)|-k$.

Next we show that~$\{a_1,a_2,\dots,a_{3m}\}$ is a yes-instance of \Partition if
and only if~$(G,G_c,k,d)$ is a yes-instance of \DCCED.

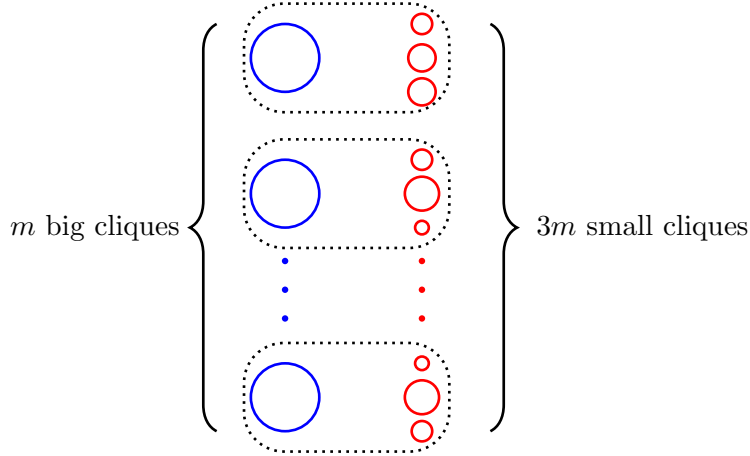
\begin{figure}[t]
\begin{center}
\begin{tikzpicture}[line width=1pt, scale=.9]

			\draw[blue] (-1,5) circle (0.5cm);
			\draw[blue] (-1,3) circle (0.5cm);
			\path (-1,2) -- (-1,1) node [blue, font=\Huge, midway, sloped] {$\dots$};
			\draw[blue] (-1,0) circle (0.5cm);
			
			\draw[red] (1,5.5) circle (0.15cm);
			\draw[red] (1,5) circle (0.2cm);
			\draw[red] (1,4.5) circle (0.2cm);
			\draw[red] (1,3.5) circle (0.15cm);
			\draw[red] (1,3) circle (0.25cm);
			\draw[red] (1,2.5) circle (0.1cm);
			\path (1,2) -- (1,1) node [red, font=\Huge, midway, sloped] {$\dots$};
			\draw[red] (1,0.5) circle (0.1cm);
			\draw[red] (1,0) circle (0.25cm);
			\draw[red] (1,-0.5) circle (0.15cm);
			
			\draw[decorate,decoration={brace,amplitude=10pt},xshift=0pt,yshift=0pt]
(-2,-0.5) -- (-2,5.5) node [black,midway,xshift=-1.6cm] 
{ $m$ big cliques};

\draw[decorate,decoration={brace,amplitude=10pt,mirror},xshift=0pt,yshift=0pt]
(2,-0.5) -- (2,5.5) node [black,midway,xshift=2cm] 
{ $3m$ small cliques};

\draw[rounded corners=15pt,dotted]  (-1.6,4.2) rectangle ++(3,1.6);
\draw[rounded corners=15pt,dotted]  (-1.6,2.2) rectangle ++(3,1.6);
\draw[rounded corners=15pt,dotted]  (-1.6,-0.8) rectangle ++(3,1.6);			

\end{tikzpicture}
\caption{Illustration of the constructed instance for the proof of \cref{lem:DCCEDhard}. 
Graph~$G$ has $m$~big cliques on the left side and~$3m$ small cliques on the right side.
Each number~$a_i$ in the instance of \Partition is represented by a small clique of size~$a_i$ on the right side.
Every dashed rounded rectangle containing one big clique and three small cliques is a possible group in a solution.}
\label{fig: DCCED NP-c}
\end{center}
\end{figure}

\emph{($\Rightarrow$):} Assume that~$\{a_1,a_2,\dots,a_{3m}\}$ is a yes-instance of \Partition. 
Then there is a partition~$A_1,A_2, \dots, A_m$ such that for each~$1 \le i \le
m$ it holds that $\sum_{a_j \in A_i}a_j=B$.
For each~$A_i$, we can combine the corresponding three small cliques and one big clique of size~$M$ into one clique.
This costs~$MB+\frac{1}{2}(B^2-\sum_{a_j \in A_i}{a_j}^2)$ edge insertions.
In total, there are
\begin{equation*}
mMB+\frac{m}{2}B^2 -\frac{1}{2}\sum_{1 \le i \le 3m}{a_i}^2=k
\end{equation*}
edge insertions.
Hence we get a cluster graph $G'$ with $|E(G) \oplus E(G')|=k$ and~$|E(G') \oplus E(G_c)|= |E(G) \oplus E(G_c)| -k =d$.

\emph{($\Leftarrow$):} Assume that~$(G,G_c,k,d)$ is a yes-instance of \DCCED and let~$G'$ be the solution.
Since~$k+d=|E(G) \oplus E(G_c)|$, to get~$G'$ we have to add exactly~$k$ edges
to~$G$.
We make the following two observations.
First, we can never combine two big cliques, as otherwise we need at least~$M^2>k$ edge insertions.
Second, every small clique must be combined with a big clique, as otherwise we have at most~$M(mB-1)$ edge insertions between big cliques and small cliques and at most~$(mB)^2$ edge insertions between small cliques, and in total there are at most~$M(mB-1)+(mB)^2=mMB-3(mB)^2<k$ edge insertions.
Hence, to get solution~$G'$ we must partition all~$3m$~small
cliques~$C_1,C_2,\dots,C_{3m}$ in~$G$ into~$m$ groups~$A_1,A_2, \dots, A_m$ and
combine all cliques in each group with one big clique.

We can split the edge insertions into two parts $k=k_1+k_2$, where $k_1=mMB$ is
the number of edge insertions between big cliques and small cliques, and
$k_2=\sum_{1 \le i \le m}{\sum_{C_j,C_k \in A_i}|C_j||C_k|}$
 is the total number of edge insertions between small cliques in each group.
We can also write $k_2$ as 
\begin{equation*}
k_2=\frac{1}{2}\sum_{1 \le i \le m}\left(\sum_{C_j \in A_i}|C_j|\right)^2-\frac{1}{2}\sum_{1 \le i \le 3m}{a_i}^2.
\end{equation*}
Recall that~$k=mMB+\frac{m}{2}B^2 -\frac{1}{2}\sum_{1 \le i \le 3m}{a_i}^2$, so
we have that~$\sum_{1 \le i \le m}({\sum_{C_j \in A_i}|C_j|)^2} = mB^2$.
Since~$\sum_{1 \le i \le 3m}|C_i|=\sum_{1 \le i \le 3m}a_i=mB$,
the equality~$\sum_{1 \le i \le m}({\sum_{C_j \in A_i}|C_j|)^2} = mB^2$ holds
only if~$C_1,C_2,\dots,C_{3m}$ can be partitioned into $m$ disjoint
subsets~$A_1,A_2, \dots, A_m$ such that for~$1 \le i \le m$ it holds that
$\sum_{C_j \in A_i}|C_j|=B$.
Thus,~$\{a_1,a_2,\dots,a_{3m}\}$ can be partitioned into~$m$ disjoint
subsets~${A_1}',{A_2}', \dots, {A_m}'$ such that for~$1 \le i \le m$ it holds
that $\sum_{a_j \in {A_i}'}a_j=B$.
\end{proof}

We continue with \DCCMD. The corresponding \NP-hardness reduction uses the same
basic ideas as in \cref{lem:DCCEDhard}.
The main difference is that in the proof of \cref{lem:DCCEDhard} we make use of
the property that we need to add exactly~$k$ edges which enforces that every
small clique should be combined with a big clique, while in the following proof
we need to make use of the matching-based distance to enforce this.

\begin{lem}
\label{lem:DCCMDhard}
\DCCMD is \NP-complete, even if the input graph~$G$ is a cluster graph.
\end{lem}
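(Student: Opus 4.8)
My plan is to reduce from \Partition\ again, reusing almost verbatim the gadget of \cref{lem:DCCEDhard}: a cluster graph~$G$ consisting of $m$ pairwise disjoint \emph{big} cliques~$D_1,\dots,D_m$ of size~$M$ (with $M$ chosen huge, e.g.\ $M=4(m^2B)^2$) together with $3m$ \emph{small} cliques, and the same budget idea where $k$ equals the number of edge insertions needed to merge each big clique with a group of small cliques summing to the target. As observed before \cref{lem:DCCMDhard}, with completion the budget alone already forbids merging two big cliques (this costs at least $M^2>k$) and, \emph{provided no small clique is left outside a big clique}, a short convexity argument forces each big clique to absorb small cliques of total size exactly $B$, i.e.\ a \Partition. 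Hence the only thing the matching-based distance has to accomplish is to forbid the degenerate solutions that dump small cliques into extra small-only clusters (which the budget happily allows, since those insertions are cheap).

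The obstacle specific to the matching-based distance is that the natural target ``$G_c=$ one cluster $D_j\cup S_j$ per big clique'' makes $d_M(G',G_c)$ depend on \emph{which} big clique absorbs a given small clique, not merely on whether it is absorbed; this alignment dependence would make the distance value depend on the (unknown) partition. I plan to remove this dependence by a scaling-plus-splitting trick. First, blow up the instance: represent the integer~$a_i$ by a small clique~$C_i$ of size~$m\cdot a_i$ (so the total small mass is $m^2B$ and the per-group target is $mB$; this is an equivalent \Partition\ instance). Then define $G_c$ to have exactly the $m$ clusters $R_j=D_j\cup S_j$, where each block~$S_j$ receives \emph{exactly $a_i$ vertices of every small clique~$C_i$}; this even split is possible precisely because each~$C_i$ has size $m\cdot a_i$, and it gives $|S_j|=mB$, so every~$R_j$ has the uniform size~$M+mB$.

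The key computation is then clean. Because $M$ dwarfs all small overlaps, a standard exchange argument shows that in any completion~$G'$ (where each big clique sits in its own cluster~$T_j$) a maximum-weight matching in $B(G',G_c)$ must pair $T_j$ with $R_j$, leaving the small-only clusters unmatched. Since each absorbed small clique $C_i\subseteq T_j$ contributes $|C_i\cap S_j|=a_i$ \emph{regardless of~$j$}, the matching weight is $W=mM+\sum_{C_i\text{ absorbed}}a_i$, so $d_M(G',G_c)=m^2B-\sum_{C_i\text{ absorbed}}a_i\ge m^2B-mB$, with equality iff \emph{every} small clique is absorbed into some big clique. Setting $d:=m^2B-mB$ therefore makes the distance constraint equivalent to ``no small-only clusters'', which is exactly the missing ingredient.

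Putting the pieces together yields both directions. For a yes-instance I build $G'$ by merging each~$D_j$ with the small cliques of one partition block (total scaled size $mB$): the budget is met with equality and all cliques are absorbed, so $d_M=d$. Conversely, given a solution~$G'$ within budget~$k$ and distance~$d$, the distance bound forces every small clique into a big clique, after which the convexity argument (using $\sum_j g_j=m^2B$ over $m$ big clusters and the budget bound $\sum_j g_j^2\le m(mB)^2$) forces each big cluster to hold small cliques summing to exactly~$mB$, i.e.\ the original sizes in each group sum to~$B$, giving a \Partition. Membership in \NP\ is immediate, as $d_M$ is computable in polynomial time via maximum-weight bipartite matching. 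I expect the main work to be the exchange argument pinning down the matching (where the size bound $M>mB$ is used) and verifying the exact budget constant; the conceptual crux is the equal-splitting of each small clique across all blocks, which is what renders $d_M$ assignment-independent.
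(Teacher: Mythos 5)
Your proof is correct, and it follows the paper's high-level strategy for this lemma: reduce from \Partition, scale each integer~$a_i$ to a small clique of size~$ma_i$, design~$G_c$ so that the matching gain of absorbing a small clique into a big-clique cluster is \emph{independent} of which big clique absorbs it, use the distance bound to force absorption of every small clique, and finish with the same budget-plus-convexity argument as in the edge-based case (\cref{lem:DCCEDhard}). Where you genuinely depart from the paper is the gadget realizing assignment-independence. The paper puts exactly \emph{one} vertex of each small clique~$C_i$ into each of the~$m$ big clusters of~$G_c$ (gain~$1$ per absorbed clique, $d=d_0-3m$), and must then park the leftover $(a_i-1)m$ vertices of each~$C_i$ in a garbage cluster of~$G_c$, which in turn requires an extra huge clique~$C^{M^2}$ in~$G$ to anchor that garbage cluster in the matching. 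Your even split --- $a_i$ vertices of~$C_i$ into every block~$S_j$ --- makes all~$m$ clusters of~$G_c$ uniform of size~$M+mB$, gives gain~$a_i$ per absorbed clique ($d=m^2B-mB$), and eliminates both the garbage cluster and the anchor clique, so the exchange argument pinning the matching only has to handle the~$m$ pairs $(T_j,R_j)$; the paper's variant, in exchange, tracks absorption by a unit count, which makes the distance bookkeeping slightly more transparent. One small caution for when you write out the exchange argument: pinning every~$R_j$ to~$T_j$ in the worst case needs roughly $M>m^2B$ (a single broken pair loses~$M$ but can recoup up to~$m\cdot mB$ across the matching), not merely $M>mB$ as your sketch says; your choice $M=4(m^2B)^2$ satisfies this with a large margin, so nothing breaks.
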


\begin{proof}
We present a polynomial-time reduction from \Partition, where given a multi-set
of~$3m$ positive integers~$\{a_1,a_2,\dots,a_{3m}\}$ with~$\sum_{1 \le i \le
3m}a_i=mB$ and for~$1 \le i \le 3m$, $B/4 < a_i < B/2$, the task is to
determine whether this multi-set can be partitioned into~$m$~disjoint
subsets~$A_1,A_2, \dots, A_m$ such that for each~$1 \le i \le m$ it holds that
$\sum_{a_j \in A_i}a_j=B$.
Given an instance~$\{a_1,a_2,\dots,a_{3m}\}$ of \Partition, we construct an
instance~$(G,G_c,k,d)$ of \DCCED as follows.

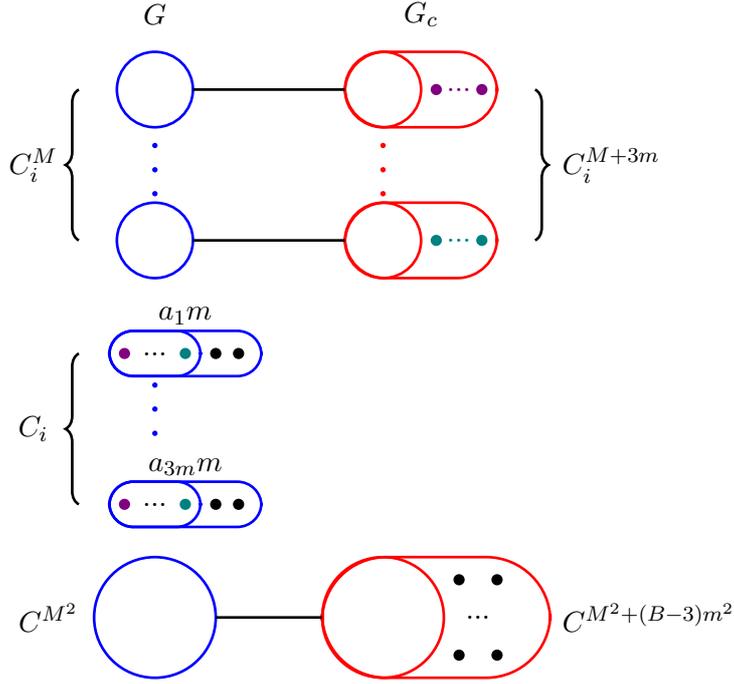
\begin{figure}[t]
\begin{center}
\begin{tikzpicture}[line width=1pt, scale=1]
		 	
		\draw[decorate,decoration={brace,amplitude=5pt},xshift=0pt,yshift=0pt]
(-2,0.5) -- (-2,2.5) node [black,midway,xshift=-0.6cm] 
{ $C_i^M$};			
			\draw[blue] (-1,2.5) circle (0.5cm);
			\path (-1,1) -- (-1,2) node [blue, font=\huge, midway, sloped] {$\dots$};
			\draw[blue] (-1,0.5) circle (0.5cm);
			
		\draw[decorate,decoration={brace,amplitude=5pt,mirror},xshift=0pt,yshift=0pt]
(4,0.5) -- (4,2.5) node [black,midway,xshift= 1cm] 
{ $C_i^{M+3m}$};
			
			\draw[red] (2,2.5) circle (0.5cm);
			\path (2,1) -- (2,2) node [red, font=\huge, midway, sloped] {$\dots$};
			\draw[red] (2,0.5) circle (0.5cm);
			\draw[red, rounded corners=15pt] (1.5,2) rectangle ++(2,1);
			\draw[red, rounded corners=15pt] (1.5,0) rectangle ++(2,1);
			
		\draw[decorate,decoration={brace,amplitude=5pt,mirror},xshift=0pt,yshift=0pt]
(-2,-1) -- (-2,-3) node [black,midway,xshift=-0.6cm] 
{ $C_i$};
			\draw[blue, rounded corners=9pt] (-1.6,-1.3) rectangle ++(1.2,0.6);
			\path (-1,-1.3) -- (-1,-2.3) node [blue, font=\huge, midway, sloped] {$\dots$};
			\draw[blue, rounded corners=9pt] (-1.6,-3.3) rectangle ++(1.2,0.6);
			\draw[blue, rounded corners=9pt] (-1.6,-1.3) rectangle ++(2,0.6);
			\draw[blue, rounded corners=9pt] (-1.6,-3.3) rectangle ++(2,0.6);
			\node (a1) at (-0.6,-0.5) {$a_1m$};
			\node (a3mm) at (-0.6,-2.5) {$a_{3m}m$};

	
\node (C^{M^2}) at (-2.4,-4.5) {$C^{M^2}$};
\node (C^{M^2}) at (5.5,-4.5) {$C^{M^2+(B-3)m^2}$};		
			\draw[blue] (-1,-4.5) circle (0.8cm);
			\draw[red] (2,-4.5) circle (0.8cm);
			\draw[red, rounded corners=24pt] (1.2,-5.3) rectangle ++(3,1.6);

			
			\node[vert,scale=0.4,violet,fill=violet] (inC1) at (-1.4,-1) {};
			\node (inC1) at (-1,-1) {...};
			\node[vert,scale=0.4,teal,fill=teal] (inC1) at (-0.6,-1) {};
			\node[vert,scale=0.4,black,fill=black] (inC1) at (-0.2,-1) {};
			\node[vert,scale=0.4,black,fill=black] (inC1) at (0.1,-1) {};
			
			\node[vert,scale=0.4,violet,fill=violet] (inC1) at (-1.4,-3) {};
			\node (inC1) at (-1,-3) {...};
			\node[vert,scale=0.4,teal,fill=teal] (inC1) at (-0.6,-3) {};
			\node[vert,scale=0.4,black,fill=black] (inC1) at (-0.2,-3) {};
			\node[vert,scale=0.4,black,fill=black] (inC1) at (0.1,-3) {};
			

			\node[vert,scale=0.4,violet,fill=violet] (inC1) at (2.7,2.5) {};
			\node[violet] (inC1) at (3,2.5) {...};
			\node[vert,scale=0.4,violet,fill=violet] (inC1) at (3.3,2.5) {};
			
			\node[vert,scale=0.4,teal,fill=teal] (inC1) at (2.7,0.5) {};
			\node[teal] (inC1) at (3,0.5) {...};
			\node[vert,scale=0.4,teal,fill=teal] (inC1) at (3.3,0.5) {};
			

			\node[vert,scale=0.4,black,fill=black] (inC1) at (3,-4) {};
			\node[vert,scale=0.4,black,fill=black] (inC1) at (3.5,-4) {};
			\node (inC1) at (3.25,-4.5) {...};
			\node[vert,scale=0.4,black,fill=black] (inC1) at (3,-5) {};
			\node[vert,scale=0.4,black,fill=black] (inC1) at (3.5,-5) {};

			\node (G) at (-1,3.5) {$G$};
			\node (Gc) at (2.5,3.5) {$G_c$};
			\draw (-0.5,2.5) -- (1.5,2.5);
			\draw (-0.5,0.5) -- (1.5,0.5);
			\draw (-0.2,-4.5) -- (1.2,-4.5);
	
\end{tikzpicture}
\caption{Illustration of the constructed instance for the proof of \cref{lem:DCCMDhard}. 
Blue borders on the left side represent cliques in graph~$G$ and red borders on the right side represent cliques in cluster graph~$G_c$.
Each number~$a_i$ in the instance of \Partition is represented by a clique~$C_i$ of size~$a_im$ in graph~$G$, which contains one vertex from every~$C_i^{M+3m}$ and~$a_i(m-1)$ vertices from~$C^{M^2+(B-3)m^2}$.
The maximum-weight matching for~$B(G,G_c)$ is formed by the edges between~$C_i^M$ and~$C_i^{M+3m}$ and the edge between~$C^{M^2}$ and~$C^{M^2+(B-3)m^2}$.}
\label{fig: DCC NP-c}
\end{center}
\end{figure}

The construction is illustrated in \cref{fig: DCC NP-c}.
For graph~$G$, we first create~$m$ big cliques~$C_1^M,C_2^M,\dots,C_m^M$ each with~$M=4(mB)^2$ vertices. 
Then for every integer~$a_i$ in~$\{a_1,a_2,\dots,a_{3m}\}$, we create a small
clique~$C_i$ with~$|C_i|=a_im$.
Lastly, we create a clique~$C^{M^2}$ with~$M^2$ vertices.
For graph~$G_c$, we create~$m+1$ cliques as follows.
For every~$C_i^M$ in~$G$, we create a clique~$C_i^{M+3m}$ with~$M+3m$ vertices
which contains all~$M$ vertices from~$C_i^M$ and one vertex from each~$C_i$ for~$1 \le i \le 3m$.
In other words, each~$C_i$ in~$G$ contains exactly one vertex from each~$C_i^{M+3m}$ in~$G_c$  for~$1 \le i \le m$.
Lastly, we create a clique~$C^{M^2+(B-3)m^2}$ which contains all remaining
vertices, that is, $M^2$~vertices from~$C^{M^2}$ and vertices from every~$C_i$ for~$1 \le i \le 3m$ which are not contained in any~$C_i^{M+3m}$.
Thus~$C^{M^2+(B-3)m^2}$ contains~$M^2+\sum_{1 \le i \le 3m}(a_i-1)m=M^2+(B-3)m^2$ vertices.
Set~$k=m^2MB+\frac{m^3}{2}B^2 -\frac{m^2}{2}\sum_{1 \le i \le 3m}{a_i}^2$ and~$d=m^2B-3m$.

It is easy to see that the maximum-weight matching~$M^*$ for~$B(G,G_c)$ is to match~$C_i^M$ with~$C_i^{M+3m}$ for every~$1 \le i \le m$ and to match~$C^{Bm^2}$ with~$C^{(2B-3)m^2}$. 
Thus the matching-based distance between~$G$ and~$G_c$ is 
\begin{equation*}
d_0=d_M(G, G_c)=\sum_{1 \le i \le 3m}a_im=m^2B.
\end{equation*}
Now, we show that~$\{a_1,a_2,\dots,a_{3m}\}$ is a yes-instance of \Partition if
and only if $(G,G_c,k,d)$ is a yes-instance of \DCCMD.

\emph{($\Rightarrow$):} Assume that~$\{a_1,a_2,\dots,a_{3m}\}$ is a yes-instance of \Partition. 
Then there is a partition~$A_i,A_2, \dots, A_m$ such that for~$1 \le i \le m$
it holds that~$\sum_{a_j \in A_i}a_j=B$.
We add edges into~$G$ to get a cluster graph~$G'$ as follows.
For each~$A_i$, we combine the corresponding three small cliques for the three
integers in~$A_i$ and the big clique~$C_i^M$ into one clique.
This costs
\begin{equation*}
MmB+m^2\sum_{a_k,a_j \in A_i}a_ja_k=MmB+\frac{m^2}{2}\left(B^2-\sum_{a_j \in
A_i}{a_j}^2\right)
\end{equation*}
edge insertions.
In total, there are~$m^2MB+\frac{m^3}{2}B^2 -\frac{m^2}{2}\sum_{1 \le i \le 3m}{a_i}^2=k$ edge insertions.
Since every small clique~$C_i$, combined with some big clique~$C_j^M$, contains one vertex from~$C_j^{M+3m}$, we obtain
\begin{equation*}
d_M(G',G_c)=d_0-3m=m^2B-3m=d.
\end{equation*}

\emph{($\Leftarrow$):} Assume that~$(G,G_c,k,d)$ is a yes-instance of \DCCMD and
let~$G'$ be the solution and let~$M'$ be the maximum-weight matching
between~$G'$ and~$G_c$.
First note that clique~$C^{M^2}$ has~$M^2$ vertices and~$M^2>k$, so we cannot combine~$C^{M^2}$ with any other clique.
Since~$M^2>(B-3)m^2$, we have also~$|C^{M^2}|>\frac{1}{2}|C^{M^2+(B-3)m^2}|$.
Hence, in the matching~$M'$ clique~$C^{M^2}$ must be matched
with~$C^{M^2+(B-3)m^2}$.
Next in the matching~$M'$ every~$C_i^{M+3m}$ in~$G_c$ must be matched with a clique in~$G'$ which contains clique~$C_i^M$, since otherwise the distance between~$G'$ and~$G_c$ is at least~$M$ and~$M>d$.
This also means that we cannot combine two big cliques~$C_i^M$ and~$C_j^M$.
Since~$d_M(G',G_c) \le d=d_0-3m$, to get solution~$G'$ every small clique~$C_i$ for~$1 \le i \le 3m$ has to be combined with some big clique~$C_j^M$.

We can split~$k$ into two parts~$k=k_1+k_2$, where~$k_1=m^2MB$ is the number of
edge insertions between big cliques and small cliques, and~$k_2$ is the total number of edge insertions between small cliques.
Similarly to the analysis in \cref{lem:DCCEDhard}, we have that~$k_2 \ge
\frac{m^3}{2}B^2 -\frac{m^2}{2}\sum_{1 \le i \le 3m}{a_i}^2$ and the equality
holds only if~$\{a_1,a_2,\dots,a_{3m}\}$ can be partitioned into~$m$ disjoint
subsets~$A_1,A_2, \dots, A_m$ such that for~$1 \le i \le m$ it holds that
$\sum_{a_j \in A_i}a_j=B$.
\end{proof}

Observe that when~$G$ is a cluster graph, then we can ``swap''~$G$ with~$G_c$ and~$k$ with~$d$:
\begin{obs}
	\label{obs: symmetric swap}
	When~$G$ is a cluster graph, instance~$(G,G_c,k,d)$ of \DCEED is a yes-instance if and only if instance~$(G_c,G,d,k)$ of \DCEED is a yes-instance.
\end{obs}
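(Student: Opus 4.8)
The plan is to observe that \DCEED{} is, in a precise sense, symmetric in its two graph arguments, and that the only nontrivial point is the well-definedness of the swapped instance. First I would check that $(G_c,G,d,k)$ is actually a valid \DCEED{} instance at all: by the definition of the problem, the second graph argument must be a cluster graph. In $(G_c,G,d,k)$ that argument is~$G$, which is a cluster graph exactly by the hypothesis of the observation. (The first argument~$G_c$ is always a cluster graph by the definition of \DCE{}, so well-definedness of the original instance $(G,G_c,k,d)$ is automatic.) This is the sole place where the assumption ``$G$ is a cluster graph'' is needed.

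Next I would unfold both yes-instance definitions, specializing $\dist$ to the edge-based distance~$d_E$. By definition, $(G,G_c,k,d)$ is a yes-instance of \DCEED{} if and only if there is a cluster graph~$G'$ with $|E(G)\oplus E(G')|\le k$ and $|E(G')\oplus E(G_c)|\le d$. Symmetrically, $(G_c,G,d,k)$ is a yes-instance if and only if there is a cluster graph~$G''$ with $|E(G_c)\oplus E(G'')|\le d$ and $|E(G'')\oplus E(G)|\le k$.

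Finally I would note that these two conditions coincide verbatim, because the symmetric difference is symmetric: $|E(G'')\oplus E(G)|=|E(G)\oplus E(G'')|$ and $|E(G_c)\oplus E(G'')|=|E(G'')\oplus E(G_c)|$. Hence a cluster graph~$G'$ witnesses the first instance if and only if the very same cluster graph, taking $G'':=G'$, witnesses the second. There is no genuine obstacle here; the real content is simply that in \DCEED{} \emph{both} the budget constraint and the distance constraint have the form $|E(\cdot)\oplus E(\cdot)|$, so exchanging the pair $(G,k)$ with the pair $(G_c,d)$ leaves the set of admissible witness cluster graphs unchanged. The one subtlety worth flagging explicitly in the write-up is the well-definedness of the swapped instance, which is precisely the reason the hypothesis that~$G$ is a cluster graph is required; without it, $(G_c,G,d,k)$ would not even be a legal \DCEED{} instance.
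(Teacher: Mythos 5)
Your proof is correct and coincides with the reasoning the paper leaves implicit (the observation is stated without an explicit proof): since in \DCEED{} both the budget constraint and the distance constraint have the form $|E(\cdot)\oplus E(\cdot)|$, the same witness cluster graph works for both instances, and the hypothesis that~$G$ is a cluster graph is exactly what makes the swapped instance~$(G_c,G,d,k)$ well-formed. Your remark that the symmetry hinges on both constraints being edge-based is also precisely why the paper notes no analogue exists for the matching-based distance.
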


Observe that from \cref{lem:DCCEDhard,obs: symmetric swap} we can infer
\NP-hardness for \DCDED even if~$G$ is a cluster graph.
For the matching-based distance, we do not have an analogue of \cref{obs: symmetric swap}.
Thus, we provide another reduction showing \NP-hardness for \DCDMD even if~$G$ is a cluster graph.

\begin{lem}
	\label{lem:DCDMDhard}
	\DCDMD is \NP-complete, even if the input graph~$G$ is a cluster graph.
\end{lem}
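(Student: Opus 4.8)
The plan is to reduce from the strongly \NP-hard \Partition problem, mirroring the high-level strategy of \cref{lem:DCCMDhard} but realising the sought partition by \emph{splitting} cliques via edge deletions instead of by merging them. Membership of \DCDMD in \NP\ is immediate: given a candidate~$G'$ one checks in polynomial time that $E(G')\subseteq E(G)$, that $|E(G)\oplus E(G')|\le k$, and that $d_M(G',G_c)\le d$, where the last test amounts to a single maximum-weight bipartite matching computation.

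For the hardness part, the decisive structural observation is that, since~$G$ is a cluster graph and only deletions are permitted, every feasible~$G'$ is obtained by \emph{refining} the clique partition of~$G$: inside each clique of~$G$ the surviving edges must again induce a disjoint union of cliques (else an induced~$P_3$ would appear), so~$G'$ merely partitions each clique of~$G$ into sub-cliques. Consequently the search space is exactly the set of ways to split the cliques of~$G$, and splitting a clique of size~$s$ into parts of sizes~$s_1,s_2,\dots$ costs exactly $\binom{s}{2}-\sum_t\binom{s_t}{2}$ deletions.

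Building on this, I would encode an instance~$\{a_1,\dots,a_{3m}\}$ of \Partition by placing all~$3m$ integer gadgets together with one large ``anchor'' per prospective group inside a single clique of~$G$, so that the assignment of integers to groups is chosen freely by the splitting; as in \cref{lem:DCCMDhard}, I would additionally use anchor and ``giant'' cliques of sizes on the order of~$M=4(mB)^2$ and~$M^2$ to pin down the maximum-weight matching, and define~$G_c$ to contain one target clique per group together with a giant clique. The bounds~$k$ and~$d$ are then jointly tuned, as in \cref{lem:DCCMDhard}, so that a cluster graph~$G'$ with at most~$k$ deletions and matching-based distance at most~$d$ to~$G_c$ exists if and only if every integer gadget can be assigned to some group with all per-group totals equal to~$B$. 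The two directions then close as before: a balanced partition yields a split respecting both bounds, and conversely any split respecting both bounds induces~$m$ groups of integer-total~$B$, i.e.\ a solution of \Partition.

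The principal obstacle is the matching-based distance analysis, and here the deletion variant behaves genuinely differently from the completion variant of \cref{lem:DCCMDhard}. First, we cannot appeal to the symmetric swap of \cref{obs: symmetric swap}, which is available only for the edge-based distance. Second, and more importantly, for deletions an \emph{unbalanced} split is \emph{cheaper} (it deletes fewer edges), so the deletion budget alone cannot enforce balanced group sizes the way the insertion budget does in \cref{lem:DCCMDhard}; the construction must instead make the matching-based distance itself sensitive to the grouping, so that any unbalanced assignment strictly increases~$|V|-W$ beyond~$d$. The delicate step will be to choose the anchor and giant clique sizes large enough that, for \emph{every} within-budget split, the maximum-weight matching is forced to pair the giant clique of~$G'$ with the giant clique of~$G_c$ and each anchor with its intended target, and then to verify that, under these forced pairings, the distance threshold~$d$ is met exactly when the induced groups are balanced.
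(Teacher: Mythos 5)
There is a genuine gap, and it sits exactly where the theorem's real work lies: you defer the construction and its analysis to ``the delicate step,'' but that step is the entire proof, and the mechanism you are hoping for is structurally obstructed. As you yourself observe, with deletions only the budget points the wrong way: splitting a clique of size~$s$ into parts of sizes~$s_1,\dots,s_t$ costs $\binom{s}{2}-\sum_t\binom{s_t}{2}$, which is \emph{maximized} by balanced parts, so every unbalanced split is within any budget that admits the balanced one. Hence the matching-based distance alone must enforce balance. But to get completeness (any balanced assignment of integers to groups must be realizable) your gadgets have to be symmetric across the~$m$ target cliques of~$G_c$, i.e., each integer gadget must contribute the same amount to the matched weight no matter which group it joins --- this is precisely how \cref{lem:DCCMDhard} arranges things. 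Under any such symmetric encoding the total matching weight, and therefore $d_M(G',G_c)$, is \emph{identical} for every split in which each gadget joins some group, balanced or not: the weight of group~$P_i$ matched to target~$T_i$ decomposes as $|A_i\cap T_i|+\sum_{j\in P_i} g_j$ with $g_j$ independent of~$i$, so the sum over~$i$ does not depend on the assignment. Symmetry (needed for completeness) and balance-sensitivity of the distance (needed for soundness) are in direct conflict, and neither the budget nor the distance can then exclude unbalanced splits, so the reduction as planned is unsound. In \cref{lem:DCCMDhard} this tension never arises because there balance is enforced by the \emph{insertion} budget (the quadratic merging cost is minimized exactly at balance), and the distance only has to force every gadget into some group; that template does not transfer to deletions.

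The paper escapes this by not attempting to encode numerical balance at all: it reduces \DCDMD from \XC rather than \Partition. Each 3-set $S_i$ becomes a size-5 clique $C_i$ in~$G$; splitting $C_i$ into $\{v_1^i,v_2^i\}$ plus three singletons costs exactly $9$ deletions; and each element $x$ becomes a clique $D_{x}$ in~$G_c$ whose only possible matching partners in any deletion-derived~$G'$ have weight at most~$1$. The bound $d=3m-3q$ then forces all $3q$ element cliques to be matched by distinct split-off singletons, and the budget $k=9q$ forces those singletons to come from exactly $q$ split cliques --- which is precisely an exact cover. The constraint type ``each element covered exactly once'' translates verbatim into what matching distance plus a deletion budget can express (each $D_x$ matched by a distinct unit-weight clique), whereas ``all group sums equal~$B$'' does not. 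To salvage your plan you would need a distinctness-style encoding of \Partition, which effectively means abandoning the \cref{lem:DCCMDhard} template you set out to mirror.
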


\begin{proof}
We present a polynomial-time reduction from the \NP-hard \XC problem~\cite{karp1972reducibility}, where given a set~$X$ with~$|X|=3q$ and a collection~$\mathcal{S}$ of 3-element subsets of~$X$,
the task is to determine whether $\mathcal{S}$ contains a subcollection~$\mathcal{S'} \subseteq \mathcal{S}$ of size~$q$ that covers every element in~$X$ exactly once.
Given an instance~$(X,\mathcal{S})$ of \XC, where~$X=\{x_1,x_2,\dots,x_{3q}\}$ and~$\mathcal{S}=\{S_1,S_2,\dots,S_m\}$,
we construct an instance~$(G,G_c,k,d)$ of \DCDMD in polynomial time as follows.

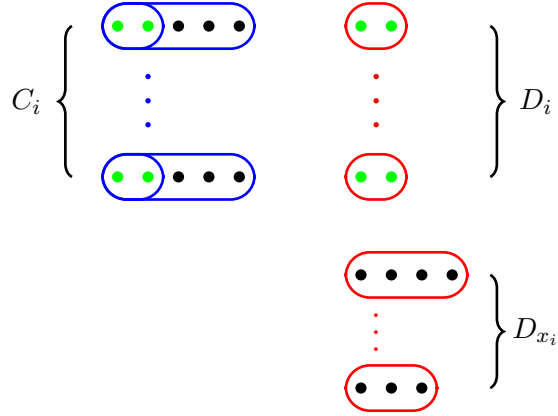
\begin{figure}[t]
\begin{center}
\begin{tikzpicture}[line width=1pt, scale=1]

		\draw[decorate,decoration={brace,amplitude=5pt,mirror},xshift=0pt,yshift=0pt]
(-2,-1) -- (-2,-3) node [black,midway,xshift=-0.6cm] 
{ $C_i$};
			\draw[blue, rounded corners=9pt] (-1.6,-1.3) rectangle ++(0.8,0.6);
			\path (-1,-1.8) -- (-1,-2.3) node [blue, font=\huge, midway, sloped] {$\dots$};
			\draw[blue, rounded corners=9pt] (-1.6,-3.3) rectangle ++(0.8,0.6);
			\draw[blue, rounded corners=9pt] (-1.6,-1.3) rectangle ++(2,0.6);
			\draw[blue, rounded corners=9pt] (-1.6,-3.3) rectangle ++(2,0.6);
		
			
			\node[vert,scale=0.4,green,fill=green] (inC1) at (-1.4,-1) {};
			\node[vert,scale=0.4,green,fill=green] (inC1) at (-1,-1) {};
			\node[vert,scale=0.4,black,fill=black] (inC1) at (-0.6,-1) {};
			\node[vert,scale=0.4,black,fill=black] (inC1) at (-0.2,-1) {};
			\node[vert,scale=0.4,black,fill=black] (inC1) at (0.2,-1) {};
			
			\node[vert,scale=0.4,green,fill=green] (inC1) at (-1.4,-3) {};
			\node[vert,scale=0.4,green,fill=green] (inC1) at (-1,-3) {};
			\node[vert,scale=0.4,black,fill=black] (inC1) at (-0.6,-3) {};
			\node[vert,scale=0.4,black,fill=black] (inC1) at (-0.2,-3) {};
			\node[vert,scale=0.4,black,fill=black] (inC1) at (0.2,-3) {};
			
			\draw[decorate,decoration={brace,amplitude=5pt},xshift=0pt,yshift=0pt]
(3.5,-1) -- (3.5,-3) node [black,midway,xshift=0.6cm] 
{ $D_i$};
			\draw[red, rounded corners=9pt] (1.6,-1.3) rectangle ++(0.8,0.6);
			\path (2,-1.8) -- (2,-2.3) node [red, font=\huge, midway, sloped] {$\dots$};
			\draw[red, rounded corners=9pt] (1.6,-3.3) rectangle ++(0.8,0.6);

\node[vert,scale=0.4,green,fill=green] (inC1) at (2.2,-1) {};
\node[vert,scale=0.4,green,fill=green] (inC1) at (1.8,-1) {};
\node[vert,scale=0.4,green,fill=green] (inC1) at (2.2,-3) {};
\node[vert,scale=0.4,green,fill=green] (inC1) at (1.8,-3) {};


\begin{scope}[shift={(0,-3.3)}]
\draw[decorate,decoration={brace,amplitude=5pt},xshift=0pt,yshift=0pt]
(3.5,-1) -- (3.5,-2.5) node [black,midway,xshift=0.6cm] 
{ $D_{x_i}$};
			\draw[red, rounded corners=9pt] (1.6,-1.3) rectangle ++(1.6,0.6);
			\path (2,-1.6) -- (2,-2) node [red, font=\Large, midway, sloped] {$\dots$};
			\draw[red, rounded corners=9pt] (1.6,-2.8) rectangle ++(1.2,0.6);

\node[vert,scale=0.4,black,fill=black] (inC1) at (1.8,-1) {};
\node[vert,scale=0.4,black,fill=black] (inC1) at (2.2,-1) {};
\node[vert,scale=0.4,black,fill=black] (inC1) at (2.6,-1) {};
\node[vert,scale=0.4,black,fill=black] (inC1) at (3,-1) {};
\node[vert,scale=0.4,black,fill=black] (inC1) at (2.6,-2.5) {};
\node[vert,scale=0.4,black,fill=black] (inC1) at (2.2,-2.5) {};
\node[vert,scale=0.4,black,fill=black] (inC1) at (1.8,-2.5) {};
\end{scope}

\end{tikzpicture}
\caption{Illustration of the constructed instance for the proof of \cref{lem:DCDMDhard}. 
Each clique~$C_i$ in~$G$ contains two green vertices, which form a clique~$D_i$ in~$G_c$.
The instance of \XC is encoded by black vertices.
On the left side each~$C_i$ in~$G$ encodes a set~$S_i$; on the right side
each~$D_{x_i}$ in~$G_c$ encodes the appearance of the element~$x_i$.}
\label{fig: DCDMD NP-c}
\end{center}
\end{figure}

The construction is illustrated in \cref{fig: DCDMD NP-c}.
For every set~$S_i=\{x_{i_1},x_{i_2},x_{i_3}\}$ in~$\mathcal{S}$, we create a
clique~$C_i=\{v_1^i,v_2^i\} \cup \{x_{i_1}^i,x_{i_2}^i,x_{i_3}^i\}$ in~$G$.
So~$G$ contains order-five~$m$ cliques~$C_1,C_2,\dots,C_m$.
For~$G_c$, we first create~$m$ cliques~$D_1,D_2,\dots,D_m$ with~$D_i=\{v_1^i,v_2^i\}$.
Then for each element~$x_i$, we create a clique~$D_{x_i}=\{x_i^j \mid x_i \in S_j\}$.
For example, if an element~$x_i$ is contained in some set~$S_j$, then in~$G$ the
corresponding clique~$C_j$ for~$S_j$ contains a vertex~$x_i^j$ which is also contained in the clique~$D_{x_i}$ in~$G_c$.
Hence, if there is a subcollection~$\mathcal{S'}$ of size~$q$ that covers every element in~$X$ exactly once, then we can find these~$q$ corresponding cliques in~$G$ and separate them to get~$3q$ new vertices each contained in one different clique~$D_{x_i}$ in~$G_c$.
Finally, we set~$k=9q$ and~$d=3m-3q$.

Note that the maximum-weight matching~$M^*$ for~$B(G,G_c)$ has to match every~$C_i$ in~$G$ with~$D_i$ in~$G_c$. 
Thus~$d_M(G, G_c)=3m$.
Now we show that~$(X,\mathcal{S})$ is a yes-instance of \XC if and only if~$(G,G_c,k,d)$ is a yes-instance of \DCDMD.

\emph{($\Rightarrow$):} Assume that~$(X,\mathcal{S})$ is a yes-instance of \XC. 
Let~$\mathcal{S'}$ be the solution.
For every~$S_i \in \mathcal{S'}$, we find the corresponding clique~$C_i=\{v_1^i,v_2^i\} \cup \{x_{i_1}^i,x_{i_2}^i,x_{i_3}^i\}$ in~$G$ and partition it into four cliques~$\{v_1^i,v_2^i\}$, $\{x_{i_1}^i\}$, $\{x_{i_2}^i\}$, and~$\{x_{i_3}^i\}$.
Let~$G'$ be the resulting cluster graph.
For every such clique~$C_i$, we delete nine edges to partition it.
Thus, overall we need to delete~$9q=k$ edges.
Since every element of~$X$ is covered by exactly one set from~$\mathcal{S'}$,
we have that in~$G'$ we get~$3q$ new cliques each with one vertex and each
vertex is contained in a different clique 
from~$D_{x_1},D_{x_2},\dots,D_{x_{3q}}$.
Thus, we have~$d_M(G,G_c)=3m-3q=d$.

\emph{($\Leftarrow$):} Assume that~$(G,G_c,k,d)$ is a yes-instance of \DCCMD.
Let~$G'$ be the solution and~$M'$ be the maximum-weight matching between~$G'$ and~$G_c$.
Since we can only delete edges to get~$G'$ and every set~$S_i$ can only contain
each element from~$X$ once, we get that in~$M'$ any edge incident on~$D_{x_i}$
has weight at most one.
Since~$d_M(G,G_c) \le d =3m-3q$, it has to be that in~$M'$ every~$D_{x_i}$ is
matched with a new clique in~$G'$ and they share exactly one vertex.
Thus we need~$3q$ new cliques in~$G'$ to be matched
with~$D_{x_1},D_{x_2},\dots,D_{x_{3q}}$ in~$G_c$.
To get these~$3q$ new cliques, we need to separate at least~$3q$ vertices
from~$C_1,C_2,\dots,C_m$ in~$G$.
Since we can delete at most~$k=9q$ edges, there have to be~$q$ cliques
from~$C_1,C_2,\dots,C_m$ such that we can separate each of them into four
parts, where the first part contains~$\{v_1^i,v_2^i\}$ and the remaining three
parts each have one vertex.
Moreover, these~$3q$ new cliques each share one vertex with one different
clique from~$D_{x_1},D_{x_2},\dots,D_{x_{3q}}$.
Thus, in the instance~$(X,\mathcal{S})$ of \XC\ we can find the
corresponding~$3q$ sets and they cover each element of~$X$ exactly once.
\end{proof}

\subsection{Parameterized Reductions}
\label{sec:Whardness}

We first show that \DCEMD is \W1-hard when parameterized by the budget~$k$.

\begin{lem}
\label{lem:DCEMDWhardwrtk}
\DCEMD is \NP-complete and \W1-hard with respect to the budget~$k$, even if the input graph~$G$ is a cluster graph.
\end{lem}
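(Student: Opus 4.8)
The plan is to give a parameterized reduction from the \W1-hard \Clique problem (or, if the consistency bookkeeping is cleaner, from \MCC), parameterized by the clique order~$\ell$, setting the new parameter~$k$ to a function of~$\ell$ alone. Since the reduction will run in polynomial time, and since \DCEMD is clearly in \NP (a candidate solution~$G'$ can be guessed and verified, with $d_M(G',G_c)$ computable in polynomial time via a maximum-weight bipartite matching between the clusters), this single construction will simultaneously establish \NP-completeness and \W1-hardness with respect to~$k$. The extra requirement that $G$ be a cluster graph will be met automatically, since the construction starts from a disjoint union of cliques.

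Given an instance $(H,\ell)$ of \Clique, I would build a cluster graph~$G$ whose cliques encode the vertices and edges of~$H$, together with a target cluster graph~$G_c$ over the same vertex set. The design goal is that the \emph{only} way to push the matching-based distance down from its initial value $d_M(G,G_c)$ to a prescribed bound~$d$, while spending at most~$k$ edge edits, is to ``select'' $\ell$ pairwise-adjacent vertices of~$H$. Each vertex-selection and each edge-selection would be realised by a small, fixed-cost local rearrangement of cliques: moving a bounded number of vertices out of their $G$-cliques and into the cliques dictated by~$G_c$, so that one completed selection buys a fixed decrease in $d_M$ for a fixed number of edits. The budget~$k$ is then set to exactly the total cost of $\ell$ vertex-selections together with $\binom{\ell}{2}$ edge-selections, and $d := d_M(G,G_c) - \Delta$, where $\Delta$ is the total distance gain these selections yield; both $k$ and $\Delta$ depend only on~$\ell$, as needed for a parameterized reduction.

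For the forward direction, from a clique $\{v_1,\dots,v_\ell\}$ in~$H$ I would perform precisely the selections corresponding to these vertices and the $\binom{\ell}{2}$ edges among them; by construction this spends exactly~$k$ edits and achieves $d_M(G',G_c) = d$. For the reverse direction, I would argue that within budget~$k$ no edit can be wasted: every edit that helps lower $d_M$ must be part of a fully completed vertex- or edge-selection, and a consistency gadget forces each selected edge to be incident to selected vertices. Hence reaching the distance bound~$d$ forces $\ell$ vertices pairwise joined by selected edges, i.e.\ a clique in~$H$.

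The main obstacle I anticipate is the matching-based distance analysis in the reverse direction. Because $d_M$ is defined through a \emph{global} maximum-weight matching between the clusters of~$G'$ and those of~$G_c$, I must rule out ``cheating'' strategies in which partial or inconsistent rearrangements still produce a large matching weight for fewer edits. The heart of the proof will therefore be a lemma bounding, for every cluster graph~$G'$ reachable from~$G$ using at most~$k$ edits, the maximum achievable matching weight against~$G_c$ in terms of the number of fully completed and mutually consistent selections. The gadget sizes---in particular making the ``anchor'' cliques of~$G_c$ large enough that failing to match them to their intended $G'$-counterpart is prohibitively costly relative to~$d$---are chosen precisely so that this bound is attained only by genuine clique selections, closing the equivalence.
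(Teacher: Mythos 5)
Your high-level strategy coincides with the paper's: a parameterized reduction from \Clique{} in which $G$ is a disjoint union of vertex-cliques, edge-cliques and large anchor cliques, $G_c$ is designed so that lowering $d_M$ to the bound~$d$ within budget~$k$ forces the selection of~$\ell$ pairwise-adjacent vertices, and \NP-membership follows since $d_M$ is computable by maximum-weight bipartite matching. However, as written your proposal has a genuine gap: it is a plan, not a proof. Everything that is actually hard---the concrete gadgets, the parameter values, and above all the reverse-direction ``no cheating'' argument---is deferred to a lemma you announce but never prove (``the heart of the proof will therefore be a lemma bounding \ldots{} the maximum achievable matching weight''). Anticipating that obstacle is not overcoming it; without specified clique sizes and a specified structure for~$G_c$, there is nothing one can verify, because whether partial or inconsistent rearrangements can be ruled out depends entirely on the numerology of the construction.

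For comparison, here is what closing that gap takes in the paper. In~$G$ there is a clique~$C_v$ of size $\ell^7+\ell^4+\ell^2$ per vertex, a clique~$C_e$ of size $\ell^4+2$ per edge, and an anchor~$C_B$ of size~$\ell^8$; in~$G_c$ there are $\ell$ cliques~$D_i$, each meeting every~$C_v$ in $\ell^3$~vertices, an anchor~$D_B$, one clique~$D_v$ per vertex (meeting~$C_v$ in $\ell^2$~vertices and containing one vertex of each~$C_e$ with $v\in e$), and one clique~$D_e$ per edge. The budget $k=\binom{\ell}{2}(2\ell^4+1)+\ell\binom{\ell-1}{2}$ satisfies $k<\ell^7$, so no~$C_v$ and no~$C_B$ can be touched at all; the optimal matching is then rigid except for the free choice of \emph{which} $\ell$~cliques~$C_v$ are matched to the~$D_i$ (this choice, not an edit, is the ``vertex selection'', and it leaves $\ell$ free cliques~$D_v$). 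The only way to gain $\ell(\ell-1)$ in matching weight is to split single vertices off edge-cliques and reassemble them into at most~$\ell$ new cliques matched to the free~$D_v$'s, and a careful count (splitting one~$C_e$ into three parts yields two singles for $2\ell^4+1$ deletions; grouping $\ell-1$ singles costs $\binom{\ell-1}{2}$ insertions) shows the budget suffices only when exactly $\binom{\ell}{2}$ edge-cliques are split and their singles distribute evenly over the~$\ell$ free~$D_v$'s, i.e., only when~$G_0$ has an $\ell$-clique. Note also that your ``consistency gadget'' is not needed as a separate device: consistency is enforced by the fact that a single vertex split off~$C_e$ lies in~$D_v$ only for $v\in e$, so it contributes matching weight only to selected vertices incident with~$e$. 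Until you supply such a concrete construction and carry out this counting, the reverse direction of your reduction---and hence the lemma---remains unproven.
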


\begin{proof}
We present a parameterized reduction from \Clique, where given a graph~$G_0$ and an integer~$\ell$, we are asked to decide whether $G_0$ contains a complete subgraph of order~$\ell$. 
\Clique is \W1-hard when parameterized by~$\ell$~\cite{downey2013fundamentals}.
Given an instance~$(G_0,\ell)$ of \Clique, we construct an instance~$(G,G_c,k,d)$ of \DCEMD as follows.

The construction is illustrated in \cref{fig: DCE is W1-hard wrt k}.
Let~$n=|V(G_0)|$.
We first construct~$G$.
For every vertex~$v$ of~$G_0$, we create a clique~$C_v$ of
size~$\ell^7+\ell^4+\ell^2$.
For every edge~$e$ of~$G_0$, we create a clique~$C_e$ of size~$\ell^4+2$.
Lastly, we create a big clique~$C_B$ of size~$\ell^8$.
Note that~$G$ is already a cluster graph.
Next we construct~$G_c$.
We first create~$\ell$ cliques~$D_i$ of size~$n\ell^3$ for each~$1 \le i \le \ell$.
Every~$D_i$ contains~$\ell^3$ vertices in every~$C_v$ in~$G$.
In other words, every~$C_v$ in~$G$ contains~$\ell^3$ vertices in every~$D_i$ in~$G_c$.
Then we create a big clique~$D_B$ which contains all vertices in~$C_B$
and~$\ell^7$ vertices in every~$C_v$.
For every vertex~$v$ of~$G_0$, we create clique~$D_v$ which contains~$\ell^2$
vertices in~$C_v$ and one vertex in every~$C_e$ for~$v \in e$.
Lastly, for every edge~$e$ we create~$D_e$ which contains~$\ell^4$ vertices
in~$C_e$.
We set~$k=\binom{\ell}{2}(2\ell^4+1)+\ell\binom{\ell-1}{2}$ and we 
set~$d=d_0-\ell(\ell-1)$, where~$d_0=d_M(G,G_c)$ is the matching-based distance between~$G$ and~$G_c$, which is computed as follows.

To compute~$d_M(G,G_c)$, we need to find an optimal matching in~$B(G,G_c)$, the weighted bipartite graph between~$G$ and~$G_c$.
First, in an optimal matching~$D_B$ must be matched with~$C_B$ since~$|C_B \cap D_B|=\ell^8 > |C_v \cap D_B|=\ell^7$ for any~$v \in V(G_0)$ and~$C_B \subseteq D_B$.
Similarly, $D_e$ must be matched with~$C_e$ for every~$e \in E(G_0)$.
Then the remaining~$n$ cliques~$C_v$ in~$G$ need to be matched to~$\ell$ cliques~$D_i$ and~$n$ cliques~$D_v$ in~$G_c$.
Since~$|C_v \cap D_i|=\ell^3 > |C_v \cap D_v|=\ell^2$ for any~$v \in V(G_0)$ and~$1 \le i \le \ell$, it is always better to match~$C_v$ with some~$D_i$.
Since there are only~$\ell$ cliques~$D_i$, we can choose any~$\ell$ cliques from~$\{C_v \mid v \in V(G_0)\}$ to be matched with~$D_i$ for~$1 \le i \le \ell$ and the remaining~$n-\ell$ cliques to be matched with~$D_v$.
Thus we have many different matchings in~$B(G,G_c)$ which have the same maximum weight, and each of them corresponds to choosing~$\ell$ different cliques from~$\{C_v \mid v \in V(G_0)\}$ to be matched with~$D_i$ for~$1 \le i \le \ell$.
For each optimal matching, there are~$\ell$ \emph{free} cliques~$D_v$ in~$G_c$ which are not matched. 

\pgfdeclarelayer{bg}  
\pgfsetlayers{bg,main}

\begin{figure}[t]
\begin{center}
\begin{tikzpicture}[line width=1pt, scale=1, rotate=90]
		 	
\draw[red, dotted, rounded corners=12pt] (-.2,5.2) rectangle ++(4.2,0.8);
\draw[red, dotted, rounded corners=12pt] (-.2,3.2) rectangle ++(4.2,0.8);
\draw[red, dotted, rounded corners=12pt] (-.2,2.2) rectangle ++(4.2,0.8);
\draw[red, dotted, rounded corners=24pt] (-.2,6.2) rectangle ++(5.8,1.6);

\draw[blue, rounded corners=12pt] (0,1) rectangle ++(0.8,7);
\draw[blue, rounded corners=12pt] (1,1) rectangle ++(0.8,7);
\draw[blue, rounded corners=12pt] (3,1) rectangle ++(0.8,7);
\draw[blue] (4.8,7) circle (0.7cm);

\draw[red, dotted, rounded corners=12pt] (0,0) rectangle ++(0.8,2);
\draw[red, dotted, rounded corners=12pt] (1,0) rectangle ++(0.8,2);
\draw[red, dotted, rounded corners=12pt] (3,0) rectangle ++(0.8,2);

 \draw[blue, rounded corners=8pt] (1.6,0.2) rectangle ++(1.6,0.6);
 \draw[red, dotted, rounded corners=6pt] (1.9,0.25) rectangle ++(1,0.5);

	
	\draw[decorate,decoration={brace,amplitude=5pt,mirror},xshift=0pt,yshift=0pt]
(4.3,2.4) -- (4.3,5.8) node [midway,yshift=.5cm] 
{$D_i,1 \le i \le \ell$};

	\draw[decorate,decoration={brace,amplitude=5pt,mirror},xshift=0pt,yshift=0pt]
(3.6,8.3) -- (0.2,8.3) node [midway,xshift=-1cm] 
{$C_v,v\in V$};

\phantom{
	\draw[decorate,decoration={brace,amplitude=5pt,mirror},xshift=0pt,yshift=0pt]
(0.2,-.2) -- (3.6,-.2) node [midway,xshift=1cm] 
{$D_v,v\in V$};}

 \node () at (1.4,-.4) {$D_v$};
 \node () at (3.4,-.4) {$D_u$};
 \node () at (2.4,-1.5) {\begin{tabular}{l}$C_e$ and $D_e$\\ for $e=\{u,v\}\in E$\end{tabular}};
\node () at (4.8,7) {$C_B$};
\node () at (5.9,7) {$D_B$};

\end{tikzpicture}
\caption{Illustration of the constructed instance for the proof of \cref{lem:DCEMDWhardwrtk}.
Blue solid borders represent cliques in $G$ and red dotted borders represent cliques in $G_c$.
One horizontal long blue border represents a clique~$C_v$ in~$G$.
It has~$\ell+2$ parts and each part is contained in one clique of~$G_c$.
The first part contains~$\ell^7$~vertices which are contained in the big clique~$D_B$ of~$G_c$.
The following~$\ell$ parts each contain~$\ell^3$ vertices which are contained in the~$\ell$ cliques~$D_i$ of~$G_c$, and the last part contains $\ell^2$~vertices which are contained in~$D_v$ of~$G_c$.}
\label{fig: DCE is W1-hard wrt k}
\end{center}
\end{figure}
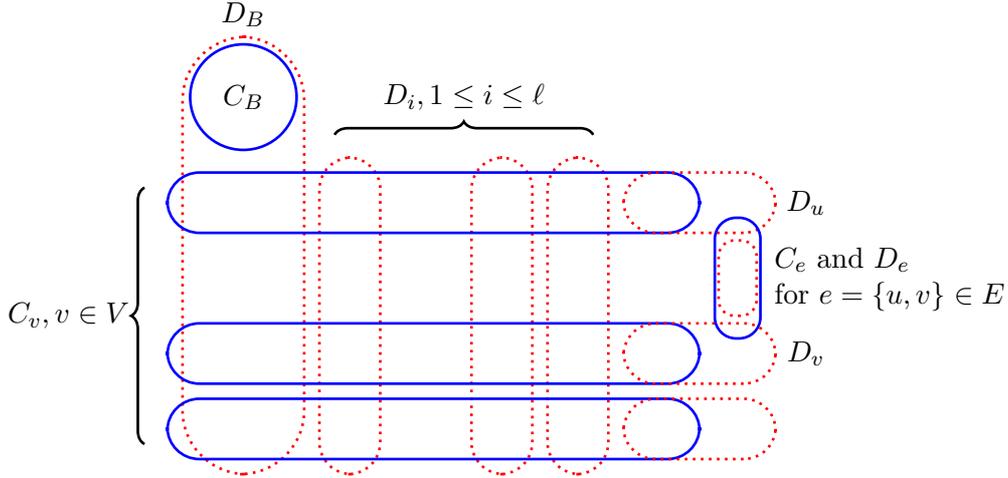

\begin{figure}[t]
\begin{center}
\begin{tikzpicture}[line width=1pt, scale=1, rotate=90]
		 	
\draw[red, dotted, rounded corners=12pt] (-.2,5.2) rectangle ++(4.2,0.8);
\draw[red, dotted, rounded corners=12pt] (-.2,3.2) rectangle ++(4.2,0.8);
\draw[red, dotted, rounded corners=12pt] (-.2,2.2) rectangle ++(4.2,0.8);
\draw[red, dotted, rounded corners=24pt] (-.2,6.2) rectangle ++(5.8,1.6);

\draw[blue, rounded corners=12pt] (0,1) rectangle ++(0.8,7);
\draw[blue, rounded corners=12pt] (1,1) rectangle ++(0.8,7);
\draw[blue, rounded corners=12pt] (3,1) rectangle ++(0.8,7);
\draw[blue,fill=green!85!black] (4.8,7) circle (0.7cm);

\draw[red, dotted, rounded corners=12pt] (0,0) rectangle ++(0.8,2);
\draw[red, dotted, rounded corners=12pt] (1,0) rectangle ++(0.8,2);
\draw[red, dotted, rounded corners=12pt] (3,0) rectangle ++(0.8,2);

 \draw[blue, rounded corners=8pt] (1.85,0.2) rectangle ++(1.1,0.6);
 \draw[red, dotted, rounded corners=6pt,fill=green!85!black] (1.9,0.25) rectangle ++(1,0.5);

	\draw[decorate,decoration={brace,amplitude=5pt,mirror},xshift=0pt,yshift=0pt]
(4.3,2.4) -- (4.3,5.8) node [midway,yshift=.5cm] 
{$D_i,1 \le i \le \ell$};

	\draw[decorate,decoration={brace,amplitude=5pt,mirror},xshift=0pt,yshift=0pt]
(3.6,8.3) -- (0.2,8.3) node [midway,xshift=-1cm] 
{$C_v,v\in V$};

\phantom{
	\draw[decorate,decoration={brace,amplitude=5pt,mirror},xshift=0pt,yshift=0pt]
(0.2,-.2) -- (3.6,-.2) node [midway,xshift=1cm] 
{$D_v,v\in V$};}

 \node () at (1.4,-.4) {$D_v$};
 \node () at (3.4,-.4) {$D_u$};
\node () at (4.8,7) {$C_B$};
\node () at (5.9,7) {$D_B$};

\draw[blue,fill=green!85!black] (3.4,0.5) circle (0.3cm);
\draw[blue,fill=green!85!black] (1.4,0.5) circle (0.3cm);

\begin{pgfonlayer}{bg}

\begin{scope}
  \clip (3,1) rectangle ++(0.8,7);
  \fill[green!85!black] (-.2,5.2) rectangle ++(4.2,0.8);
\end{scope}

\begin{scope}
  \clip (1,1) rectangle ++(0.8,7);
  \fill[green!85!black] (-.2,2.2) rectangle ++(4.2,0.8);
\end{scope}

\end{pgfonlayer}

\end{tikzpicture}
\caption{Illustration of a possible solution for the constructed instance (see \cref{fig: DCE is W1-hard wrt k}) in the proof of \cref{lem:DCEMDWhardwrtk}.
Blue solid borders represent cliques in $G'$ and red dotted borders represent cliques in $G_c$.
Green shaded areas indicate how cliques of $G'$ and $G_c$ are matched.
If two horizontal cliques of $G'$ (blue) are matched with two of the $\ell$ vertical cliques of $G_c$, then the corresponding vertices are part of the clique and hence are adjacent.
This means that the cliques corresponding to the edge can be matched in the
indicated way.
}
\label{fig: DCE is W1-hard wrt k 2}
\end{center}
\end{figure}
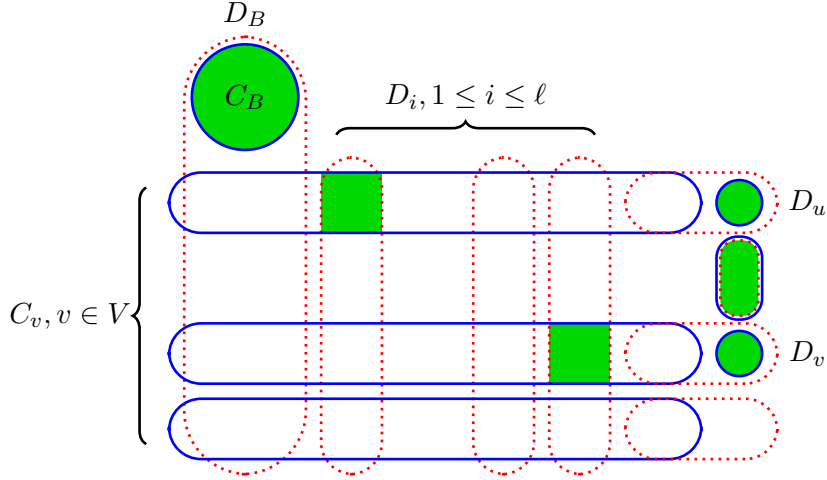
This reduction works in polynomial time.
We show that there is a clique of size~$\ell$ in~$G_0$ if and only if there is a cluster graph~$G'=(V,E')$ such that~$|E(G') \oplus E(G)| \le k$ and~$d_M(G',G_c) \le d$.

\emph{($\Rightarrow$):} Assume that there is a clique~$C^*$ of size~$\ell$ in~$G_0$. 
We modify the graph~$G$ as follows.
First, for every edge~$e$ in the clique~$C^*$ partition the corresponding clique~$C_e$ in~$G$ into three parts;
one part contains all vertices in~$D_e$ and the other two parts each have one vertex.
After this we get~$\ell(\ell-1)$ single vertices.
Since~$C^*$ is a clique, all these single vertices can be partitioned into~$\ell$ groups such that each group has~$\ell-1$ vertices and all these~$\ell-1$ vertices are contained in the same~$D_v$ for some~$v \in C^*$.
Then for each~$v \in C^*$, we combine the corresponding~$\ell-1$ vertices into one clique~$C_v^{\ell-1}$.
Denote the resulting graph as~$G'$. For an illustration see \cref{fig: DCE is W1-hard wrt k 2}.
Along the way to get~$G'$, we delete~$\binom{\ell}{2}(2\ell^4+1)$ edges and add~$\ell\binom{\ell-1}{2}$ edges, thus~$|E(G) \oplus E(G')|=\binom{\ell}{2}(2\ell^4+1)+\ell\binom{\ell-1}{2}=k$.
Next we show that~$d_M(G',G_c) \le d_0-\ell(\ell-1)$.
Recall that an optimal matching in~$B(G,G_c)$ can choose~$\ell$~cliques from~$\{C_v \mid v \in V(G_0)\}$ to be matched with~$D_i$ for~$1 \le i \le \ell$.
Now in~$B(G,G_c)$ we can choose all cliques in~$\{C_v \mid v \in C^*\}$ to be matched with~$D_i$ for~$1 \le i \le \ell$, and then match~$C_v^{\ell-1}$ with~$D_v$ for all~$v \in C^*$.
Then in the new matching we have~$\ell$ additional edges between~$C_v^{\ell-1}$ and~$D_v$ for~$v \in C^*$, each with weight~$\ell-1$.
Hence $d_M(G',G_c) \le d_0-\ell(\ell-1)$.

\emph{($\Leftarrow$):} Assume that there is a cluster graph~$G'=(V,E')$ such
that~$|E' \oplus E(G)| \le k$ and~$d_M(G',G_c) \le d$.
Note that~$k<\ell^7$, thus~$k<|C_v|$ and~$k<|C_B|$.
Consequently, we can only modify edges between vertices in~$C_e$.
It is easy to see that in any optimal matching in~$B(G',G_c)$, we still have
that clique~$C_B$ must be matched with~$D_B$ and clique~$C_e$ must be matched
with~$D_e$ for every~$e \in E(G_0)$.
We should choose~$\ell$~cliques from~$\{C_v \mid v \in V(G_0)\}$ to be
matched with~$D_i$ for~$1 \le i \le \ell$, which creates~$\ell$ free
cliques~$D_v$.
Hence, to decrease the distance between~$G$ and~$G_c$, or to increase the
matching, we have to create new cliques to be matched with these~$\ell$~free
cliques~$D_v$.
Note that every~$D_v$ only contains
single vertices from~$C_e$ with~$v \in e$ and the vertices contained
in~$C_v$. To create new cliques we need to first separate~$D_e$ to get single vertices and then combine them.
To decrease the distance by~$\ell(\ell-1)$, we need to separate at
least~$\ell(\ell-1)$ single vertices from~$C_e$.
This will cost at
least~$\ell(\ell-1)(\ell^4+1)-\binom{\ell}{2}=\binom{\ell}{2}(2\ell^4+1)$ edge
deletions if we always separate one~$C_e$ into three parts and get two single
vertices.
Then we need to combine these single vertices into at most~$\ell$ cliques since
there are at most~$\ell$ free cliques~$D_v$.
This will cost at least~$\ell\binom{\ell-1}{2}$ edge insertions if all
these~$\ell(\ell-1)$ single vertices can be partitioned into~$\ell$ groups and
each group has~$\ell-1$ vertices.
Since~$k=\binom{\ell}{2}(2\ell^4+1)+\ell\binom{\ell-1}{2}$, we have that in the
first step we have to choose~$\binom{\ell}{2}$ cliques~$C_e$ and separate them
into three parts and all these~$\ell(\ell-1)$ single vertices are evenly
distributed in~$\ell$ free cliques~$D_v$.
This means that in~$G_0$ we can select~$\binom{\ell}{2}$ edges between~$\ell$
vertices and each vertex has~$\ell-1$ incident edges.
Thus there is a clique of size~$\ell$ in~$G_0$.
\end{proof}

The next lemma shows that \DCEED is \W1-hard with respect to $k$.
The corresponding parameterized reduction is from \Clique and shares some
similarities with th reduction presented in the proof of
\cref{lem:DCEMDWhardwrtk} with respect to the edge gadgets.

The result is based on the following property for instances of \DCEED with~$k+d=|E(G) \oplus E(G_c)|$.

\begin{obs}
\label{obs: exact modification}
If an instance~$(G,G_c,k,d)$ of \DCEED (\DCDED or \DCCED) has the property
that~$k+d=|E(G) \oplus E(G_c)|$, then any solution~$G'$ satisfies that~$|E(G)
\oplus E(G')|=k$, $|E(G') \oplus E(G_c)|=d$, and~$E(G) \oplus E(G') \subseteq
E(G) \oplus E(G_c)$.
\end{obs}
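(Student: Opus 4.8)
The plan is to derive all three conclusions from the triangle inequality for symmetric differences together with the hypothesis $k+d=|E(G)\oplus E(G_c)|$, which will force every inequality in sight to be tight. Throughout, write $X:=E(G)\oplus E(G')$ and $Y:=E(G')\oplus E(G_c)$. Since symmetric difference is associative and $E(G')\oplus E(G')=\emptyset$, we have $E(G)\oplus E(G_c)=X\oplus Y$, and the elementary counting identity $|X\oplus Y|=|X|+|Y|-2|X\cap Y|$ yields
\begin{equation*}
|E(G)\oplus E(G_c)|=|X|+|Y|-2|X\cap Y|\le |X|+|Y|.
\end{equation*}

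First I would use the two defining constraints of a solution: any solution~$G'$ satisfies $|X|=|E(G)\oplus E(G')|\le k$ (the budget) and $|Y|=|E(G')\oplus E(G_c)|\le d$ (the edge-based distance bound). Substituting into the inequality above gives $|E(G)\oplus E(G_c)|\le k+d$, and the hypothesis $k+d=|E(G)\oplus E(G_c)|$ makes this tight. Tightness forces both $|X|=k$ and $|Y|=d$ (any slack in either would make the right-hand side strictly smaller than $k+d$) and, simultaneously, $|X\cap Y|=0$, i.e.\ $X\cap Y=\emptyset$, since the identity together with $|X|+|Y|=k+d$ leaves no room for the correction term. Notably this step uses only the budget and distance constraints, so it applies verbatim to the editing, deletion, and completion variants.

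It remains to read off the containment $E(G)\oplus E(G')\subseteq E(G)\oplus E(G_c)$, which is the one point requiring slightly more than the numerical equalities. Having established $X\cap Y=\emptyset$, the symmetric difference $X\oplus Y$ coincides with the disjoint union $X\cup Y$, whence $X\subseteq X\cup Y=X\oplus Y=E(G)\oplus E(G_c)$; unfolding $X$ this is exactly $E(G)\oplus E(G')\subseteq E(G)\oplus E(G_c)$. The whole argument is short, and the only conceptual step is recognizing that it is the vanishing of the intersection term $|X\cap Y|$---that is, the equality case of the triangle inequality---that produces the inclusion, rather than the two size equalities on their own.
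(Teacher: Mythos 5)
Your proof is correct, and it rests on the same underlying mechanism as the paper's---tightness in the triangle inequality for symmetric differences---but you organize it along a genuinely different (and cleaner) route. The paper argues in two stages: first it invokes the unquantified inequality $|E(G')\oplus E(G)|+|E(G')\oplus E(G_c)|\ge |E(G)\oplus E(G_c)|$ together with the constraints $\le k$ and $\le d$ to get the two equalities; then, separately, it partitions $X=E(G)\oplus E(G')$ into $S_1=X\setminus(E(G)\oplus E(G_c))$ and $S_2=X\cap(E(G)\oplus E(G_c))$, recomputes $|E(G')\oplus E(G_c)|=k+d+|S_1|-|S_2|$, and derives a contradiction unless $S_1=\emptyset$. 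You instead apply the exact counting identity $|X\oplus Y|=|X|+|Y|-2|X\cap Y|$ once (after observing $E(G)\oplus E(G_c)=X\oplus Y$ by associativity), so that the equality case simultaneously forces $|X|=k$, $|Y|=d$, and $X\cap Y=\emptyset$, from which the containment is immediate since $X\subseteq X\cup Y=X\oplus Y$. At bottom these are the same computation---the paper's $S_1$ is precisely your $X\cap Y$, because the elements of $X$ lying outside $X\oplus Y$ are exactly those in $X\cap Y$---but your version quantifies the slack explicitly and avoids both the two-stage structure and the proof by contradiction, making it transparent that all three conclusions are forced at once; the paper's version, in exchange, needs nothing beyond the bare triangle inequality and elementary set bookkeeping.
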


\begin{proof}
On the one hand, for any graph~$G'$, we have that
\begin{equation*}
|E(G') \oplus E(G)|+|E(G') \oplus E(G_c)| \ge |E(G) \oplus E(G_c)|=k+d.
\end{equation*}
On the other hand, a solution~$G'$ satisfies that~$|E(G') \oplus E(G)| \le k$ and~$|E(G') \oplus E(G_c)| \le d$.
Thus we have that~$|E(G) \oplus E(G')|=k$ and~$|E(G') \oplus E(G_c)|=d$.

Let~$S_1=E(G) \oplus E(G') \setminus E(G) \oplus E(G_c) $ and~$S_2=E(G) \oplus E(G') \setminus S_1$.
Then
\begin{equation*}
|E(G') \oplus E(G)|=|S_1|+|S_2|
\end{equation*}
and
\begin{equation*}
|E(G') \oplus E(G_c)|=|E(G) \oplus E(G_c)|+|S_1|-|S_2|=k+d+|S_1|-|S_2|.
\end{equation*}
If~$S_1  \neq \emptyset$, then
\begin{equation*}
|E(G') \oplus E(G)|+|E(G') \oplus E(G_c)|=k+d+2|S_1|>k+d,
\end{equation*}
which is a contradiction.
Thus we conclude that $S_1=\emptyset$ and hence~$E(G) \oplus E(G') \subseteq
E(G) \oplus E(G_c)$.
\end{proof}

Consequently, when~$k+d=|E(G) \oplus E(G_c)|$ the only way to get a solution~$G'$ is to find a subset of~$E(G) \oplus E(G_c)$ with size exactly~$k$ such that modifying the edges of this subset in~$G$ yields a cluster graph.

\begin{lem}
\label{lem:DCEEDWhardwrtk}
\DCEED is \NP-complete and \W1-hard with respect to the budget~$k$, even if the input graph~$G$ is a cluster graph and~$k+d=|E(G) \oplus E(G_c)|$.
\end{lem}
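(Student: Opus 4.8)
The plan is to give a parameterized reduction from \Clique, which is \W1-hard with respect to the solution size~$\ell$, arranged so that the produced budget~$k$ is a polynomial in~$\ell$ only; together with the \NP-hardness of \Clique and the trivial membership of \DCEED in \NP (guess~$G'$ and verify the two size bounds and that~$G'$ is a cluster graph), this yields both claims. Given $(G_0,\ell)$ with $n=|V(G_0)|$, I would build two cluster graphs~$G$ and~$G_c$ over a common vertex set. Following the edge gadgets of \cref{lem:DCEMDWhardwrtk}, for every edge $e=\{u,v\}$ of~$G_0$ I create in~$G$ a clique~$C_e$ consisting of a large ``core'' of~$\ell^4$ vertices together with two ``port'' vertices, one labelled~$u$ and one labelled~$v$; in~$G_c$ the core becomes its own clique~$D_e$, while the two ports are placed into vertex cliques. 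The remaining gadgetry (vertex cliques~$C_v$/$D_v$ and padding cliques) must be designed so that~$G$ is already a cluster graph, so that each vertex clique can absorb only a bounded number of ports, and so that---crucially---every disagreement between~$G$ and~$G_c$ is incident to a port. Finally I set $k=\binom{\ell}{2}(2\ell^4+1)+\ell\binom{\ell-1}{2}$ and, most importantly, $d:=|E(G)\oplus E(G_c)|-k$, so that the instance satisfies the hypothesis $k+d=|E(G)\oplus E(G_c)|$ of \cref{obs: exact modification}.

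The conceptual core of the argument is \cref{obs: exact modification}: since $k+d=|E(G)\oplus E(G_c)|$, any solution~$G'$ is obtained by toggling exactly~$k$ of the disagreements in $E(G)\oplus E(G_c)$, each toggle necessarily moving~$G$ toward~$G_c$ (deleting an edge of $E(G)\setminus E(G_c)$ or inserting an edge of $E(G_c)\setminus E(G)$), and the resulting graph must be a cluster graph. Because every disagreement is incident to a port, all modifications are confined to detaching ports from their cores and re-merging ports; moreover a port may only be merged with another port carrying the same vertex label, as in~$G_c$ such ports lie together in one vertex clique, whereas merging a port into the untouched large clique~$C_v$ would create an induced~$P_3$ with the vertices of~$C_v$ lying outside that vertex clique. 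Thus the edge-based distance bookkeeping collapses to a purely combinatorial question of which edge gadgets to split and how to re-group the freed ports.

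For the forward direction, given an $\ell$-clique~$C^*$ in~$G_0$ I split exactly the $\binom{\ell}{2}$ edge gadgets~$C_e$ with $e\subseteq C^*$, costing $2\ell^4+1$ deletions each and freeing $\ell(\ell-1)$ ports, and then, for every $v\in C^*$, merge the $\ell-1$ freed ports labelled~$v$ into one clique, costing $\binom{\ell-1}{2}$ insertions each. The total is exactly~$k$, every modification moves toward~$G_c$, and one checks directly that the result has no induced~$P_3$, so it is a valid solution.

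The backward direction is the main obstacle and needs a tight budget analysis. Writing~$T$ for the set of split gadgets, the $2|T|$ freed ports carry vertex labels with multiplicity $\deg_T(v)$ at vertex~$v$, and re-merging them into cliques costs a sum of terms~$\binom{j}{2}$ over the formed groups of size~$j$. The cores being of size~$\ell^4$ makes splitting each gadget cost~$2\ell^4+1$, so the leading term of the budget pins $|T|$ to~$\binom{\ell}{2}$ and forces exactly $\ell\binom{\ell-1}{2}$ insertions among the $\ell(\ell-1)$ freed ports. The delicate point---absent from the matching-based proof of \cref{lem:DCEMDWhardwrtk}, where the matching automatically capped the gain contributed by each vertex clique---is to rule out \emph{concentrated} solutions, for instance splitting a high-degree star and pouring its ports into a few large cliques at a single vertex, which the strict convexity of $j\mapsto\binom{j}{2}$ alone does not forbid. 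This is exactly where the bounded per-vertex absorption capacity of the vertex gadgets must enter: together with convexity it should force the freed ports to be distributed perfectly evenly as~$\ell$ groups of size~$\ell-1$ at~$\ell$ distinct vertices, each incident to $\ell-1$ selected edges. Since a graph on~$\ell$ vertices in which every vertex has degree~$\ell-1$ is~$K_\ell$, the selected edges then form an $\ell$-clique in~$G_0$, completing the equivalence. Calibrating the gadget sizes and absorption caps so that all these inequalities are simultaneously tight only at the clique configuration, while keeping~$k$ a function of~$\ell$ alone, is the technical heart of the proof.
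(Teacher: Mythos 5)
Your high-level framework is the same as the paper's (a parameterized reduction from \Clique{} whose instance satisfies $k+d=|E(G)\oplus E(G_c)|$, so that \cref{obs: exact modification} reduces everything to choosing exactly~$k$ disagreements whose toggling yields a cluster graph), and your forward direction is fine. But the backward direction is not a matter of ``calibration left to the reader'': with the numbers you fix, the reduction is actually false. In your construction a freed port is a \emph{single} vertex, so detaching it costs about~$\ell^4$ while re-merging~$j$ freed ports costs only~$\binom{j}{2}$; these two scales are too far apart to be disentangled by the budget. Concretely, let~$G_0$ be a star whose center~$w$ has degree~$\ell(\ell-1)$ (no clique of size~$\ell\ge 3$ exists). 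Detach the $w$-labelled port from each of the~$\ell(\ell-1)$ gadgets at~$w$; each detachment deletes the~$\ell^4$ port--core edges and the one port--port edge, for a total of $\ell(\ell-1)(\ell^4+1)=\binom{\ell}{2}(2\ell^4+1)+\binom{\ell}{2}$, leaving exactly $\ell\binom{\ell-1}{2}-\binom{\ell}{2}=\ell(\ell-1)(\ell-3)/2$ budget for insertions. For~$\ell=3$ this is already zero; for~$\ell\ge 4$ it can be hit exactly by grouping some of the~$\ell(\ell-1)$ freed ports (all carry the label~$w$, so every grouping is permitted): take a largest group with $\binom{J}{2}$ below the remainder, and pay the residue, which is less than~$J$, with pairs costing~$1$ each. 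Every toggled pair is a disagreement and the result is a cluster graph, so by \cref{obs: exact modification} this is a valid solution of the \DCEED{} instance although $(G_0,\ell)$ is a no-instance of \Clique. Note also that your two design requirements are in tension: the absorption-cap/$P_3$ argument needs a vertex clique~$C_v$ that straddles two cliques of~$G_c$, but such a clique creates disagreements not incident to any port, contradicting the invariant your whole bookkeeping rests on.

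The paper's proof of \cref{lem:DCEEDWhardwrtk} closes exactly this gap by working at two separated scales and by introducing \emph{anchors}. Edge gadgets have size~$2L_2$ with $L_2=\ell^2$ and split in~$G_c$ into two halves of size~$L_2$, so merging freed pieces costs multiples of~$L_2$ and each merged \emph{pair} of halves costs~$L_2^2=\ell^4$; in addition, every vertex~$v$ gets a gadget~$C_v$ of size $L_1+1=\ell^7+2$ that splits in~$G_c$ into~$L_1$ vertices plus one anchor vertex. With $k=\ell L_1+\ell(\ell-1)L_2+\ell\binom{\ell-1}{2}L_2^2+\binom{\ell}{2}L_2^2$, the four available operations cost $L_1=\ell^7+1$, $L_2^2$, $aL_2$, and $\binom{b}{2}L_2^2$; since only the anchor-splits are not multiples of~$\ell^2$, a residue-plus-magnitude argument pins their number to exactly~$\ell$, and the convexity of the per-anchor merge cost---now carrying the coefficient~$\ell^4$ rather than your coefficient~$1$---forces the freed halves to be distributed as~$\ell$ groups of~$\ell-1$ at~$\ell$ distinct anchors, which is what encodes the clique. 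In short, the missing mechanism is not a cap on absorption but (i) a forced, exactly-$\ell$-sized set of anchors and (ii) insertion costs inflated to the~$\ell^4$ scale; without both, concentrated solutions of the kind above slip through.
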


\begin{proof}
We present a parameterized reduction from \Clique, where given a graph~$G_0$ and an integer~$\ell$, we are asked to decide whether $G_0$ contains a complete subgraph of order~$\ell$. 
\Clique is \W1-hard when parameterized by~$\ell$~\cite{downey2013fundamentals}.
Given an instance~$(G_0,\ell)$ of \Clique, we construct an instance~$(G,G_c,k,d)$ of \DCEED as follows.
The construction is illustrated in \cref{figure: DCEED is W1-hard wrt k exact}.
We set~$L_1=\ell^7+1$ and~$L_2=\ell^2$.
We first construct~$G$.
For every vertex~$v$ of~$G_0$, we create a clique~$C_v$ of
size~$L_1+1=\ell^7+2$, and for every edge~$e$ of~$G_0$, we create a clique~$C_e$
of size~$2L_2$.
Note that~$G$ is already a cluster graph.
Next, we construct~$G_c$.
For every vertex~$v$ of~$G_0$, let~$C_e^1,C_e^2, \dots, C_e^p$ be all cliques of size~$2L_2$ in~$G$ which represent all edges incident on~$v$.
For each vertex~$v$ of~$G_0$, we create two cliques in~$G_c$.
One of them contains~$L_1$ vertices of~$C_v$, the other contains the remaining one vertex of~$C_v$, called the \emph{single} vertex of~$C_v$, and~$L_2$~vertices from every~$C_e^i$ for~$1 \le i \le p$ (see also \cref{figure: DCEED is W1-hard wrt k exact}).
Set
\begin{equation}
\label{eq:k}
k=\ell L_1+\ell(\ell-1)L_2 + \ell \binom{\ell-1}{2}{L_2}^2+\binom{\ell}{2}{L_2}^2
\end{equation} 
and set~$d=|E(G) \oplus E(G_c)|-k$.
This reduction works in polynomial time.

\begin{figure}[t]
\begin{center}
    \begin{tikzpicture}[line width=1pt, scale=1] 

\node[draw,circle] (v) at (-2,0) {$v$};
\node (v1) at (-1.2,1.2) {};
\node[draw,circle] (v2) at (-0.5,0) {$u$};
\node (v3) at (-1.2,-1.2) {};

\draw (v) -- (v1);
\draw (v) -- (v2);
\draw (v) -- (v3);

\draw[->, double, thick] (0,0) -> (0.5,0);

\begin{scope}[scale=0.8]
\draw[red, dotted] (2,0) circle (1cm);
\draw[red, dotted] (4.5,0) circle (1.5cm);
\node[draw,circle,inner sep=1pt,fill] (v') at (4,0) {};

\draw[rounded corners=10mm, draw=blue] (0.9,1.6) -- (5,0) -- (0.9,-1.6)-- cycle;
\draw[blue] (6,0) ellipse (1cm and 0.2cm);
\draw[blue, rotate around={55:(4.5,0)}] (6,0) ellipse (1cm and 0.2cm);
\draw[blue, rotate around={-55:(4.5,0)}] (6,0) ellipse (1cm and 0.2cm);

\begin{scope}[xshift=12cm]
\begin{scope}[xscale=-1]
 \draw[red, dotted] (2,0) circle (1cm);
\draw[red, dotted] (4.5,0) circle (1.5cm);
\node[draw,circle,inner sep=1pt,fill] (v') at (4,0) {};

\draw[rounded corners=10mm, draw=blue] (0.9,1.6) -- (5,0) -- (0.9,-1.6)-- cycle;
\end{scope}
\end{scope}
\end{scope}

\node[blue] () at (2,-1.2) {$C_v$};
\node[blue] () at (7.6,-1.2) {$C_u$};
\node[blue] () at (5.5,-0.5) {$C_{v,u}$};

\end{tikzpicture}
\end{center}
\caption{Illustration of the constructed instance for the proof of \cref{lem:DCEEDWhardwrtk}.
On the left side there are two connected vertices~$v$ and~$u$ in~$G_0$ and three edges incident on~$v$.
On the right side we have the corresponding parts in~$G$ and~$G_c$.
Red dotted circles represent cliques in~$G_c$ and blue solid borders represent cliques in~$G$.}
\label{figure: DCEED is W1-hard wrt k exact}
\end{figure}
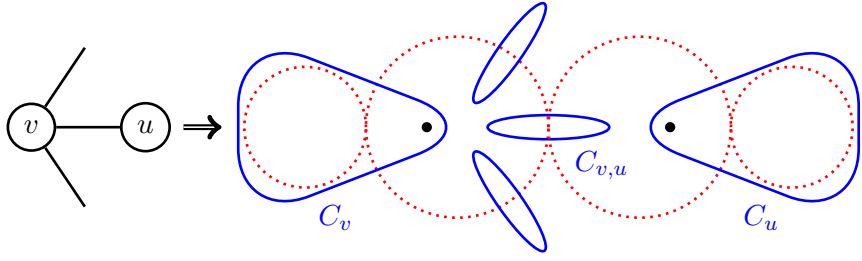

Now we show that there is a clique of size~$\ell$ in~$G_0$ if and only if there is a cluster graph~$G'=(V,E')$ such that~$|E' \oplus E(G)| \le k$ and~$|E' \oplus E(G_c)| \le d$.
To simplify the proof, we assume~$\ell \ge 3$ in the following.

\emph{($\Rightarrow$):} Assume that there is a clique~$C^*$ of size~$\ell$ in~$G_0$.
We modify graph~$G$ in the following two steps.
We first separate cliques in~$G$ according to~$C^*$ by deleting edges as follows.
For every vertex~$v$ in~$C^*$, find the clique~$C_v$ and delete edges between the single vertex in~$C_v$ and the remaining~$L_1$ vertices.
For every edge~$e$ in~$C^*$, find the clique~$C_e$ in~$G$ and delete edges to separate the clique into two parts, each with~$L_2$ vertices.
In the first step we delete~$\ell L_1+\binom{\ell}{2}{L_2}^2$ edges.
The next step is to combine some cliques by adding edges.
For every vertex~$v$ in~$C^*$, we combine the single vertex from~$C_v$ and~$\ell-1$ cliques of size~$L_2$ into one clique.
In this step we add~$\ell(\ell-1)L_2 +\ell\binom{\ell-1}{2}{L_2}^2$ edges.
Thus in total we modify~$k$~edges.

\emph{($\Leftarrow$):} Assume that there is a cluster graph~$G'=(V,E')$ such that~$|E(G') \oplus E(G)| \le k$ and~$|E(G') \oplus E(G_c)| \le d$. Since~$k+d=|E(G) \oplus E(G_c)|$, we have~$|E(G) \oplus E(G')|=k$ and~$|E(G') \oplus E(G_c)|=d$.
Thus, to get the solution~$G'$ we have to modify exactly~$k$ edges from~$E(G) \oplus E(G_c)$.
As a result, we only have the following four kinds of operations:
\begin{compactenum}
\item separate a clique~$C_v$ in~$G$ into two parts, one with the single vertex and the other with~$L_1$~vertices, which costs~$L_1=\ell^7+1$ edge deletions;
\item separate a clique~$C_e$ in~$G$ into two parts, each with~$L_2$ vertices contained in one clique in~$G_c$, which costs~${L_2}^2=\ell^4$ edge deletions;
\item combine the single vertex of~$C_v$ with some cliques of size~$L_2$ which come from separating clique~$C_e$ into two parts, which costs~$aL_2=a\ell^2$ edge insertions for some integer~$a$;
\item combine some cliques of size~$L_2$ which come from separating clique~$C_e$ into two parts, which costs~$\binom{b}{2}{L_2}^2=\binom{b}{2}\ell^4$ edge insertions for some integer~$b$.
\end{compactenum}

First, we claim that there must be~$\ell$ cliques of size~$L_1+1$ in~$G$ that have been separated.
Note that~$k=\ell^8+\frac{1}{2}\ell^7-\ell^6+\frac{1}{2}\ell^5+\ell^4-\ell^3+\ell$, where the last additive term~$\ell$ can only come from separating~$\ell$ cliques of size~$L_1+1$ in~$G$.
In addition, there cannot be more than~$\ell$ cliques of size~$L_1+1$ in~$G$
that have been separated, since~$(\ell+1) L_1 > k$ (assuming~$\ell \ge 3$).
Thus exactly~$\ell$ cliques of size~$L_1+1$ in~$G$ have to be separated and we get~$\ell$ single vertices.
This costs~$\ell L_1$ edge deletions, which is the first additive item of \cref{eq:k}.

Next, we claim that at least~$\ell(\ell-1)$ cliques of size~$L_2$ are combined
with these~$\ell$ single vertices we got in the last step.
This is because the second term of~$k$,~$\ell(\ell-1)L_2$, is strictly less than~$\ell^4$, and hence can only come from the third kind of operation, combining the single vertex with cliques of size~$L_2$.
Suppose that~$\ell(\ell-1)+\delta$ cliques of size~$L_2$ are combined with these single vertices for some~$\delta \ge 0$.
Then we need~$(\ell(\ell-1)+\delta)L_2$ edge insertions.
Note that the second additive term of \cref{eq:k} is~$\ell(\ell-1)L_2$.

Then, we need to separate at least~$\binom{\ell}{2}+\frac{\left\lceil \delta \right\rceil}{2}$ cliques of size~$2L_2$ so that we can combine them with single vertices.
Denote by~$f_1(\ell,\delta)$ the number of edge deletions this separation cost.
Clearly, $f_1(\ell,\delta) \ge (\binom{\ell}{2}+\frac{\left\lceil \delta \right\rceil}{2}){L_2}^2$.
Notice that the last additive term of \cref{eq:k} is~$\binom{\ell}{2}{L_2}^2$.

Finally, when we combine a single vertex with more than one clique of size~$L_2$, then we also need to add edges between these cliques.
Denote by~$f_2(\ell,\delta)$ the number of edge insertions between these cliques.
Since we have~$\ell(\ell-1)+\delta$ cliques of size~$L_2$ and~$\ell$ single vertices, and every clique is combined with one single vertex, it follows that~$f_2(\ell,\delta) \ge \ell \binom{\ell-1}{2}{L_2}^2$.
Notice that the third additive term of \cref{eq:k} is~$\ell \binom{\ell-1}{2}{L_2}^2$.

Overall, we need
\begin{equation*}
\ell L_1+(\ell(\ell-1)+\delta)L_2 + f_2(\ell,\delta)+f_1(\ell,\delta)
\ge k
\end{equation*}
edge modifications.
Equality only holds if~$\delta=0$, $f_1(\ell,\delta)=\binom{\ell}{2}{L_2}^2$
and~$f_2(\ell,\delta)=\ell \binom{\ell-1}{2}{L_2}^2$.
Here~$f_2(\ell,\delta)=\ell \binom{\ell-1}{2}{L_2}^2$ means that we can partition all~$\ell(\ell-1)$ cliques of size~$L$ into~$\ell$ parts, each with~$\ell-1$ cliques, and then combine all~$\ell-1$ cliques in each part with one single vertex.
Moreover,~$f_1(\ell,\delta)=\binom{\ell}{2}{L_2}^2$ means that all
these~$\ell(\ell-1)$ cliques of size~$L$ come from separating~$\binom{\ell}{2}$ cliques of size~$2L_2$.
Then, in~$G_0$ we have~$\ell$ vertices (corresponding to these~$\ell$ single vertices) and~$\binom{\ell}{2}$ edges (corresponding to these~$\binom{\ell}{2}$ cliques of size~$2L_2$) such that each vertex has~$\ell-1$ incident edges from these~$\binom{\ell}{2}$ edges.
Hence, these~$\ell$ vertices form a clique in~$G_0$.
\end{proof}

Note that in the reduction of \cref{lem:DCEEDWhardwrtk} the constructed graph~$G$ is a cluster graph.
According to \cref{obs: symmetric swap}, this reduction can also be used to prove \W1-hardness with respect to the distance~$d$.

\begin{cor}
\label{cor: DCEED is W1-hard wrt d exact}
\DCEED is \NP-complete and \W1-hard with respect to the distance~$d$, even if the input graph~$G$ is a cluster graph and~$k+d=|E(G) \oplus E(G_c)|$.
\end{cor}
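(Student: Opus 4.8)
The plan is to derive the corollary directly from the reduction constructed in the proof of \cref{lem:DCEEDWhardwrtk}, composing it with the symmetric swap of \cref{obs: symmetric swap}. The facts I would rely on are exactly those established in \cref{lem:DCEEDWhardwrtk}: the reduction takes a \Clique instance $(G_0,\ell)$ and produces a \DCEED instance $(G,G_c,k,d)$ in which the input graph~$G$ is already a cluster graph, the equality $k+d=|E(G)\oplus E(G_c)|$ holds, and the budget $k=k(\ell)$ depends only on~$\ell$ (it is the value fixed in \cref{eq:k}); moreover $(G,G_c,k,d)$ is a yes-instance if and only if $G_0$ contains a clique of size~$\ell$.

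First I would invoke \cref{obs: symmetric swap}. Since $G$ is a cluster graph in the instance $(G,G_c,k,d)$, the observation applies and shows that $(G,G_c,k,d)$ is a yes-instance of \DCEED if and only if $(G_c,G,d,k)$ is a yes-instance of \DCEED. Composing the reduction of \cref{lem:DCEEDWhardwrtk} with this swap thus yields a reduction $(G_0,\ell)\mapsto(G_c,G,d,k)$ from \Clique to \DCEED. In the swapped instance the role of the input graph is played by~$G_c$, which is a cluster graph (being the target cluster graph of the original instance), so the claimed hardness indeed holds even when the input graph is a cluster graph. Since the symmetric difference is symmetric, the equality $k+d=|E(G_c)\oplus E(G)|$ also carries over unchanged to $(G_c,G,d,k)$.

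Next I would verify that this composition is a valid parameterized reduction for the parameter \emph{distance}. In the swapped instance the distance upper bound is the old budget $k=k(\ell)$, which depends only on~$\ell$, and the whole construction runs in polynomial time. Hence we obtain a parameterized reduction from \Clique parameterized by~$\ell$ to \DCEED parameterized by the distance, and \W1-hardness with respect to~$d$ follows from the \W1-hardness of \Clique. Membership of \DCEED in \NP{} is immediate (guess the solution cluster graph~$G'$ and check both constraints in polynomial time), and \NP-hardness follows because the same composed construction is a polynomial-time many-one reduction from the \NP-hard \Clique problem.

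As the argument is essentially a bookkeeping composition of an already proven reduction with an established equivalence, I do not expect a genuine obstacle. The only point needing care is to confirm that the instance now parameterized by distance really has a cluster graph as its input graph; this is guaranteed because the target~$G_c$ of the original instance is always a cluster graph, and it becomes the input graph after the swap.
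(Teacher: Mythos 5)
Your proposal is correct and follows exactly the paper's own route: the paper also derives the corollary by observing that the graph~$G$ constructed in the proof of \cref{lem:DCEEDWhardwrtk} is a cluster graph, and then applying \cref{obs: symmetric swap} to exchange~$G$ with~$G_c$ and~$k$ with~$d$, so that the old budget~$k$ (a function of~$\ell$ only, by \cref{eq:k}) becomes the distance parameter. Your additional bookkeeping (polynomial running time, the swapped instance's input graph being the cluster graph~$G_c$, and \NP-membership) just makes explicit what the paper leaves implicit.
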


The following result also exploits on the property that we need exactly~$k$ edge
modifications when~$k+d=|E(G) \oplus E(G_c)|$.

\begin{lem}
\label{lem:DCDEDWhardwrtd}
\DCDED is \W1-hard with respect to the distance~$d$, even when~$k+d=|E(G) \oplus E(G_c)|$.
\end{lem}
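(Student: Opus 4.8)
The plan is to reduce from \Clique, following the blueprint of \cref{lem:DCEEDWhardwrtk} but redesigning the gadget so that a solution is obtained from~$G$ by deletions only. The starting observation is \cref{obs: exact modification}: since we again set~$k+d=|E(G)\oplus E(G_c)|$, every solution~$G'$ must delete \emph{exactly}~$k$ edges, all lying in~$E(G)\setminus E(G_c)$, and must retain every edge of~$E(G)\cap E(G_c)$. Hence the task reduces to the following: find a cluster graph~$G'$ with $E(G)\cap E(G_c)\subseteq E(G')\subseteq E(G)$ that keeps exactly~$d$ of the \emph{surplus} edges $E(G)\setminus E(G_c)$. I would engineer~$(G,G_c,k,d)$ so that such a~$G'$ exists if and only if~$G_0$ contains a clique of size~$\ell$, and crucially so that~$G$ is \emph{not} a cluster graph (otherwise, as in the editing case, the symmetric swap of \cref{obs: symmetric swap} would be the natural route, but it does not preserve the deletion restriction).

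For the gadget I would reuse the multi-scale sizing of \cref{lem:DCEEDWhardwrtk} (vertex cliques of size~$\Theta(\ell^7)$ with a distinguished \emph{single} vertex, edge cliques of size~$\Theta(\ell^2)$ split into two halves), which yields a positional-number-system in the parameters so that each additive term of~$k$ pins down a combinatorial count at its own scale. The one structural change is dictated by the deletion restriction: the reduction of \cref{lem:DCEEDWhardwrtk} realises the ``combine a single vertex with~$\ell-1$ edge-halves'' step by edge \emph{insertions}, which are forbidden here. Instead I would \emph{pre-install} these combined cliques directly in~$G$: for every vertex~$v$, the single vertex of~$v$ is joined in~$G$ to all halves of its incident edge cliques, and each edge clique joins its two halves. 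This makes every edge-half lie in two overlapping cliques, so~$G$ contains induced~$P_3$s that a solution must resolve by deleting one side of each overlap; deleting one side corresponds exactly to \emph{choosing}, for each vertex, which incident edge-halves remain in its cluster, and the surviving surplus edges are then metered by the distance to~$G_c$.

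With~$d$ set to the surplus count produced by a genuine clique, hitting distance exactly~$d$ would then be forced to exhibit, via the scale-separated counts, that exactly~$\ell$ vertices are activated, each retaining exactly~$\ell-1$ incident edge-halves, and that these halves come from exactly~$\binom{\ell}{2}$ edge cliques contributing \emph{both} halves to their endpoints; since a given edge-half is adjacent only to the cluster of its own endpoint, these~$\binom{\ell}{2}$ edges lie among the~$\ell$ activated vertices, and~$\binom{\ell}{2}$ edges on~$\ell$ vertices is a clique. This is the same handshake/degree-counting argument used in the~($\Leftarrow$) direction of \cref{lem:DCEEDWhardwrtk}, and the~($\Rightarrow$) direction is the routine verification that a clique yields a deletion set of the prescribed size.

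The main obstacle is precisely the deletion-only restriction. Because no edge may be inserted, every combination a solution could wish to keep must already be present in~$G$, and~$G_c$ must be chosen so that two degenerate escapes are blocked: first, the trivial solution $G'=G_c$ must be unavailable, which I expect to require $E(G_c)\not\subseteq E(G)$, i.e.\ a controlled, clique-independent base contribution to the distance; and second, distance exactly~$d$ must be \emph{unreachable} by any cheaper alternative clustering that attains the same surplus-edge count without a clique---for instance by keeping an entire edge clique intact, or by spreading the selection over more than~$\ell$ vertices. Ruling out these alternatives is the delicate part of the argument, and it is exactly what the separation of the~$L_1$-, $L_2$-, and~$L_2^{2}$-scale contributions, together with the restricted adjacency of each edge-half, is designed to enforce.
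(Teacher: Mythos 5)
Your high-level setup is the right one, and it matches the paper's strategy: set $k+d=|E(G)\oplus E(G_c)|$, invoke \cref{obs: exact modification}, and conclude that any solution deletes exactly $k$ surplus edges, retains every edge of $E(G)\cap E(G_c)$, and has its retained surplus metered by~$d$. The gap is in the gadget, and it is an obstruction rather than a ``delicate detail.'' In the deletion-only setting, \cref{obs: exact modification} has a consequence your sketch never confronts: every connected component of the common graph $(V,E(G)\cap E(G_c))$ must lie inside a single cluster of the solution, and that cluster must be a clique of~$G$, since insertions are forbidden. Now test your pre-installed gadget against this. If, as in \cref{lem:DCEEDWhardwrtk}, the clusters of $G_c$ are the stars $\{s_v\}\cup\bigcup_{e\ni v}H_{e,v}$ (single vertex plus all incident halves), then after pre-installing these stars in $G$ their edges are \emph{common}, hence undeletable, and the choice of which halves remain with $s_v$ --- the engine of your reduction --- disappears. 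Worse: each star is forced to be contained in one cluster, so no cross-half edge $H_{e,v}$--$H_{e,u}$ can ever be retained (that would force the non-adjacent vertices $s_v$ and $s_u$ into one clique of $G$), and no $s_v$--$(C_v\setminus\{s_v\})$ edge can be retained either; the unique feasible solution is $G'=G_c$, which needs $k+d>k$ deletions, so every constructed instance is a no-instance. (If you instead omit the half--half edges from $G$, the stars are common-edge-connected but not cliques of $G$, and no feasible solution exists at all.) If, to restore the choice, you make the $s_v$--$H_{e,v}$ edges surplus, i.e.\ remove the stars from $G_c$, then $G_c$ must cluster the halves some other way, and the accounting breaks in the opposite direction: retention opportunities, unlike the modification \emph{costs} that the scale-separated digits of $k$ control in \cref{lem:DCEEDWhardwrtk}, can be spread out. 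Concretely, with each half its own $G_c$-cluster, the target $d=\ell(\ell-1)L_2+\ell\binom{\ell-1}{2}L_2^2$ is reached by retaining one $s_v$--$H_{e,v}$ bundle at $\ell(\ell-1)$ \emph{distinct} vertices (contributing $\ell(\ell-1)L_2$) and retaining $\ell\binom{\ell-1}{2}$ \emph{intact} edge cliques $C_e$ (contributing $\ell\binom{\ell-1}{2}L_2^2$), with no clique in $G_0$ anywhere. The clique-forcing argument of \cref{lem:DCEEDWhardwrtk} works precisely because both halves of $C_e$ must migrate to the $G_c$-stars of \emph{both} endpoints; under deletion-only, that coupling is exactly what the common-edge rigidity destroys.

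For comparison, the paper proves \cref{lem:DCDEDWhardwrtd} with a gadget that turns this rigidity into an asset, reducing from \MCC: $G_c$ consists of one $2\ell$-clique $C_v$ per vertex $v$ of $G_0$ plus a singleton $v^*$, and $G$ is $G_c$ plus all edges from $v^*$ to everything and all edges between $C_u$ and $C_v$ for each edge $\{u,v\}$ of $G_0$. There the common edges form \emph{intact cliques of $G$}, so the forced atoms are exactly the $G_c$-clusters; the only way to retain surplus is to merge whole $G_c$-clusters, and a merged cluster is a clique of $G$ if and only if the corresponding vertices of $G_0$ are pairwise adjacent. Setting $d=2\ell^2+4\ell^2\binom{\ell}{2}$ separates the two scales ($2\ell$ per clique joined to $v^*$ versus $4\ell^2$ per merged pair) and forces exactly $\ell$ cliques, pairwise adjacent --- a clique in $G_0$. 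Note also that there $d$ is explicitly a function of $\ell$ alone, which a parameterized reduction with respect to $d$ requires; your sketch never fixes $d$ nor verifies this, which is a second loose end on top of the structural one above.
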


\begin{proof}
We present a parameterized reduction from \MCC. 
In \MCC, we are given an integer~$\ell$ and a graph where every vertex is colored with one of~$\ell$ colors.
The task is to find a clique of size~$\ell$ containing one vertex of each color.
\MCC is \W1-hard with respect to~$\ell$~\cite{fellows2009parameterized}. 
Let~$(G_0=(V,E),\ell)$ be an instance of \MCC. 
We construct an instance~$(G,G_c,k,d)$ of \DCDED as follows.
For every vertex~$v$ in~$G_0$, create a clique~$C_v$ with~$2\ell$ vertices in~$G_c$.
Add a special clique with one vertex~$v^*$ in~$G_0$.
For graph~$G$, first copy~$G_c$ and then add more edges as follows:
add edges between~$v^*$ and all other vertices in~$G$, and for every
edge~$\{u,v\}$ in~$G_0$, add all edges between vertices in~$C_u$ and vertices
in~$C_v$.
Set~$d=2\ell^2+4\ell^2 \binom{\ell}{2}$ and~$k=|E(G) \oplus E(G_c)|-d$.
This reduction works in polynomial time and the construction is illustrated in \cref{figure: DCDED is W1-hard wrt k exact}.

\begin{figure}[t]
\begin{center}
    \begin{tikzpicture}[line width=1pt, scale=1] 

\node[draw,circle,inner sep=1pt,fill] (v) at (0,0) {};
\node at (0,-0.4)  {$v^*$};
\node at (1.1,2.4)  {$C_u$};
\node at (1.8,2.0)  {$C_v$};
\draw[blue, dotted] (0,2) ellipse (1cm and 0.4cm);
\draw[blue, dotted, rotate around={60:(0,0)}] (0,2) ellipse (1cm and 0.4cm);
\draw[blue, dotted, rotate around={-60:(0,0)}] (0,2) ellipse (1cm and 0.4cm);
\draw[blue, dotted, rotate around={120:(0,0)}] (0,2) ellipse (1cm and 0.4cm);
\draw[blue, dotted, rotate around={-120:(0,0)}] (0,2) ellipse (1cm and 0.4cm);
\node (dots) at (0,-2) {$\dots$};

\node[draw,circle,scale=0.8] (v1) at (-0.8,2) {};
\node[draw,circle,scale=0.8] (v2) at (-0.4,2) {};
\node (v3) at (0.2,2) {$\dots$};
\node[draw,circle,scale=0.8] (v4) at (0.8,2) {};

\node[draw,circle,scale=0.8] (v5) at (1.4,1.6) {};

\draw[dashed] (v) -- (v1);
\draw[dashed] (v) -- (v2);
\draw[dashed] (v) -- (v4);
\draw[dashed] (v) -- (v5);
\draw[dashed] (v4) -- (v5);

\end{tikzpicture}
\end{center}
\caption{Illustration of the constructed instance for the proof of
\cref{lem:DCDEDWhardwrtd}.
A circle represents a clique in~$G_c$.
Circles in the same blue dotted ellipse mean that the corresponding vertices in~$G_0$ have the same color.
Dotted edges represent the additional edges in~$G$.
Vertex~$v^*$ is connected to all other vertices in~$G$.
For an edge~$\{u,v\}$ in~$G_0$, the corresponding two cliques~$C_u$ and~$C_v$
are connected in~$G$.}
\label{figure: DCDED is W1-hard wrt k exact}
\end{figure}
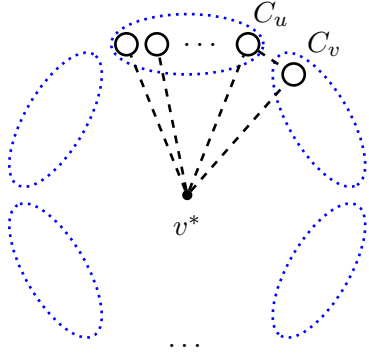

Note that~$k+d=|E(G) \oplus E(G_c)|$ and according to \cref{obs: exact modification} a solution~$G'$ for instance~$(G,G_c,k,d)$ has to delete exactly~$k$ edges from~$E(G) \oplus E(G_c)$ from~$G$, which is equivalent to adding exactly~$d$ edges from~$E(G) \oplus E(G_c)$ to~$G_c$. Next we show that there is a multicolored clique of size~$\ell$ in~$G_0$ if and only if there is a cluster graph~$G'=(V,E')$ such that~$|E' \oplus E(G)| \le k$ and~$|E' \oplus E(G_c)| \le d$.

$(\Rightarrow:)$ Suppose that there is a multicolored clique~$C_0$ of size~$\ell$ in~$G_0$, then for all vertices in~$C_0$ find the corresponding cliques in~$G_c$, and combine these~$\ell$ cliques and vertex~$v^*$ into one big clique.
Denote the resulting graph as~$G'$.
To get graph~$G'$ from~$G$, we need to delete~$|E(G) \oplus E(G_c)|-(2\ell^2+4\ell^2 \binom{\ell}{2})=k$ edges, and all these edges are in~$E(G) \oplus E(G_c)$. 
In this way we get a new cluster graph~$G'$ such that~$|E(G') \oplus E(G)| = k$ and~$|E(G') \oplus E(G_c)| = d$.

$(\Leftarrow:)$ Suppose that there is a cluster graph~$G'$ such that~$|E(G') \oplus E(G)| \le k$ and~$|E(G') \oplus E(G_c)| \le d$.
Since~$k+d=|E(G) \oplus E(G_c)|$, it has to hold that~$|E(G') \oplus E(G)| = k$ and~$|E(G') \oplus E(G_c)| = d$.
Since~$d=2\ell^2+4\ell^2 \binom{\ell}{2}$, and except for~$v^*$, every clique in~$G_c$ has~$2\ell$ vertices, so~$2\ell^2$ in~$d$ must come from adding edges between~$v^*$ and~$\ell$ cliques in~$G_c$.
Since~$G'$ is a clique, there must be edges between every pair of these~$\ell$ cliques in~$G$, which means that there is a multicolored clique of size~$\ell$ in~$G_0$.
\end{proof}

The remaining two results show \W1-hardness with respect to the distance~$d$ for \DCEMD and \DCDMD.

\begin{lem}
\label{lem:DCEMDWhardwrtd}
\DCEMD is \W1-hard with respect to the distance~$d$.
\end{lem}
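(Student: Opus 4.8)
The plan is to give a parameterized reduction from \Clique (which is \W1-hard with respect to the clique size~$\ell$), producing an instance $(G,G_c,k,d)$ of \DCEMD in which the distance bound~$d$ depends only on~$\ell$; this is exactly what makes it a valid reduction for the parameter~$d$. I would reuse the gadget philosophy of \cref{lem:DCEMDWhardwrtk}: a \emph{vertex-clique}~$C_v$ for every $v\in V(G_0)$, an \emph{edge-clique}~$C_e$ for every $e\in E(G_0)$ (carrying two distinguished vertices, one for each endpoint of~$e$), and a target cluster graph~$G_c$ whose clusters overlap these cliques in a controlled way, so that the only profitable way to increase the weight of a maximum matching in~$B(G',G_c)$ is to gather the distinguished vertices of suitable edge-cliques into new clusters. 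As in \cref{lem:DCEMDWhardwrtk}, the intended ``yes''-behaviour is that $G'$ arises from~$G$ by separating the $\binom{\ell}{2}$ edge-cliques of a clique and regrouping their distinguished vertices into $\ell$ new clusters, each aligning with a dedicated target cluster of~$G_c$.

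The crucial difference from \cref{lem:DCEMDWhardwrtk} is the roles of the two numbers. There the budget~$k$ is the parameter and the baseline distance $d_0=d_M(G,G_c)$ is huge (of order $n\ell^7$); here I want the opposite, namely $d=\poly(\ell)$ while $k$ is a large polynomial in $|V(G_0)|+|E(G_0)|$ that is generous enough to perform the intended surgery. Concretely I aim for $d=d_0-\ell(\ell-1)$, so that the instance is a ``yes''-instance exactly when the matching in $B(G',G_c)$ can be improved by $\ell(\ell-1)$ over its baseline value. Improving the matching by this amount means completely filling $\ell$ otherwise-unmatched target clusters, each of size $\ell-1$, and each such cluster can be filled only by collecting the $\ell-1$ distinguished vertices belonging to the edges incident with one selected vertex; a consistent simultaneous filling of all $\ell$ clusters forces the $\binom{\ell}{2}$ used edges to span $\ell$ pairwise-adjacent vertices, i.e.\ a clique.

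For correctness I would argue the two directions in the style of \cref{lem:DCEMDWhardwrtk}. In the forward direction, a clique $C^\ast$ of size~$\ell$ prescribes which edge-cliques to separate and how to regroup the freed vertices, yielding a cluster graph~$G'$ that stays within~$k$ edge modifications and raises the matching weight by exactly $\ell(\ell-1)$, hence $d_M(G',G_c)\le d$. In the backward direction I would use the separation of the gadget size scales (the analogue of the $\ell^7/\ell^4/\ell^2$ hierarchy of \cref{lem:DCEMDWhardwrtk}) to show that any edit not following the intended ``separate-then-regroup'' pattern cannot help the matching, so that a solution with $d_M(G',G_c)\le d$ must realise a clique of size~$\ell$ in~$G_0$.

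The hard part will be to keep the baseline distance~$d_0$ (and hence~$d$) bounded by a function of~$\ell$ \emph{while still implementing the selection of $\ell$ vertices}. In \cref{lem:DCEMDWhardwrtk} the selection is realised by matching $\ell$ large shared clusters~$D_i$ to $\ell$ of the vertex-cliques, and it is precisely this device that inflates $d_0$ to order $n\ell^7$; for the parameter~$d$ I must instead align almost all vertices perfectly at baseline and leave only a $\poly(\ell)$-sized ``fillable'' region, so a genuinely more distance-economical selection gadget is needed. The second delicate point is calibrating~$k$: it must be large enough to carry out the clique surgery yet small enough to forbid shortcuts---such as reverting~$G$ toward~$G_c$ or merging vertex-cliques directly---that would lower the matching distance without witnessing a clique. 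Making all these size inequalities hold simultaneously, so that the only route to $d_M(G',G_c)\le d$ is the intended clique construction, is the technical crux of the proof.
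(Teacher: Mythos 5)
Your proposal is not a proof but a plan, and its load-bearing component is explicitly left unbuilt: you never specify the instance $(G,G_c,k,d)$, and you yourself identify the crux---``a genuinely more distance-economical selection gadget is needed''---without supplying it. The difficulty is worse than a calibration issue. In the gadget scheme of \cref{lem:DCEMDWhardwrtk}, the baseline distance $d_0=d_M(G,G_c)$ is inflated not only by the $\ell$ selection clusters $D_i$ (order $n\ell^7$) but also by the edge gadgets themselves: every clique $C_e$ carries distinguished vertices that are misaligned with $G_c$ at baseline, contributing on the order of $2|E(G_0)|$ to $d_0$. Since your target is $d=d_0-\ell(\ell-1)$ with $d$ bounded by a function of $\ell$, you need $d_0\le \poly(\ell)$, which means \emph{every} vertex outside a $\poly(\ell)$-sized region must be perfectly aligned at baseline---yet the fill-a-cluster-of-size-$(\ell-1)$ mechanism must remain available for the edges incident to \emph{any} of the $n$ vertices of $G_0$, since the clique is not known in advance. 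Reconciling these two demands is precisely the open technical problem of your proposal, and nothing in your sketch indicates how to do it; as written, the reduction does not exist.

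The paper's proof takes an entirely different route that dissolves this tension rather than resolving it. It reduces from \CliqueReg (\W1-hard in the clique size $k_0$), sets $G=G_0$ itself (so $G$ is \emph{not} a cluster graph), sets $G_c$ to be the complete graph, and puts $d=k_0$. Because $G_c$ has a single cluster, the matching-based distance of any cluster graph $G'$ to $G_c$ is simply $n$ minus the size of the largest cluster of $G'$; hence $d_M(G',G_c)\le d$ forces a cluster of size at least $n-k_0$, and no selection gadget is needed at all. The budget $k=\frac{n(n-1-r)}{2}-k_0(n+k_0-2r-2)$ is then calibrated using the $r$-regularity of $G_0$, and a monotonicity argument on the edit-cost function $f(x)$ shows that meeting the budget is possible exactly when the $k_0$ vertices left outside the big cluster form a clique. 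If you want to salvage your approach, you would have to either find the alignment-preserving edge gadget described above or abandon the requirement that $G$ be a cluster graph and let the budget constraint (rather than the distance) do the combinatorial work---which is what the paper does.
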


\begin{proof}
We present a parameterized reduction from \CliqueReg, where given a regular graph~$G^*=(V,E)$ with vertex degree~$r$ with~$r<\frac{n}{2}$, and a number~$k^*$ with~$k^*\le r$, we are asked to decide whether $G^*$ contains a clique of size $k^*$. 
\CliqueReg is \W1-hard with respect to~$k^*$~\cite{cai2008parameterized}.

Given an instance~$(G_0, k_0, r)$ of \CliqueReg, we construct an instance~$(G, G_c, d, k)$ of \DCEMD as follows.
Graph~$G$ is the same as~$G_0$ and graph~$G_c=(V,\binom{V}{2})$ is a complete graph.
Set~$d=k_0$ and~$k=\frac{n(n-1-r)}{2}-k_0(n+k_0-2r-2)$.
The construction can trivially be done in polynomial time.
In the following we show that there is a clique of size~$k_0$ in~$G_0$ if and only if $(G, G_c, d, k)$ is a yes-instance of \DCEMD.

\emph{($\Rightarrow$):}
Assume that there is a clique of size~$k_0$ in~$G_0$; we construct a graph~$G'$ which consists of two cliques, where one of them contains the vertices from the clique of size~$k_0$ in~$G_0$; the other, denoted by~$C_\text{max}$, contains the remaining vertices and has size~$n-k_0$. 
Next we compute~$|E(G) \oplus E(G')|$, which consists of two parts: 
\begin{compactitem}
\item $D(k_0)$: the set of edges between vertices in~$C_\text{max}$ and the remaining vertices, and
\item $A(k_0)$: the set of edges between vertices in~$C_\text{max}$.
\end{compactitem}

Since the vertices outside~$C_\text{max}$ form a clique, every such vertex has~$r-k_0+1$ edges connected to vertices in~$C_\text{max}$.
Thus~$|D(k_0)|=k_0(r-k_0+1)$.
To determine~$|A(k_0)|$, we count the sum of the degrees of vertices in~$C_\text{max}$. 
Before adding edges to~$C_\text{max}$, the sum is~$(n-k_0)r$.
After adding edges the sum should be~$(n-k_0)(n-k_0-1)+|D(k_0)|$.
So the number of edges which need to added to~$C_\text{max}$ is
\begin{equation*}
|A(k_0)|=\frac{(n-k_0)(n-k_0-1)+|D(k_0)|-(n-k_0)r}{2}.
\end{equation*}

Then we get the size of the modification set for~$G'$:
\begin{equation*}
|E(G) \oplus E(G')|=|D(k_0)|+|A(k_0)|=\frac{n(n-1-r)}{2}-k_0(n+k_0-2r-2)=k.
\end{equation*}

\emph{($\Leftarrow$):}
To simplify the following proof, we define three functions:
\begin{compactitem}
\item $g_1(x):=x(r-x+1)$,
\item $g_2(x):=\frac{(n-x)(n-x-1)+g_1(x)-(n-x)r}{2}$, and
\item $f(x):=g_1(x)+g_2(x)=\frac{n(n-1-r)}{2}-x(n+x-2r-2)$.
\end{compactitem}
Since~$r <\frac{n}{2}$, we have that $f(x)$ is monotonically decreasing and~$f(k_0)=k$.

Suppose that there is no clique of size~$k_0$ in~$G_0$.
We need to show that there is no cluster graph~$G'$ satisfying both~$|E(G) \oplus E(G')| \le k$ and~$d(G_c, G') \le d$. 
Suppose towards a contradiction that there is such a cluster graph~$G'$.
Denote the largest cluster in~$G'$ as~$C_{\max}$. 
Since~$d_M(G_c, G') \le d$, we have that~$|V(C_\text{max})| \geq n-k_0$.
Define
\begin{compactitem}
\item $D$: the set of edges between vertices in~$C_\text{max}$ and the remaining vertices, and
\item $A$: the set of edges between vertices in~$C_\text{max}$.
\end{compactitem}
To get the clique~$C_{\max}$ from~$G$, we have to delete all edges in~$D$ and add all edges in~$A$, thus~$|E(G) \oplus E(G')| \ge |D| + |A|$.
We distinguish the following two cases: 

\begin{compactdesc}
\item{Case 1: $|C_\text{max}| = n-k_0$}.
Every vertex outside~$C_{\max}$ has at least~$r-k_0+1$ edges connected to vertices in~$C_{\max}$, and since there is no clique of size~$k_0$ in~$G_0$, among all vertices outside~$C_\text{max}$, there is at least one vertex which has more than~$r-k_0+1$ edges connected to vertices in~$C_\text{max}$. 
This means that~$|D|>g_1(k_0)$ and~$|A|>g_2(k_0)$. 
Thus, we have:
\begin{equation*}
  |E(G) \oplus E(G')|  \ge |D| + |A| > g_1(k_0) + g_2(k_0) = f(k_0) =k.
\end{equation*}

\item{Case 2: $|C_\text{max}| > n-k_0$}.

Suppose that $|C_\text{max}|=n-k'$, where $k'<k_0$ is the number of all vertices outside~$C_\text{max}$. 
Now we have $|D|\ge g_1(k')$ and $|A| \ge g_2(k')$, and
\begin{equation*}
|E(G) \oplus E(G')| \ge |D| + |A| \ge g_1(k') + g_2(k')= f(k') > f(k_0) = k.
\end{equation*}
The last inequality holds since~$f(k)$ is monotonically decreasing.
\end{compactdesc}
In both cases we have that there is no solution for instance~$(G, G_c, d, k)$.
\end{proof}

The above reduction cannot be used to show \W1-hardness with respect to~$d$ for
\DCDMD since both edge insertions and edge deletions are needed.
Next we show that \DCDMD remains \W1-hard with respect to~$d$.
\begin{lem}
\label{lem:DCCMDWhardwrtd}
\DCDMD{} is \W1-hard with respect to the distance~$d$.
\end{lem}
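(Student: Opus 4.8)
The plan is to give a parameterized reduction from \MCC, which is \W1-hard with respect to the number of colors~$\ell$; as in the proof of \cref{lem:DCDEDWhardwrtd}, I assume the input graph~$G_0=(V,E)$ is $\ell$-partite, so that every clique of size~$\ell$ in~$G_0$ uses exactly one vertex per color. Since only deletions are allowed, the central idea is to build an instance in which the distance can be kept below the target only by \emph{retaining} (rather than deleting) a prescribed number of edges, and in which those retained edges can be ``hidden'' cheaply enough inside a single cluster exactly when a clique of size~$\ell$ is available. This is the matching-distance analogue of the edge-distance construction of \cref{lem:DCDEDWhardwrtd}.

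Concretely, I would blow up each vertex~$v\in V$ into a clique~$C_v$ of size~$2\ell$, let~$G_c$ be the disjoint union of these~$n$ blocks (so the clusters of~$G_c$ are precisely the~$C_v$), and obtain~$G$ from~$G_c$ by adding, for every edge~$\{u,v\}\in E$, all~$(2\ell)^2$ edges between~$C_u$ and~$C_v$. I would set the distance bound to~$d=2\ell(\ell-1)$ and choose the budget to be exactly the number of deletions needed to realize the ``clique solution'', namely~$k=|E(G)|-|E(G_c)|-\binom{\ell}{2}(2\ell)^2$. The coupling to cliques of~$G_0$ comes from the fact that a cluster of any solution~$G'\subseteq G$ can consist of (sub)blocks of pairwise adjacent vertices only, because~$C_u$ and~$C_v$ are completely joined in~$G$ if and only if~$\{u,v\}\in E$.

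For the forward direction, given a multicolored clique~$\{v_1,\dots,v_\ell\}$ I would merge the blocks~$C_{v_1},\dots,C_{v_\ell}$ into one clique of size~$2\ell^2$ (which is indeed a clique of~$G$), keep every other block as its own cluster, and delete all remaining inter-block edges; a direct count shows that exactly~$k$ edges are deleted and that the resulting~$G'$ satisfies~$d_M(G',G_c)=2\ell(\ell-1)=d$. The converse is the crux. Here one first uses the budget constraint to show that any solution~$G'$ must \emph{retain} at least~$\binom{\ell}{2}(2\ell)^2$ inter-block edges, and then bounds the number of retained inter-block edges by the matching-based distance: writing~$k_{a,v}=|K_a\cap C_v|$ for the part of block~$C_v$ lying in cluster~$K_a$ of~$G'$, the matching distance is at least~$\sum_a\bigl(|K_a|-\max_v k_{a,v}\bigr)$, whereas the retained inter-block edges equal~$\sum_a\sum_{\{u,v\}\in E}k_{a,u}k_{a,v}$. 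The goal is to prove that, subject to each part having size at most~$2\ell$ and the distance being at most~$d$, this retained quantity can reach~$\binom{\ell}{2}(2\ell)^2$ only when a single cluster spans~$\ell$ completely joined full blocks, i.e.\ an~$\ell$-clique of~$G_0$.

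I expect this last extremal step to be the main obstacle. The maximum-weight matching is a global optimum, and \emph{partial} block merges can recover matching weight by placing the leftover block vertices into separate clusters, so the naive per-cluster estimate~$\rho_a\le 2\ell\,\psi_a+\tfrac{1}{2}\psi_a^{2}$ (with~$\psi_a=|K_a|-\max_v k_{a,v}$) is too weak to pin down a clique of size exactly~$\ell$. The real work therefore lies in choosing the block size and the thresholds~$d,k$ so that the retained-edges-per-distance ratio provably attains the required value only for merges of~$\ell$ mutually adjacent full blocks, and in ruling out the partial-merge and many-small-cluster configurations through a convexity argument on the parts~$k_{a,v}$. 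Once this bound is established, any solution of cost~$\le k$ and distance~$\le d$ yields~$\ell$ pairwise adjacent vertices of~$G_0$, hence a multicolored clique, which completes the reduction and shows that \DCDMD is \W1-hard with respect to~$d$.
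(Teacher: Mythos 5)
Your proposal is a construction plus an honestly flagged hole, not a proof: the backward (soundness) direction is exactly where you stop. You yourself identify the crux --- showing that any deletion-only solution retaining $\binom{\ell}{2}(2\ell)^2$ inter-block edges within matching distance $2\ell(\ell-1)$ must merge $\ell$ pairwise adjacent \emph{full} blocks --- and then concede that your per-cluster estimate $\rho_a\le 2\ell\,\psi_a+\tfrac{1}{2}\psi_a^{2}$ is too weak and that the ``real work'' (a convexity/extremal argument ruling out partial merges and mixed configurations, possibly after re-tuning the block size and thresholds) remains to be done. That missing step is not a routine verification: the matching-based distance has no analogue of the tight coupling $k+d=|E(G)\oplus E(G_c)|$ that makes the edge-based construction of \cref{lem:DCDEDWhardwrtd} (which you are porting) go through via \cref{obs: exact modification}, so nothing in your setup currently forces the adversary's retained-edge budget to be spent on full-block clique merges rather than on a combination of smaller or partial merges. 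Until that extremal lemma is proved, the reduction establishes only the easy completeness direction, and the \W1-hardness claim is unsupported.

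For comparison, the paper sidesteps this difficulty entirely with a different and much lighter gadget: a parameterized reduction from \CliqueReg in which $G$ is the input graph plus a universal vertex $v_0$ and a pendant private neighbor $v_i'$ for every vertex $v_i$, while $G_c$ consists of the singleton $\{v_0\}$ and the pairs $\{v_i,v_i'\}$. Since only deletions are allowed, each induced path $v_i'\,$--$\,v_i\,$--$\,v_0$ forces deleting $\{v_i,v_i'\}$ or $\{v_i,v_0\}$, so every $v_i$ ends up either with its pendant or with $v_0$; the matching distance then \emph{equals} the number $p$ of vertices joining $v_0$'s cluster (giving $p\le d=\ell$), and the budget $k=n+\tfrac{rn}{2}-\binom{\ell}{2}$ forces $\binom{p}{2}\ge\binom{\ell}{2}$, i.e.\ $p\ge\ell$, because the vertices grouped with $v_0$ must be pairwise adjacent in $G_0$. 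Both directions become elementary counting, with no extremal analysis over partial block merges. If you want to salvage your blow-up construction, the lemma you would need to prove is precisely the one you postponed; the paper's pendant-vertex trick is the standard way to avoid needing it.
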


\begin{proof}
We present a parameterized reduction from \CliqueReg, where given a regular graph~$G^*=(V,E)$ with vertex degree~$r$ with~$r<\frac{n}{2}$, and number~$k^*$ with~$k^*\le r$, we are asked to decide whether $G^*$ contains a clique of size $k^*$. 
\CliqueReg is known to be \W1-hard with respect to~$k^*$~\cite{cai2008parameterized}.
Given an instance~$(G_0, \ell,r)$ of \CliqueReg, where~$G_0$ is a regular graph with vertex degree~$r$, we construct an instance~$(G, G_c, d, k)$ of \DCDMD as follows.
The construction is illustrated in \cref{figure: DCDMD is W1-hard wrt d}.

\begin{figure}[t]
\begin{center}
    \begin{tikzpicture}[line width=1pt, scale=1]

\node[draw,circle,scale=0.8] (v0) at (0,0) {};
\node[draw,circle,scale=0.8] (v1) at (-2.5,-1.5) {};
\node[draw,circle,scale=0.8] (v3) at (-0.5,-1.5) {};
\node (v2) at (-1.5,-1.5) {$\dots$};
\node[draw,circle,scale=0.8] (v4) at (0.5,-1.5) {};
\node (v5) at (1.5,-1.5) {$\dots$};
\node[draw,circle,scale=0.8] (v6) at (2.5,-1.5) {};

\node[draw,circle,scale=0.8] (v1') at (-2.5,-2.5) {};
\node[draw,circle,scale=0.8] (v3') at (-0.5,-2.5) {};
\node (v2') at (-1.5,-2.5) {$\dots$};
\node[draw,circle,scale=0.8] (v4') at (0.5,-2.5) {};
\node (v5') at (1.5,-2.5) {$\dots$};
\node[draw,circle,scale=0.8] (v6') at (2.5,-2.5) {};

\node (v_0) at (0,0.5) {$v_0$};
\node (vi) at (-3,-1.5) {$v_i$};
\node (v') at (-3,-2.5) {$v_i'$};
\node[blue] (vi) at (3.5,-1.5) {$G_0$};

\draw (v0) -- (v1);
\draw (v0) -- (v2);
\draw (v0) -- (v3);
\draw (v0) -- (v4);
\draw (v0) -- (v5);
\draw (v0) -- (v6);

\draw (v1) -- (v1');
\draw (v3) -- (v3');
\draw (v4) -- (v4');
\draw (v6) -- (v6');

\draw[rounded corners=2mm, draw=blue] (-3.5,-1.8) -- (-3.5,-1.2) -- (3,-1.2) -- (3,-1.8) -- cycle;

\end{tikzpicture}
\end{center}
\caption{
	Illustration of constructed graph~$G$ in the proof of \cref{lem:DCCMDWhardwrtd}. Edges in~$G_0$ are not shown.
	Every vertex~$v_i$ is connected to its copy~$v_i'$ and the universal
	vertex~$v_0$.
}
\label{figure: DCDMD is W1-hard wrt d}
\end{figure}
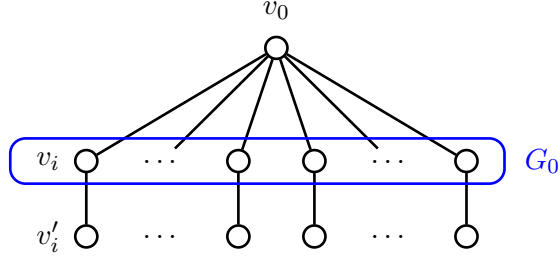

Let~$\{v_1,v_2,\dots,v_n\}$ be the vertex set of~$G_0$.
For graph~$G$, we first copy the whole graph~$G_0$.
Then we add a universal vertex and a private neighbor for each original vertex: we add a universal vertex~$v_0$ and add an edge between~$v_0$ and every vertex~$v_i$ in~$G_0$, and for every vertex~$v_i$ in~$G_0$, we add vertex~$v_i'$ and add an edge between~$v_i$ and~$v_i'$.
The graph~$G_c$ has the same vertex set as~$G$.
Moreover, graph~$G_c$ contains edges between~$v_i$ and~$v_i'$ for all~$1 \le i \le n$.
That is, $G_c$ consists of~$n+1$ cliques: $C_0$ with $V(C_0)=\{v_0\}$ and~$C_i$ with~$V(C_i)=\{v_i,v_i'\}$ for~$1 \le i \le n$.
Set~$k=n+\frac{r n}{2} -\binom{\ell}{2}$ and~$d=\ell$.
Next we show that there is clique of size~$\ell$ in~$G_0$ if and only if the constructed instance~$(G, G_c, d, k)$ is a yes-instance of \DCDMD.

\emph{($\Rightarrow$):} Assume that there is clique~$C^*$ of size~$\ell$ in~$G_0$.
Then in~$G$ we first delete edges~$\{v_i,v_i'\}$ for all~$v_i \in V(C^*)$ and
delete edges~$\{v_0,v_i\}$ for all~$v_i \in \{v_1,v_2, \dots, v_n\} \setminus
V(C^*)$.
Second, delete all edges between vertices in~$\{v_1,v_2, \dots, v_n\}$ except for edges between vertices in~$C^*$. 
We delete $n$~edges in the first step and~$\frac{r n}{2} -\binom{\ell}{2}$ edges in the second step, since~$G_0$ is a regular graph and~$C^*$ is a clique.
By deleting these~$n+\frac{r n}{2} -\binom{\ell}{2}=k$ edges, we get a cluster graph~$G'$ which contains~$n+1$ cliques:~$C_i'$ with~$V(C_i')=\{v_i'\}$ for~$v_i \in V(C^*)$,~$C_j'$ with~$V(C_j')=\{v_j,v_j'\}$ for~$v_i \in \{v_1,v_2, \dots, v_n\} \setminus V(C^*)$, and~$C_0'$ with~$V(C_0')=\{v_0\} \cup V(C^*)$.
Thus~$d_M(G',G_c)=\ell$.

\emph{($\Leftarrow$):} Assume that~$(G, G_c, d, k)$ is a yes-instance of \DCDMD and let~$G'$ be a solution.
Since we can only delete edges, for every pair of edges~$v_iv_i'$ and~$v_iv_0$
for~$v_i \in \{v_1, v_2,\dots, v_n\}$, we have to delete one of them
because~$\{v_0,v_i'\} \not\in E(G)$.
This means that for every~$v_i \in \{v_1,v_2, \dots, v_n\}$ vertex~$v_i$ is
either in the same clique with~$v_i'$ or with~$v_0$.
Suppose that in~$G'$ there are~$p \le \ell$ vertices from~$\{v_1,v_2, \dots, v_n\}$ which are in the same clique with~$v_0$.
Then these~$p$ vertices must form a clique~$C'$ in~$G_0$.
To get~$G'$, we have to delete edges~$\{v_i,v_i'\}$ for all~$v_i \in V(C')$ and
edge~$\{v_i,v_0\}$ for every vertex~$v_i \in \{v_1,v_2, \dots, v_n\} \setminus
V(C')$.
This costs~$n$~edge deletions.
Moreover, we have to delete all edges between vertices in~$\{v_1, v_2,\dots, v_n\}$ except for edges between vertices in~$C'$. 
This costs~$\frac{r n}{2} -\binom{p}{2}$ edge deletions.
Overall we have 
\begin{equation*}
|E(G) \oplus E(G')|=n+\frac{r n}{2} -\binom{p}{2}.
\end{equation*} 
Since~$G'$ is a solution, we have that~$|E(G) \oplus E(G')| \le k=n+\frac{r n}{2} -\binom{\ell}{2}$.
Hence, $p \ge \ell$ and~$G_0$ contains a clique of size~$\ell$.
\end{proof}

We now have shown all intractability results stated in \cref{thm:Whard}.

\section{Fixed-Parameter Tractability Results}\label{sec:fpt}

In this section we complement the hardness results of \cref{sec:hardness} by identifying tractable cases for the considered variants of \DCE. 
We first show that all problem variants admit a polynomial kernel for the combination of the budget~$k$ and the distance~$d$. Then we present further \FPT-results with respect to single parameters.

\subsection{Polynomial Kernels for the Combined Parameter~\boldmath$(k+d)$}\label{ssec:kernelization}

In this section we present polynomial kernels with respect to the parameter
combination~$(k+d)$ for all considered variants of \DCE:
Formally, we prove the following theorem.
\begin{thm}
\label{thm:polykernel}
The following problems admit an $O(k^2+d^2)$-vertex kernel:
\begin{compactitem}
\item \DCEMD,
\item \DCDMD, and
\item \DCCMD.
\end{compactitem}

\noindent The following problems admit an $O(k^2+k\cdot d)$-vertex kernel:
\begin{compactitem}
\item \DCEED,
\item \DCDED, and
\item \DCCED.
\end{compactitem}
All kernels can be computed in $O(|V|^3)$ time.
\end{thm}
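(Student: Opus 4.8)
The plan is to devise a handful of polynomial-time data reduction rules, apply them exhaustively, and then show that any reduced instance whose vertex count exceeds the claimed bound must be a no-instance (and can be replaced by a trivial no-instance). I would phrase the rules uniformly for all six variants and separate the distance-dependent bookkeeping only at the very end, where the threshold $\tau$ controlling the retained cluster size differs: $\tau=O(k+d)$ for the matching-based variants (\DCEMD, \DCDMD, \DCCMD) and $\tau=O(k)$ for the edge-based variants (\DCEED, \DCDED, \DCCED). Throughout I keep in mind that $G$ need not be a cluster graph, so the rules must be stated structurally.

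The backbone is a normalization lemma. Call a cluster $C$ of $G_c$ \emph{clean} if $G[C]$ is an isolated clique in $G$ (equivalently, no vertex of $C$ is incident to an edge of $E(G)\oplus E(G_c)$). First I would show that any solution $G'$ can be altered, without increasing $|E(G)\oplus E(G')|$ or $\dist(G',G_c)$ and while preserving the $E(G')\subseteq E(G)$ / $E(G)\subseteq E(G')$ constraints, so that it leaves every clean cluster untouched: pull each clean $C$ out as its own cluster, so that on all edges incident to $C$ the modified graph agrees with both $G$ and $G_c$, whence neither budget nor distance can grow and, for $d_M$, the matched pair $(C,C)$ contributes full weight $|C|$. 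This justifies \textbf{Reduction Rule~A:} delete every clean cluster. For $d_E$ this changes nothing; for $d_M$ it decreases both $|V|$ and the optimal matching weight $W$ by $|C|$, leaving $d_M=|V|-W$ invariant.

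The key step is bounding the surviving instance. After Rule~A every remaining cluster of $G_c$ is \emph{dirty} (contains a vertex incident to $E(G)\oplus E(G_c)$). Arguing existentially from a hypothetical solution $G'$, each dirty cluster is either reproduced exactly by $G'$, in which case $G'$ spends at least one of its $\le k$ edits on it (so there are $O(k)$ such clusters), or it is not reproduced, in which case it carries at least one of the $\le d$ moved vertices (so there are $O(d)$ such clusters); hence at most $O(k+d)$ dirty clusters survive. Within a fixed dirty cluster I would partition the vertices into classes of \emph{identical} vertices, that is, vertices with the same closed neighborhood in $G$ that lie in the same cluster of $G_c$, and apply \textbf{Reduction Rule~B:} shrink every identical class to at most $\tau$ vertices. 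Correctness of Rule~B rests on the observation that once an identical class has more than $k+1$ vertices, its members are forced into a common cluster of any solution (splitting them apart costs more than $k$ edge deletions), so the surplus members are interchangeable and redundant with the same $k$ and $d$; the extended and restricted solutions respect the deletion and completion constraints precisely because identical vertices share their neighborhood in $G$.

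For the final counting, in the matching case each dirty cluster keeps one ``aligned'' identical class (the one matched to it, shrunk to $\tau=O(k+d)$), while for every \emph{other} identical class a yes-instance must, for each of its vertices, either absorb it into the matched cluster at the cost of $\ge 1$ edit (globally $\le 2k$ vertices) or move it ($\le d$ vertices); summing gives $O((k+d)\cdot\tau)=O(k^2+d^2)$ vertices. In the edge case I would instead invoke $|E(G)\oplus E(G_c)|\le k+d$ (triangle inequality) to bound the affected vertices by $2(k+d)$ directly and to argue that the budget alone—edge costs scale with cluster size—forces the clean identical class of each dirty cluster to stay essentially intact, so it may be shrunk to $\tau=O(k)$, yielding $O((k+d)\cdot k)=O(k^2+k\cdot d)$ vertices; otherwise reject. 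The two anticipated obstacles are (i) verifying that Rule~B preserves the answer simultaneously for all three operations and both distance measures, keeping $k$ and $d$ unchanged, and (ii) pinning down the two thresholds, which is exactly where the edge- and matching-based bounds diverge: since $d_M$ small does \emph{not} bound $|E(G)\oplus E(G_c)|$ (moving a single vertex between two large clusters is cheap for $d_M$ but expensive in edges), the matching case only affords the weaker $O(k^2+d^2)$ bound with the larger $O(k+d)$ threshold. Computing identical classes, identifying clean clusters, and the matchings underlying $d_M$ all fit within $O(|V|^3)$ time, giving the stated running time.
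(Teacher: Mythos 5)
Your route is genuinely different from the paper's: the paper first applies the classical $P_3$-based \CE{} rules to shrink the part of~$G$ not covered by isolated cliques to $k^2+2k$ vertices, then bounds the \emph{number} of remaining isolated cliques of~$G$ and their \emph{size} (adjusting~$d$ explicitly when a large clique is deleted), whereas you work with twin classes and charge everything against the dirty clusters of~$G_c$; your Rule~A coincides with the paper's rule removing cliques that are isolated in both~$G$ and~$G_c$. For the \emph{matching-based} variants your plan is essentially workable: a class of more than $\max(k+1,d)$ true twins lying in one $G_c$-cluster~$D$ is forced into the single solution cluster $N[K]$, that cluster is forced to be matched to~$D$, and then surplus twins contribute~$0$ to~$d_M$, so shrinking preserves the answer with $k,d$ unchanged (your counts of dirty clusters and of non-aligned vertices have small slips, e.g.\ a non-reproduced dirty cluster may only witness a moved vertex in the symmetric difference with its matched partner, but constant-factor charging repairs this).

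The genuine gap is in the edge-based half: Reduction Rule~B, as stated (shrink \emph{every} identical class, keeping $k$ and $d$), is unsound for~$d_E$, and its justification is exactly where the reasoning breaks. Forced-togetherness does not make surplus twins redundant, because each surplus twin carries edges that are counted by the edge-based distance. Concretely, let $G=K_5$ on $K\cup\{x\}$ with $|K|=4$, let $G_c$ be the disjoint union of the clique~$K$ and the singleton~$\{x\}$, and set $k=1$, $d=3$. Then $K$ is one identical class (true twins of~$G$ inside one $G_c$-cluster). The only cluster graph within budget~$1$ of~$K_5$ is $K_5$ itself (this holds for editing, deletion and completion alike), and $d_E(K_5,G_c)=4>d$, so this is a no-instance; it also passes your triangle-inequality filter, since $|E(G)\oplus E(G_c)|=4=k+d$. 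Shrinking~$K$ to $\tau=k+2=3$ vertices yields $G=K_4$, $G_c=K_3\uplus K_1$ with the same $k,d$, and now $G'=G$ has cost~$0$ and distance~$3\le d$: a yes-instance. The same scaling works for any threshold $\tau$ bounded in~$k$ alone, which is what you propose for the edge-based variants. The rule can only be applied to \emph{clean} classes (those whose closed $G$-neighborhood equals their $G_c$-cluster, so the forced cluster coincides with the $G_c$-cluster and contributes $0$ to $d_E$ regardless of its size) --- which is in fact all your final edge-based counting needs, since non-clean vertices are exactly the affected ones and are already bounded by $2(k+d)$ --- or else shrinking must be accompanied by an explicit adjustment of~$d$, which is precisely what the paper's rule for large isolated cliques does (it removes the clique and decreases~$d$ by its computable contribution to the edge-based distance). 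As written, however, the blanket Rule~B with unchanged $k,d$ flips answers, so the edge-based kernels do not follow from your argument.
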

We describe polynomial-time data reduction rules that each take an instance~$(G=(V,E),G_c=(V,E_c),k,d)$ as input and output a reduced instance.
We say that the data reduction rule is \emph{correct} if the reduced instance is a yes-instance if and only if the original instance is a yes-instance (of the corresponding problem variant).
A data reduction rule works for all problem variants that fit a given restriction.
For example, the restriction Editing/Deletion (given for \cref{rrule:heavyedge:editing}) indicates the problems: \DCEED, \DCEMD, \DCDED, and \DCDMD.
If no restriction is given, then the data reduction rules work for all problem variants.
In the correctness proof of each reduction rule, we assume that all previous rules are not applicable.

The first rule formalizes an obvious constraint on the solvability of the instance (for all problem variants). The correctness of this rule is obvious.
\rrulecounter
\begin{rrule}\label{rrule:trivial}
If $k<0$ or $d<0$, then output NO.\footnote{Formally, this does not fit the
definition of a data reduction rule, but we can assume that instead of NO the rule outputs a trivial no-instance of constant size.}
\end{rrule}

We next use some well-known reduction rules for classical \CE~\cite{GGHN05} to get a graph which consists of isolated cliques plus one vertex set of size~$k^2+2k$ that does not contain any isolated cliques. 
These rules remove edges that are part of~$k+1$ induced~$P_3$s and add edges between non-adjacent vertex pairs that are part of~$k+1$ induced~$P_3$s. 
The correctness proofs are straightforward adaptations of the correctness proofs of these rules for classical \CE.
The reason we use these data reduction rules instead of rules used for
linear-vertex kernels for classical \CE~\cite{cao_cluster_2012,chen20122k,Guo09}
is that the rules we use do not eliminate any possible solutions.
Thus, the presented rules perform edge edits that are provably part of every optimal edge modification set.

\rrulecounter
\begin{rrulev}[Editing/Deletion]\label{rrule:heavyedge:editing}
	If there are $k+1$ induced $P_3$s in $G$ that contain a common edge $\{u,v\}\in E$, then remove that edge from $E$ and decrease $k$ by one.
\end{rrulev}
\begin{rrulev}[Completion]\label{rrule:heavyedge:completion}
	If there are $k+1$ induced $P_3$s in $G$ that contain a common edge $\{u,v\}\in E$, then output NO.
\end{rrulev}

\rrulecounter
\begin{rrulev}[Editing/Completion]\label{rrule:heavynonedge:editing}
	If there are $k+1$ induced $P_3$s in $G$ that contain a common non-edge $\{u,v\}\notin E$, then add that edge to $E$ and decrease $k$ by one.
\end{rrulev}
\begin{rrulev}[Deletion]\label{rrule:heavynonedge:deletion}
	If there are $k+1$ induced $P_3$s in $G$ that contain a common non-edge $\{u,v\}\notin E$, then output NO.
\end{rrulev}
\begin{lem}
	Reduction Rules~\ref{rrule:heavyedge:editing}, \ref{rrule:heavyedge:completion}, \ref{rrule:heavynonedge:editing}, and \ref{rrule:heavynonedge:deletion} are correct.
\end{lem}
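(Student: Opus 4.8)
The plan is to reduce the correctness of all four rules to a single \emph{forcing} statement: whenever a pair $\{u,v\}$ lies in $k+1$ common induced $P_3$s of $G$, every cluster graph $G'$ with $|E(G)\oplus E(G')|\le k$ must modify $\{u,v\}$ --- deleting it in the edge case of \cref{rrule:heavyedge:editing,rrule:heavyedge:completion}, and adding it in the non-edge case of \cref{rrule:heavynonedge:editing,rrule:heavynonedge:deletion}. Once this is established, each of the four variants follows as a short corollary, so I would prove the forcing statement first and then dispatch the variants.

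To prove forcing in the edge case, I first observe that an induced $P_3$ containing the edge $\{u,v\}$ is determined by its third vertex $w$: since $\{u,v\}\in E$, the triple $\{u,v,w\}$ induces a $P_3$ exactly when precisely one of $\{u,w\},\{v,w\}$ is an edge. Hence the $k+1$ given $P_3$s correspond to $k+1$ pairwise distinct third vertices $w_1,\dots,w_{k+1}$. Any solution $G'$ must destroy each of these $P_3$s, i.e.\ for every $i$ it modifies at least one of $\{u,v\}$, $\{u,w_i\}$, $\{v,w_i\}$. If $G'$ left $\{u,v\}$ an edge, then for every $i$ it would modify one of $\{u,w_i\},\{v,w_i\}$; as the $w_i$ are distinct, these are $k+1$ pairwise distinct modified pairs, forcing $|E(G)\oplus E(G')|\ge k+1>k$, a contradiction. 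Thus $\{u,v\}\notin E(G')$. The non-edge case is symmetric: since $\{u,v\}\notin E$, the triple $\{u,v,w\}$ is an induced $P_3$ exactly when both $\{u,w\},\{v,w\}\in E$, again giving $k+1$ distinct third vertices, and if $\{u,v\}$ is not added then one of $\{u,w_i\},\{v,w_i\}$ must be deleted for each $i$, exceeding the budget; hence $\{u,v\}\in E(G')$.

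With forcing in hand I would settle the four rules. For \cref{rrule:heavyedge:editing,rrule:heavynonedge:editing} let $\tilde G$ be $G$ with $\{u,v\}$ deleted, respectively added, and lower the budget to $k-1$. Because $G$ and $\tilde G$ differ in exactly one pair, $|E(\tilde G)\oplus E(G')|\le|E(G)\oplus E(G')|+1$ for every $G'$, which yields the ``if'' direction unconditionally (any solution of the reduced instance is also one of the original). For the ``only if'' direction I invoke forcing: every solution $G'$ of the original instance already performs the edit, so $\{u,v\}$ drops out of the symmetric difference and $|E(\tilde G)\oplus E(G')|=|E(G)\oplus E(G')|-1\le k-1$. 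In both directions the witnessing cluster graph $G'$ is literally the same, the subset constraints of the deletion/completion variants remain consistent because the forced edit is exactly the operation those variants permit, and the second constraint $\dist(G',G_c)\le d$ is untouched since neither $G_c$ nor $G'$ changes. For \cref{rrule:heavyedge:completion,rrule:heavynonedge:deletion} forcing says every solution must delete, respectively add, $\{u,v\}$, but the constraint $E(G)\subseteq E(G')$, respectively $E(G')\subseteq E(G)$, forbids precisely this; hence no solution exists and outputting NO is correct.

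The point to get right is the preservation of the distance bound, and this is where the formulation of the forcing statement matters most. The equivalence works because the edit is part of \emph{every} solution bounded by $k$, not merely of some optimal edge-modification set, so the set of admissible $G'$ is unchanged and the bound $\dist(G',G_c)\le d$ carries over verbatim. This is exactly the property highlighted before the lemma and the reason these rules are preferable to the standard linear-kernel rules for \CE, which may identify or delete vertices and thereby silently alter the distance to $G_c$. I would therefore state the forcing claim for all solutions independently of $d$, so that it can be quoted directly in the equivalence argument of each variant.
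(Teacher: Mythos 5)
Your proposal is correct and follows essentially the same route as the paper: the core of both arguments is the forcing claim that any solution within budget~$k$ must itself edit the pair $\{u,v\}$, proved via the $k+1$ distinct third vertices of the common $P_3$s, after which the two editing rules follow by adjusting the budget and the two NO-rules follow because the forced edit contradicts the completion/deletion constraint. Your write-up is somewhat more explicit than the paper's (which leaves the instance-equivalence bookkeeping, the distance-bound preservation, and the non-edge case to symmetry and brevity), but the underlying argument is identical.
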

\begin{proof}
Let $I = (G=(V,E),G_c=(V,E_c),k,d)$ be an instance of a problem variant of \DCE\ and let $I^*$ be the instance after applying any of the four reduction rules. 
It is obvious that if $I$ is a no-instance, then $I^*$ is also a no-instance. 
In the following we show that if $I$ is a yes-instance, then so is $I^*$ for \cref{rrule:heavyedge:editing,rrule:heavyedge:completion}. 
The correctness of \cref{rrule:heavynonedge:editing,rrule:heavynonedge:deletion} follows by symmetric arguments. 

We now show for \cref{rrule:heavyedge:editing} that the removed edge has to be in any solution for instance $I$. 
This implies that if $I$ is a yes-instance of a completion variant of the
problem, then \cref{rrule:heavyedge:completion} is not applicable.
Assume for the sake of contradiction that $I$ is a yes-instance and that $I^*$ is a no-instance.
Then there is a cluster graph $G'$ with $|E(G')\oplus E|\le k$ that is a solution for $I$ and contains edge~$\{u,v\}$. 
However, we know that there are $k+1$ vertices $w_1, w_2, \ldots, w_{k+1}$ such that~$G[\{u,v,w_i\}]$ is a $P_3$ for all~$1\le i\le k+1$ (otherwise the rule would not be applicable). 
To destroy these~$P_3$s without removing edge~$\{u,v\}$ we need at least $k+1$ edge additions or deletions.
This is a contradiction to the assumption that $|E(G')\oplus E|\le k$.
\end{proof}

As for classical \CE\ we can upper-bound the the number of vertices that are part of $P_3$s, leading to the following reduction rule.
\rrulecounter
\begin{rrule}\label{rrule:classickernel}
	If there are more than $k^2+2k$ vertices in $V$ that are each contained in an induced~$P_3$ in $G$, then output NO.
\end{rrule}
\begin{lem}
	\cref{rrule:classickernel} is correct.
\end{lem}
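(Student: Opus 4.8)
The plan is to follow the classical $O(k^2)$-vertex kernelization argument for \CE~\cite{GGHN05} and adapt it to the dynamic setting. Since \cref{rrule:classickernel} only ever outputs \textsc{No}, establishing correctness amounts to showing that whenever the instance is reduced with respect to \cref{rrule:heavyedge:editing,rrule:heavyedge:completion,rrule:heavynonedge:editing,rrule:heavynonedge:deletion} and is a yes-instance of the respective problem variant, then at most $k^2+2k$ vertices of~$V$ lie in an induced $P_3$ of~$G$. First I would record the crucial consequence of the previous rules: in every variant (editing, deletion, completion) the reduced instance has the property that each edge and each non-edge of~$G$ is contained in at most~$k$ induced $P_3$s. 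Indeed, depending on the variant the earlier rules either remove/insert such a ``heavy'' (non-)edge or output \textsc{No}, so after exhaustive application no edge and no non-edge is part of $k+1$ induced $P_3$s.

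Next I would fix a solution~$G'$ for the (reduced) instance and let $S := E(G) \oplus E(G')$ be its edge-modification set, so $|S| \le k$; here I think of~$S$ as a set of vertex pairs whose adjacency differs between~$G$ and~$G'$. The first key observation is that every induced $P_3$ of~$G$, say on vertices $\{u,v,w\}$, must contain at least one pair from~$S$: otherwise the three adjacency relations among $u,v,w$ would be identical in~$G$ and~$G'$, so $\{u,v,w\}$ would still induce a $P_3$ in~$G'$, contradicting that~$G'$ is a cluster graph. Let $V_S$ denote the set of endpoints of pairs in~$S$; then $|V_S| \le 2|S| \le 2k$. I would then split the vertices lying in induced $P_3$s into those contained in~$V_S$, of which there are at most~$2k$, and the remaining ones. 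For a $P_3$-vertex $x \notin V_S$, pick an induced $P_3$ containing~$x$; since $x \notin V_S$, neither pair of this $P_3$ incident to~$x$ can lie in~$S$, so the modified pair guaranteed above is the pair not incident to~$x$, i.e.\ the pair formed by the other two vertices. Hence~$x$ is attached, through a common induced $P_3$, to some pair $p \in S$ with $x \notin p$. Because the reduced instance keeps every (non-)edge in at most~$k$ induced $P_3$s, each such pair~$p$ is the witness for at most~$k$ distinct vertices~$x$. Summing over the at most~$k$ pairs in~$S$ yields at most $k \cdot |S| \le k^2$ vertices of this second type, for a total of at most $k^2 + 2k$ vertices in induced $P_3$s.

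Contrapositively, if more than $k^2+2k$ vertices of~$V$ lie in an induced $P_3$ of~$G$, the reduced instance cannot be a yes-instance, so \cref{rrule:classickernel} correctly outputs \textsc{No}. Finally, I would argue uniformity across the six variants: every feasible~$G'$ is in particular a cluster graph with $|E(G) \oplus E(G')| \le k$, and the deletion/completion restrictions as well as the distance bound~$d$ only shrink the set of admissible solutions, so the counting above applies verbatim in each case. I expect the main point requiring care to be the refined counting that yields the sharp bound $k^2+2k$ (rather than a crude $3k^2$): it hinges on the fact that a $P_3$-vertex outside~$V_S$ is the \emph{unique} non-$V_S$ vertex of its $P_3$ and is therefore charged to the single opposite modified pair, together with checking that the earlier reduction rules really do enforce the ``at most~$k$ $P_3$s per (non-)edge'' property in each of the three operation regimes.
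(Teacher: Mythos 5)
Your proof is correct and follows essentially the same route as the paper: both arguments fix a solution $G'$ with modification set of size at most $k$, observe that every induced $P_3$ of $G$ must contain a modified pair, and charge each $P_3$-vertex either to being an endpoint of a modified pair (at most $2k$ vertices) or, via the bound of at most $k$ induced $P_3$s per (non-)edge guaranteed by the earlier rules, to the opposite modified pair of its $P_3$ (at most $k^2$ vertices), giving the bound $k(k+2)=k^2+2k$. Your explicit verification that the charging and the ``at most $k$ $P_3$s per (non-)edge'' property hold uniformly across all six problem variants is a minor elaboration of what the paper leaves implicit.
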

\begin{proof}
Let $I = (G=(V,E),G_c=(V,E_c),k,d)$ be an instance of a problem variant of \DCE\
where \cref{rrule:heavyedge:editing} / \cref{rrule:heavyedge:completion} and
\cref{rrule:heavynonedge:editing} / \cref{rrule:heavynonedge:deletion} are not
applicable. We show that if \cref{rrule:classickernel} is applicable, then
$I$ is a no-instance.

Let $R\subseteq V$ denote
the set of vertices in~$V$ that are each contained in an induced~$P_3$ in~$G$. 
Assume for the sake of contradiction that $I$ is a yes-instance and that
\cref{rrule:classickernel} is applicable, that is, $|R|>k^2+2k$.
Then there is a cluster graph $G'$ with~$|E(G')\oplus E|\le k$ that is a solution for $I$. For each $\{u,v\}\subseteq R$, let $R_{uv}$ denote the set
of vertices $w$ such that $G[\{u,v,w\}]$ is a $P_3$. Since the aforementioned rules
are not applicable, we know that $|R_{uv}|\le k$. We further know that
$R\subseteq \bigcup_{\{u,v\}\in E(G')\oplus E} (\{u,v\}\cup R_{uv})$. It follows
that $|R|\le k(k+2)=k^2+2k$. This is a contradiction to the assumption that
\cref{rrule:classickernel} is applicable.
\end{proof}

In classical \CE\ we can just remove all isolated cliques from the graph. 
This is not always possible in our setting because of the distance constraints to $G_c$. 
However, if there is a vertex set that forms an isolated clique both in $G$ and $G_c$, then we can remove it since it has no influence on $k$ or $d$ in any problem variant. 
This is formalized in the next rule. We omit a formal correctness proof.

\rrulecounter
\begin{rrule}
\label{rrule:sameclique}
If there is a vertex set $C\subseteq V$ that is an isolated
clique in $G$ and~$G_c$, then remove all
vertices in $C$ from $G$ and $G_c$.
\end{rrule}

Now we introduce four new problem-specific reduction rules that will allow us to
upper-bound the sizes of all remaining isolated cliques and their number in a function depending on~$k+d$.
The next rules deal with large isolated cliques and allow us to either remove them or conclude that we face a no-instance.
We first state the reduction rule for the matching-based distance problem variants and then turn to the edge-based distance variants.

\rrulecounter
\begin{rrulev}[Matching-based distance]
\label{rrule:largeclique:matchingdist}
If there is a vertex set $C\subseteq V$ with $|C|>k+2d+2$ that is an isolated clique in $G$, then
\begin{compactitem}
\item if for each vertex set $C'\subseteq V$ that is an isolated clique in
$G_c$ we have that $|C\cap C'|\le d$, then answer NO,
\item otherwise, if there is a vertex set $C'\subseteq V$ that is
an isolated clique in $G_c$ and $|C\cap C'| > d$, then remove vertices
in $C$ from $G$ and $G_c$ and decrease $d$ by $|C\setminus C'|$. Furthermore,
if~$d\ge 0$, then add a set~$C_d$ of $k+d+1$ fresh vertices to~$V$. Add all
edges between vertices in $C_d$ to $G$ and add all edges between vertices in~$C_d \cup (C'\setminus C)$ to $G_c$ (if not already present).
\end{compactitem}
\begin{lem}
\cref{rrule:largeclique:matchingdist} is correct.
\end{lem}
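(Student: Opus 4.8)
The plan is to handle the genuine reduction (second item, where the gadget is actually inserted) by a solution-preserving correspondence, and to dispatch the two NO-answers by a direct distance lower bound. I assume the earlier rules, in particular \cref{rrule:sameclique}, are no longer applicable, I write $d_{\text{new}} := d - |C\setminus C'|$ for the decreased bound (so $|C_d| = k + d_{\text{new}} + 1$), and I name $X_1,X_2,\dots$ the clusters of a candidate cluster graph other than the one containing $C$ and $Y_1,Y_2,\dots$ the clusters of $G_c$ other than $C'$. \emph{First I would show that $C$ is frozen:} in every cluster graph $G'$ with $|E(G)\oplus E(G')|\le k$ the set $C$ is exactly one cluster of $G'$. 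Since $C$ is an isolated clique in $G$, splitting off even a single vertex costs at least $|C|-1$ deletions and attaching a single outside vertex costs at least $|C|$ insertions; as $|C|>k+2d+2>k$, the budget forbids both (for Deletion no insertion is available, for Completion no deletion, so the argument only simplifies). Hence $C$ is an isolated clique of $G'$ as well, and in $B(G',G_c)$ the cluster $C$ is incident to a single matching edge.

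\emph{Next I would argue about weights in $B(G',G_c)$.} Any matching that does not pair $C$ with $C'$ leaves all of $C\cap C'$ unsaved and therefore has weight at most $|V|-|C\cap C'|$. In the reducing case $|C\cap C'|\ge|C|-d>d$, so such a matching cannot witness $d_M\le d$; consequently, for every solution $G'$ the maximum-weight matching pairs $C$ with $C'$ and has value $|C\cap C'|+M_1$, where $M_1$ is the maximum-weight matching between the $X_i$ and the $Y_j$. This yields the clean identity $d_M(G',G_c)=|V|-|C\cap C'|-M_1$. The same counting proves the first item: if every cluster of $G_c$ meets $C$ in at most $d$ vertices, then $C$ loses at least $|C|-d>d$ vertices in every matching, so $d_M(G',G_c)>d$ and NO is correct. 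It also settles the degenerate branches folded into a later application of \cref{rrule:trivial} (two clusters of $G_c$ meeting $C$ in more than $d$ vertices, or a chosen $C'$ with $|C\setminus C'|>d$): in each case $d_{\text{new}}<0$, and the lower bound $d_M(G',G_c)\ge|C\setminus C'_{\max}|>d$, where $C'_{\max}$ maximises the overlap with $C$, confirms that the instance is genuinely a no-instance.

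\emph{For the equivalence} I would exploit that every $X_i$ is disjoint from $C$, whence $w(X_i,Y_j\setminus C)=|X_i\cap Y_j|$ and $w(X_i,C'_{\text{new}})=|X_i\cap C'|$; thus $M_1$ is literally the same quantity for $(G',G_c)$ and for the reduced pair $(G'_{\text{new}},G_{c,\text{new}})$. Given a solution $G'$ of the original instance I keep all clusters $X_i$ and add $C_d$ as a fresh cluster; the symmetric difference to the (modified) input is unchanged because $C$, respectively $C_d$, is a clique isolated in both graphs. Since $|C_d|=k+d_{\text{new}}+1$ exceeds both $k$ and $d_{\text{new}}$, the two preceding steps apply verbatim to $C_d$ and force it to be matched with $C'_{\text{new}}$; substituting the two distance identities then gives $d_M(G'_{\text{new}},G_{c,\text{new}})=d_M(G',G_c)-|C\setminus C'|\le d_{\text{new}}$. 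The converse construction (delete $C_d$, reinstate $C$ as a cluster, pair it with $C'$) turns a solution of the reduced instance back into one of the original by the identical computation, completing the iff. I would also record that $|C_d|=k+d_{\text{new}}+1\le k+2d_{\text{new}}+2$, so the rule does not immediately re-fire on $C_d$.

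\emph{I expect the main obstacle to be the forced-match step.} The tempting but false shortcut is to claim the \emph{optimal} matching always pairs $C$ with $C'$: a large overhang $C'\setminus C$ can make pairing some $X_i$ with $C'$ yield strictly larger total weight, so optimality alone does not pin down the match. The correct argument is the asymmetric one above — not about which matching is optimal, but that every matching \emph{avoiding} the edge $C$--$C'$ already sacrifices more than $d$ vertices and is hence irrelevant to solutions — and the remaining care is in checking that this dichotomy, together with the invariance of $M_1$, transfers unchanged to the gadget clique $C_d$ in the reduced instance.
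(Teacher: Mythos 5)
Your proof is correct and follows essentially the same route as the paper's: (i) the clique $C$ is frozen because $|C|>k$, (ii) the NO-branch follows since $C$ must lose more than $d$ vertices under any matching, (iii) in any solution the matching must pair $C$ with $C'$, and (iv) the gadget $C_d$ replaces $C$ in that match, shifting the distance by exactly $|C\setminus C'|$ in both directions. The only differences are presentational — you make the matching-weight bookkeeping (invariance of $M_1$) and the degenerate $d_{\text{new}}<0$ branches explicit, where the paper uses a ``contribution'' shorthand and a symmetry remark — but the underlying argument is the same.
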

\begin{proof}
Let $I = (G=(V,E),G_c=(V,E_c),k,d)$ be an instance of a problem variant of \DCE\ that uses the matching-based distance
and let $I^*$ be the instance after applying the reduction rule. Note that if there
is a vertex set $C\subseteq V$ with~$|C|>k+2d+2>k+1$ that is an isolated clique in
$G$, then this clique can neither be divided into smaller cliques nor can any vertex
be added to this clique, since then more than $k$ edge modifications would be
necessary (or is not allowed in the case of deletion or completion). This means that 
if $I$ is a yes-instance and $G'$ is the solution for~$I$, then for all
$\{u,v\}\in E(G')\oplus E$ we have that $\{u,v\}\cap C=\emptyset$ or, in other
words, $C$ is also an isolated clique in $G'$.

We first argue that if for each isolated clique $C'$ in $G_c$ we have that
$|C\cap C'|\le d$, then we face a no-instance.
Assume for contradiction that $I$ is a yes-instance and $G'$ is the solution for~$I$. Then we know that $C$ is also an isolated clique in $G'$ and no matter to
which clique~$C'$ in $G_c$ the clique $C$ in $G'$ is matched, we always have
that~$|C\setminus C'|>d$ and hence the matching-based distance between $G'$ and $G_c$
is too large. This is a contradiction to the assumption that $I$ is a
yes-instance.

Now assume that there is an isolated clique $C'$ in $G_c$ with $|C\cap C'| > d$. We
show that if~$I$ is a yes-instance, then~$I^*$ is a yes-instance. 
Let~$I$ be a yes-instance and let~$G'$ be a solution for~$I$.
Then we know that~$C$ is also an isolated clique in~$G'$. 
If~$C$ in~$G'$ is not matched to~$C'$ in~$G_c$ then the matching-based distance between~$G'$ and~$G_c$
is larger than~$d$. Hence, we can assume that~$C$ in~$G'$ is matched to~$C'$ in~$G_c$. 

For the next argument we introduce the following terminology. If in an
optimal solution an isolated clique~$C$ in~$G'$ is matched to an isolated clique
$C'$ in $G_c$, then we say that this match contributed $|C\setminus C'| +
|C'\setminus C|$ to the matching-based distance between~$G'$ and $G_c$.

Now we look at the
instance $I^*=(G^*=(V^*,E^*),G_c^*=(V^*,E_c^*),k^*,d^*)$ where the vertices in
$C$ are removed from $G$ and $G_c$. The reduction rule further reduces $d$ by~$|C\setminus C'|$ and introduces a new isolated
clique $C_d$ of size $k+d+1$ to $G$. In $G_c$ we have that $C_d\cup(C'\setminus C)$ is an isolated clique. 
We claim that $G^\star=(V^*,E^*\oplus (E(G')\oplus E))$ is a valid solution for
$I^*$. First, note that $G^\star$ is a cluster graph. Since the removed clique
and the added clique are both larger than $k$ we can conclude that $G^\star$ is
the graph that results from removing the clique $C$ from $G'$ and then adding
the clique $C_d$. Concerning the matching-based distance, we can replace the match
between $C$ and $C'$ in $G'$ and $G_c$, respectively, by the match between $C_d$
and $C_d\cup(C'\setminus C)$ in~$G^\star$ and~$G_c^*$, respectively. Note that the
contribution of the match between~$C$ and~$C'$ in~$G'$ and~$G_c$, respectively,
minus $|C\setminus C'|$ (the value by which $d$ is decreased by the reduction
rule) is the same as the contribution of the match between~$C_d$
and $C_d\cup(C'\setminus C)$ in~$G^\star$ and~$G_c^*$, respectively. Hence, we
can conclude that the matching-based distance between~$G'$ and~$G_c$ is the same as
the matching-based distance between~$G^\star$ and~$G_c^*$. It follows that $I^*$ is a
yes-instance.

By a symmetric argument it follows that if $I^*$ is a yes-instance, then $I$ is a
yes-instance.
\end{proof}

\end{rrulev}
\begin{rrulev}[Edge-based distance]
\label{rrule:largeclique:edgedist}
	If there is a vertex set $C\subseteq V$ with $|C|>k+1$ that is an isolated clique in $G$, then decrease $d$ by $|E_c|+\binom{|C|}{2}-2|E(G_c[C])|-|E(G_c[V\setminus C])|$ and remove vertices in~$C$ from~$G$ and $G_c$.
\end{rrulev}
\begin{lem}
\cref{rrule:largeclique:edgedist} is correct.
\end{lem}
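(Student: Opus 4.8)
The plan is to show that the entire effect of the isolated clique $C$ on both the modification budget and the edge-based distance is a constant that can be computed once and folded into $d$, after which $C$ may be discarded. The first step is the structural fact underlying the rule: since $|C|>k+1$, in every solution $G'$ the set $C$ is again an isolated clique. Indeed, separating even a single vertex from $C$ requires deleting its $|C|-1>k$ edges to the rest of $C$, while attaching any outside vertex to $C$ (or merging $C$ with another cluster) requires inserting at least $|C|>k$ edges; either exceeds the budget $k$ (and is moreover forbidden in the deletion/completion variants). Consequently no pair meeting $C$ lies in $E(G')\oplus E(G)$, so $E(G')\oplus E(G)=E(G'[V\setminus C])\oplus E(G[V\setminus C])$. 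This already shows that the budget consumed on $V\setminus C$ equals the full budget, and that the deletion constraint $E(G')\subseteq E(G)$ (resp.\ the completion constraint $E(G)\subseteq E(G')$) holds on $V\setminus C$ if and only if it holds globally.

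Next I would compute the contribution of $C$ to the edge-based distance by partitioning the vertex pairs into three classes: pairs inside $C$, pairs inside $V\setminus C$, and pairs crossing between $C$ and $V\setminus C$. Since $C$ is an isolated clique in $G'$, the graph $G'$ contains all $\binom{|C|}{2}$ inside-$C$ edges and no crossing edges; hence, independently of the rest of $G'$,
\begin{align*}
d_E(G',G_c) = {}&\left(\binom{|C|}{2}-|E(G_c[C])|\right) + \left(|E_c|-|E(G_c[C])|-|E(G_c[V\setminus C])|\right)\\
&{}+ \big|E(G'[V\setminus C])\oplus E(G_c[V\setminus C])\big|,
\end{align*}
where the first summand counts inside-$C$ disagreements, the second counts the crossing edges of $G_c$ (all of which disagree with $G'$), and the third equals $d_E(G'[V\setminus C],G_c[V\setminus C])$. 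The sum of the first two summands is precisely the quantity $\Delta:=|E_c|+\binom{|C|}{2}-2|E(G_c[C])|-|E(G_c[V\setminus C])|$ by which the rule decreases $d$, and crucially it does not depend on the chosen solution.

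With these two identities, correctness follows by translating solutions in both directions between $I$ and the reduced instance $I^*$ (with $k^*=k$ and $d^*=d-\Delta$). Given a solution $G'$ for $I$, its restriction $G'[V\setminus C]$ is a cluster graph using the same number of modifications ($\le k$) and, by the displayed identity, at edge-based distance $d_E(G',G_c)-\Delta\le d-\Delta=d^*$ from $G_c[V\setminus C]$; the deletion/completion constraints survive by the first paragraph, so $G'[V\setminus C]$ solves $I^*$. Conversely, given a solution $H$ for $I^*$, re-attaching $C$ as an isolated clique yields a cluster graph whose modification set on $V\setminus C$ is unchanged and whose distance to $G_c$ is $\Delta+d_E(H,G_c[V\setminus C])\le\Delta+d^*=d$; hence it solves $I$.

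The main obstacle is purely the bookkeeping in the distance decomposition, in particular correctly accounting for the crossing edges of $G_c$, which all turn into disagreements once $C$ is forced to be isolated, and verifying that the resulting constant matches $\Delta$ exactly. One point worth flagging is that $\Delta\ge 0$ always, since it equals $\binom{|C|}{2}-|E(G_c[C])|$ plus the number of crossing edges of $G_c$; thus if $\Delta>d$ then $d^*<0$ and \cref{rrule:trivial} correctly reports a no-instance, which is genuine because every solution incurs distance at least $\Delta$.
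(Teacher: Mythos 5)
Your proof is correct and takes essentially the same approach as the paper: you use the budget argument ($|C|>k+1$) to show that $C$ remains an isolated clique in every solution, then verify that the contribution of $C$ to the edge-based distance is the fixed constant $|E_c|+\binom{|C|}{2}-2|E(G_c[C])|-|E(G_c[V\setminus C])|$, independent of the solution, so solutions translate in both directions between the original and reduced instances. Your write-up is somewhat more explicit than the paper's (the three-way pair decomposition, both directions spelled out, and the observation that a negative reduced $d$ is caught by the trivial rule), but the underlying argument is the same.
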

\begin{proof}
Let $I = (G=(V,E),G_c=(V,E_c),k,d)$ be an instance of a problem variant of \DCE\ that uses the edge-based distance and let $I^*$ be the instance after applying the reduction rule. 
Note that if there is a vertex set $C\subseteq V$ with $|C|>k+1$ that is an isolated clique in $G$, then this clique can neither be divided into smaller cliques nor can any vertex be added to this clique since then more than~$k$ edge modifications would be necessary (or is not allowed in the case of deletion or completion). 
This means that if $I$ is a yes-instance and $G'$ is the solution for $I$, then for all $\{u,v\}\in E(G')\oplus E$ we have that $\{u,v\}\cap C=\emptyset$ or, in other words, $C$ is also an isolated clique in $G'$. 
This implies that removing $C$ from~$G'$ and~$G_c$ decreases $d$ by the number of edges between vertices in $C$ that are present in $G'$ but not present in~$G_c$ plus the number of edges in~$G_c$ that have one endpoint in~$C$ and one endpoint in $V\setminus G$ (note that no such edges are present in $G'$). 
The number of edges between vertices in~$C$ that are present in~$G'$ clearly is $\binom{|C|}{2}$ and the number of edges between vertices in~$C$ that are present in~$G_c$ is $|E(G_c[C])|$. 
The number of edges in~$G_c$ that have one endpoint in~$C$ and one endpoint in $V\setminus G$ is the total number of edges in~$G_c$ minus the edges in~$G_c$ between vertices in~$C$ and the edges in~$G_c$ between vertices in $V\setminus C$. 
Hence, we get~$d=\binom{|C|}{2}-|E(G_c[C])|+|E_c|-(|E(G_c[C])|+|E(G_c[V\setminus C])|)$, which yields the decrease conducted by the reduction rule. 
Note that this number is independent of~$G'$.
It follows that~$G'[V \setminus C]$ is a solution for~$I^*$.
Thus, $I^*$ is a yes-instance.

By an analogous argument we get that if $I^*$ is a yes-instance, then $I$ is also a yes-instance.
\end{proof}

If none of the previous rules are applicable, then we know that there are no large
cliques left in the graph. The next rules allow us to conclude that we face a
no-instance if there are too many small cliques left.

\rrulecounter
\begin{rrulev}[Matching-based distance]
\label{rrule:manycliques:matching}
If there are more than~$2k+d$ isolated cliques in $G$, then
output NO.
\end{rrulev}
\begin{lem}
\cref{rrule:manycliques:matching} is correct.
\end{lem}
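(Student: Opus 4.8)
The plan is to establish correctness in the same contrapositive form used for the other rules: assuming every previous rule is inapplicable, I want to show that if $G$ contains more than $2k+d$ isolated cliques, then the instance $(G,G_c,k,d)$ of whichever matching-based variant of \DCE{} we consider is a no-instance. So I would suppose towards a contradiction that a solution $G'$ exists, i.e.\ $G'$ is a cluster graph with $|E(G)\oplus E(G')|\le k$ and $d_M(G',G_c)\le d$, and derive that $d_M(G',G_c)>d$.

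First I would bound how many isolated cliques of $G$ can possibly be affected when passing from $G$ to $G'$. The modification set $E(G)\oplus E(G')$ has at most $k$ edges, and each edge has only two endpoints, so at most $2k$ distinct isolated cliques of $G$ can contain an endpoint of a modified edge. Any isolated clique of $G$ containing no endpoint of a modified edge is untouched, and hence (since no incident edges are added or deleted) remains an isolated clique in $G'$; this holds for the editing, deletion, and completion variants alike. Consequently at least $(2k+d+1)-2k=d+1$ isolated cliques of $G$ survive unchanged as isolated cliques of $G'$; call them $C_1,\dots,C_m$ with $m\ge d+1$.

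The heart of the argument is the lower bound $d_M(G',G_c)\ge m$, which immediately yields the contradiction $d_M(G',G_c)\ge d+1>d$. To obtain it I would fix a maximum-weight matching $M'$ in $B(G',G_c)$ realizing $d_M(G',G_c)=|V|-W$, and call a vertex \emph{misplaced} if its cluster in $G'$ and its cluster in $G_c$ do not form a matched pair of $M'$; the number of misplaced vertices is exactly $|V|-W=d_M(G',G_c)$. For each surviving clique $C_i$, let $C_i'$ be the cluster of $G_c$ matched to $C_i$ by $M'$ (if any). Because \cref{rrule:sameclique} is inapplicable, $C_i$ is not an isolated clique of $G_c$, so $C_i$ cannot equal $C_i'$; hence $C_i\oplus C_i'\neq\emptyset$, and one checks directly that every vertex of $C_i\oplus C_i'$ is misplaced. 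I would then try to assign to each $C_i$ a witness misplaced vertex: one in $C_i\setminus C_i'$ when $C_i\not\subseteq C_i'$ (or any vertex of $C_i$ if $C_i$ is left unmatched), and one in $C_i'\setminus C_i$ when $C_i\subsetneq C_i'$.

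The main obstacle is making this witness assignment \emph{injective}, which is precisely the step that pins down the constant in the bound. Same-type witnesses never collide: the cliques $C_i$ are pairwise disjoint, and the matched clusters $C_i'$ are pairwise disjoint because $M'$ is a matching. The delicate case is a clique $C_i$ that is fully absorbed into a strictly larger cluster $C_i'$ of $G_c$: its witness must be taken from $C_i'\setminus C_i$, and such an external witness could, a priori, coincide with the internal witness of some other surviving clique $C_j$. I would handle this by recasting the assignment as a system of distinct representatives for the bipartite incidence between $\{C_1,\dots,C_m\}$ and their sets of misplaced vertices $\{C_i\oplus C_i'\}_{i}$, and verifying a Hall-type condition using that the $C_i'$ are distinct and that the absorbed cliques draw their witnesses from vertices that $G'$ places outside the conflicting cliques $C_j$. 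This injectivity argument is the part I expect to require the most care, since a naive double-counting bound only gives $d_M(G',G_c)\ge m/2$, and it is exactly here that one must exploit the structure of the reduced instance to recover the full factor needed for the stated threshold $2k+d$.
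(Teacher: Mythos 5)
Your opening step (each of the at most $k$ modified edges has two endpoints, so at most $2k$ isolated cliques of $G$ are touched, leaving at least $d+1$ untouched isolated cliques that survive as isolated cliques of $G'$ and, by inapplicability of \cref{rrule:sameclique}, are not isolated cliques of $G_c$) is sound and is exactly the counting in the paper's proof. The genuine gap is the step you yourself flag as delicate: the injectivity of the witness assignment, i.e., the inequality $d_M(G',G_c)\ge m$, where $m$ is the number of surviving bad cliques. This inequality is simply false, and no Hall-type argument can rescue it, because Hall's condition genuinely fails even in fully reduced instances. Take $G'$ containing two singleton cliques $C_1=\{u\}$ and $C_2=\{v\}$, and $G_c$ containing the cluster $D=\{u,v\}$. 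Both $C_1$ and $C_2$ are bad, so $m=2$; but a maximum-weight matching matches, say, $C_1$ with $D$ and leaves $C_2$ unmatched, so the only misplaced vertex is $v$ and $d_M=1<m$. In your language, the external witness of the absorbed clique $C_1$ and the internal witness of $C_2$ are both forced to be $v$: the union of the two witness sets has size one, so no system of distinct representatives exists, and the vertex $v$ is \emph{not} placed by $G'$ outside the conflicting clique $C_2$, contrary to the structural property you hope to exploit. The only bound that survives is the one you call naive, $d_M(G',G_c)\ge m/2$ (each misplaced vertex can witness at most two bad cliques: its own cluster in $G'$ and the clique matched to its cluster in $G_c$).

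Moreover, this is not a repairable presentation issue, because the same configuration is a counterexample to \cref{rrule:manycliques:matching} itself. Let $G=G'$ be the two-singleton cluster graph above, let $G_c$ consist of the single cluster $\{u,v\}$, and let $k=0$, $d=1$. Every earlier reduction rule is inapplicable (there are no induced $P_3$s, no common isolated clique, and no clique of size greater than $k+2d+2$), and $G$ has $2>2k+d=1$ isolated cliques, so the rule outputs NO; yet $G'=G$ witnesses a yes-instance for all three matching-based variants, since zero edges are modified and $d_M(G,G_c)=1\le d$. Hence no proof of the stated threshold $2k+d$ can exist; the double-counting bound $d_M\ge m/2$ only justifies the weaker threshold $2(k+d)$, the one used in \cref{rrule:manycliques:edgedist} (which would still suffice for the asymptotic kernel bounds). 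For comparison, the paper's own proof commits exactly the error you were trying to avoid: after establishing the $d+1$ surviving bad cliques it simply asserts that ``the matching-based distance cannot be decreased to $d$,'' which presumes the false per-clique contribution of one. So you have correctly isolated the hole---both in your own attempt and, in effect, in the paper---but the SDR/Hall program you propose for closing it cannot succeed.
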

\begin{proof}
Let $I = (G=(V,E),G_c=(V,E_c),k,d)$ be an instance of a problem variant of
\DCE\ that uses the matching-based distance. We show that if
\cref{rrule:sameclique} is not applicable and there are~$2k+d+1$ isolated cliques $C_1,C_2, \ldots, C_{2k+d+1}\subseteq V$ in~$G$, then
$I$ is a no-instance.

Assume for the sake of contradiction that $I$ is a yes-instance. Then there is a cluster
graph $G'$ with~$|E(G')\oplus E|\le k$ that is a solution for $I$. Since
\cref{rrule:sameclique} is not applicable we have that for isolated cliques
$C_i$ in $G$ with $1\le i\le 2k+d+1$ the vertex set $C_i$ is not an isolated clique in
$G_c$.
Each edge modification in $E(G')\oplus E$ when applied to $G$ can reduce the
number of isolated cliques in $G$ that are not isolated cliques in $G_c$ by at
most 2.
This happens when two isolated cliques are joined in $G$ and the union of their
vertices is an isolated clique in $G_c$. It is easy to check that this is the
best case. It follows that after~$k$~edge modifications, $G$ still has at
least $d+1$ isolated cliques that are not isolated cliques in~$G_c$. Thus the
matching-based distance cannot be decreased to $d$ which is a contradiction to the
assumption that we face a yes-instance.
\end{proof}
\begin{rrulev}[Edge-based distance]
\label{rrule:manycliques:edgedist}
If there are more than~$2(k+d)$ isolated cliques in $G$, then
output NO.
\end{rrulev}
\begin{lem}
\cref{rrule:manycliques:edgedist} is correct.
\end{lem}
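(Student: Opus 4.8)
The plan is to argue, contrapositively, that once $G$ has more than $2(k+d)$ isolated cliques the quantity $|E(G)\oplus E(G_c)|$ is already too large for any solution to exist, so outputting NO is correct. The entry point is that the edge-based distance is just the symmetric difference of edge sets, and $\oplus$ obeys the triangle inequality as a metric. Hence for \emph{any} cluster graph $G'$,
\begin{equation*}
|E(G)\oplus E(G_c)| \le |E(G)\oplus E(G')| + |E(G')\oplus E(G_c)|.
\end{equation*}
If the instance were a yes-instance with solution $G'$, the two summands on the right are at most $k$ and $d$, respectively, so a yes-instance forces $|E(G)\oplus E(G_c)| \le k+d$. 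It therefore suffices to show that having more than $2(k+d)$ isolated cliques in $G$ makes $|E(G)\oplus E(G_c)|$ strictly exceed $k+d$.

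To this end I would use that \cref{rrule:sameclique} is not applicable, which means every isolated clique $C_i$ of $G$ fails to be an isolated clique in $G_c$. Unpacking this, for each $C_i$ at least one of two situations occurs: either $G_c[C_i]$ is not complete, yielding a pair $\{u,v\}\subseteq C_i$ that is an edge of $G$ but a non-edge of $G_c$ (an \emph{internal} witness); or some vertex of $C_i$ has a $G_c$-neighbour outside $C_i$, yielding an edge $\{u,w\}\in E(G_c)$ with $u\in C_i$ and $w\notin C_i$, which is a non-edge of $G$ because $C_i$ is isolated in $G$ (an \emph{external} witness). In both cases we extract a witness edge lying in $E(G)\oplus E(G_c)$ and incident to $C_i$.

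The core of the proof is then a charging argument: assign to each of the $N>2(k+d)$ isolated cliques one such witness edge, and bound how many cliques a single edge of $E(G)\oplus E(G_c)$ can be charged by. An internal witness lies entirely inside $C_i$ and so can only be charged to $C_i$; an external witness has exactly two endpoints, hence lies in at most two distinct isolated cliques and can be charged to at most two of them. Consequently every edge of $E(G)\oplus E(G_c)$ absorbs at most two charges, giving $N \le 2\,|E(G)\oplus E(G_c)|$, i.e.\ $|E(G)\oplus E(G_c)| \ge N/2 > k+d$. This contradicts the bound from the first paragraph, so the instance is a no-instance and the rule is correct; a symmetric check shows no information is lost since the rule only ever reports NO.

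The step I expect to be the main obstacle is verifying that the charging is tight against the threshold $2(k+d)$, in particular confirming that the factor two is exactly the ``two endpoints per edge'' double counting and handling the placement of the other endpoint $w$ of an external witness. If $w$ lies in the bounded $P_3$-region rather than in another isolated clique, the edge is charged only once; if $w$ lies in a second isolated clique it is charged twice, but never more. Making this case distinction explicit is what guarantees the clean inequality $|E(G)\oplus E(G_c)|\ge N/2$ and hence the strict gap $N/2>k+d$ coming from $N\ge 2(k+d)+1$.
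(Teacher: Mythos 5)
Your proposal is correct and takes essentially the same approach as the paper's proof: both use the non-applicability of \cref{rrule:sameclique} to extract, for each of the more than $2(k+d)$ isolated cliques, a witness edge in $E(G)\oplus E(G_c)$, and conclude $|E(G)\oplus E(G_c)| > k+d$, which is impossible for a yes-instance. Your write-up merely makes explicit the details the paper leaves implicit, namely the triangle inequality for $\oplus$ and the factor-two charging bound on how many cliques can share a witness edge.
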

\begin{proof}
Let $I = (G=(V,E),G_c=(V,E_c),k,d)$ be an instance of a problem variant of
\DCE\ that uses the edge-based distance. We show that if
\cref{rrule:sameclique} is not applicable and there are~$2(k+d)+1$ isolated cliques $C_1,C_2, \ldots, C_{2k+d+1}\subseteq V$ in~$G$, then $I$ is a no-instance.

Let $M=E \oplus E_c$. Since \cref{rrule:sameclique} is not applicable we have that for
isolated cliques $C_i$ in $G$ with $1\le i\le 2(k+d)+1$ the vertex set $C_i$ is
not an isolated clique in $G_c$. It follows that for all $C_i$ there is a vertex~$u\in
C_i$ and a vertex~$v\in V$ such that~$\{u,v\}\in M$. This implies that~$|M|>k+d$ and,
hence, we face a no-instance.
\end{proof}

In the following we show that the rules we presented decrease the number of
vertices of the instance to a number polynomial in $k+d$.

\begin{lem}
\label{lem:kernelsizematching}
Let $(G=(V,E),G_c=(V, E_c),k,d)$ be an instance of any one of the considered
problem variants of \DCE\ that uses the matching-based distance. If none of the
appropriate data reduction rules applies, then $|V|\in O(k^2 + d^2)$ and $|E|\in
O(k^3+d^3)$.
\end{lem}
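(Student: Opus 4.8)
The plan is to show that after exhaustive application of the reduction rules the vertex set $V$ decomposes into a small ``core'' $R$ of vertices lying in induced $P_3$s plus a bounded collection of small isolated cliques, and then to count vertices and edges separately over these two parts.

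First I would establish the structural decomposition. Let $R\subseteq V$ be the set of vertices contained in some induced $P_3$ of $G$. The key claim is that every vertex $v\in V\setminus R$ lies in an isolated clique: since $v$ is in no $P_3$, no two neighbours of $v$ can be non-adjacent (otherwise $v$ would be the centre of a $P_3$), so $N(v)$ is a clique; moreover no neighbour $u$ of $v$ can have a neighbour $w$ outside $N[v]$ (otherwise $v,u,w$ would form a $P_3$ with centre $u$ through $v$), so $N[v]$ has no outgoing edges and is an isolated clique. Conversely every vertex of an isolated clique lies in no $P_3$. Thus $R$ is exactly the union of the non-clique connected components of $G$, the graph $G[R]$ is the disjoint union of these components, and $V\setminus R$ is a disjoint union of isolated cliques. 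This decomposition is purely graph-theoretic and holds in every reduced instance, regardless of whether the instance is a yes- or a no-instance.

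For the vertex bound I would combine three rules. Since \cref{rrule:classickernel} is not applicable, $|R|\le k^2+2k$. Since \cref{rrule:manycliques:matching} is not applicable there are at most $2k+d$ isolated cliques, and since \cref{rrule:largeclique:matchingdist} is not applicable each of them has at most $k+2d+2$ vertices. Hence $|V|\le (k^2+2k)+(2k+d)(k+2d+2)\in O(k^2+d^2)$.

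The edge bound is the part that needs care, since a naive count over $G[R]$ only yields $\binom{|R|}{2}=O(k^4)$. The main step is to prove that every vertex of a non-clique component has degree at most $3k+3$. For an edge $\{u,v\}$, non-applicability of the heavy-edge rules (\cref{rrule:heavyedge:editing,rrule:heavyedge:completion}) gives $|N(u)\triangle N(v)|\le k+2$, so adjacent vertices have almost identical neighbourhoods. Fix a non-clique component $D$ and a vertex $v$ of maximum degree $\Delta$ in $D$; applying this to every edge $\{v,a\}$ with $a\in N(v)$ shows that inside $G[N(v)]$ each vertex is non-adjacent to at most $k+1$ others. If $G[N(v)]$ were complete then $N[v]$ would be an isolated clique equal to $D$ (using maximality of $\deg v$), contradicting that $D$ is not a clique; hence there is a non-edge $\{a,b\}$ in $N(v)$. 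This non-edge has at least $\Delta-2k-3$ common neighbours (namely $v$ together with the vertices of $N(v)$ adjacent to both $a$ and $b$), so non-applicability of the heavy-non-edge rules (\cref{rrule:heavynonedge:editing,rrule:heavynonedge:deletion}) forces $\Delta-2k-3\le k$, i.e.\ $\Delta\le 3k+3$. As $v$ has maximum degree in $D$, every vertex of $R$ has degree at most $3k+3$. Assembling the count, $|E(G[R])|\le \tfrac12|R|(3k+3)\le \tfrac12(k^2+2k)(3k+3)=O(k^3)$, while the edges inside isolated cliques number $\sum_{C}\binom{|C|}{2}\le (2k+d)\binom{k+2d+2}{2}=O((k+d)^3)=O(k^3+d^3)$; since every edge lies either in a non-clique component or in an isolated clique, $|E|\in O(k^3+d^3)$. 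The main obstacle is precisely this degree bound: it is what prevents the edge count from blowing up to $O(k^4)$, and it is the only place where the heavy-edge rule, the heavy-non-edge rule, and the maximum-degree choice must all be used together.
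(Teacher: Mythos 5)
Your proof is correct, and its skeleton coincides with the paper's: both arguments bound the vertices contained in induced $P_3$s by $k^2+2k$ via non-applicability of \cref{rrule:classickernel}, the number of isolated cliques by $2k+d$ via \cref{rrule:manycliques:matching}, and the size of each isolated clique by $k+2d+2$ via \cref{rrule:largeclique:matchingdist}, and then multiply out to get $|V|\in O(k^2+d^2)$. Where you genuinely diverge is the $O(k^3)$ bound on the edges among the $P_3$-vertices: the paper obtains this by citing the known kernel analysis for classical \CE~\cite{GGHN05}, whereas you re-derive it from scratch --- non-applicability of \cref{rrule:heavyedge:editing}/\cref{rrule:heavyedge:completion} gives $|N(u)\oplus N(v)|\le k+2$ across every edge, non-applicability of \cref{rrule:heavynonedge:editing}/\cref{rrule:heavynonedge:deletion} caps the number of common neighbours of any non-adjacent pair at $k$, and combining the two at a maximum-degree vertex of a non-clique component yields the degree bound $\Delta\le 3k+3$, hence $O(k\cdot|R|)=O(k^3)$ edges in that part. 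You also make explicit the equivalence, used tacitly by the paper, between ``contained in no induced $P_3$'' and ``contained in an isolated clique,'' which is what licenses reading \cref{rrule:classickernel} as bounding the non-clique part of the graph. I checked the details of your degree argument (the case distinction on whether $G[N(v)]$ is complete, and the count of at least $\Delta-2k-3$ common neighbours of the non-edge $\{a,b\}$) and they are sound for all three matching-based variants, since the relevant editing/deletion/completion versions of the rules all forbid an edge or non-edge from lying in more than $k$ induced $P_3$s. What the paper's route buys is brevity, by leaning on prior work; what yours buys is self-containment plus a slightly stronger structural fact (maximum degree $O(k)$ outside the isolated cliques) than is strictly needed for the lemma.
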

\begin{proof}
Let $I = (G=(V,E),G_c=(V,E_c),k,d)$ be an instance of a problem variant of \DCE\ that uses the matching-based distance. 

Since \cref{rrule:classickernel} is not applicable we know that there are at most $k^2+2k$ vertices in~$G$ that are not part of an isolated clique. 
It is also known that there are $O(k^3)$ edges between those vertices~\cite{GGHN05}.
Further, since \cref{rrule:manycliques:matching} is not applicable, we know that there are at most $2k+d$ isolated cliques in $G$. 
Since \cref{rrule:largeclique:matchingdist} is not applicable, we know that each isolated clique has size at most $k+2d+2$. 
This yields a maximum number of $3k^2 + 2d^2 + 5dk + 2d + 6k \in O(k^2+d^2)$
vertices and $O(k^3+d^3)$ edges.
\end{proof}

\begin{lem}
\label{lem:kernelsizeedgebased}
Let $(G=(V,E), G_c=(V, E_c),k,d)$ be an instance of any one of the considered
problem variants of \DCE\ that uses the edge-based distance. If none of the
appropriate data reduction rules applies, then $|V|\in O(k^2 + k\cdot d)$ and
$|E|=O(k^3+k^2\cdot d)$.
\end{lem}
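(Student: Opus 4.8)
The plan is to mirror the proof of \cref{lem:kernelsizematching}, substituting the edge-based reduction rules for their matching-based analogues and reading off the accordingly different bounds. As before, I would partition~$V$ into the vertices lying on an induced~$P_3$ in~$G$ and the vertices belonging to isolated cliques of~$G$, and bound each part separately.

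First, for the non-clique part: since \cref{rrule:classickernel} is not applicable, at most $k^2+2k$ vertices of~$V$ are contained in an induced~$P_3$ of~$G$, and these span $O(k^3)$ edges by the same estimate as for classical \CE~\cite{GGHN05}. Second, for the isolated cliques: since \cref{rrule:manycliques:edgedist} is not applicable, $G$ has at most $2(k+d)$ isolated cliques, and since \cref{rrule:largeclique:edgedist} is not applicable, each of them has at most $k+1$ vertices. Hence the isolated cliques together contain at most $2(k+d)(k+1)$ vertices and, as a clique on at most $k+1$ vertices has at most $\binom{k+1}{2}$ edges, at most $2(k+d)\binom{k+1}{2}$ edges.

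Adding the two contributions, the total number of vertices is at most $(k^2+2k)+2(k+d)(k+1)=3k^2+4k+2kd+2d\in O(k^2+k\cdot d)$, and the total number of edges is at most $O(k^3)+2(k+d)\binom{k+1}{2}=O(k^3+k^2\cdot d)$, which is exactly the claimed bound.

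The argument is essentially bookkeeping once the correct rules are identified, so I do not expect a genuine obstacle; the only point deserving care is that the edge-based large-clique rule~(\cref{rrule:largeclique:edgedist}) caps each isolated clique at $k+1$ vertices, whereas its matching-based counterpart~(\cref{rrule:largeclique:matchingdist}) only caps it at $k+2d+2$. It is precisely this tighter cap that replaces the $d^2$ term of \cref{lem:kernelsizematching} by the smaller cross term $k\cdot d$ here.
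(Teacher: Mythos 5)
Your proposal is correct and follows essentially the same argument as the paper's proof: bound the non-clique vertices via \cref{rrule:classickernel} (with the known $O(k^3)$ edge bound from classical \CE), bound the number of isolated cliques via \cref{rrule:manycliques:edgedist}, and bound their sizes via \cref{rrule:largeclique:edgedist}. Your explicit arithmetic, $(k^2+2k)+2(k+d)(k+1)=3k^2+4k+2kd+2d$, matches the paper's count exactly, and the edge bound follows the same way.
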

\begin{proof}
Let $I = (G=(V,E),G_c=(V,E_c),k,d)$ be an instance of a problem variant of
\DCE\ that uses the edge-based distance. 

Since \cref{rrule:classickernel} is not applicable we know that there are at most $k^2+2k$ vertices in~$G$ that are not part of an isolated clique. 
It is also known that there are $O(k^3)$ edges between those vertices~\cite{GGHN05}. 
Further, since \cref{rrule:manycliques:edgedist} is not applicable, we know that there are at most $2(k+d)$ isolated cliques in $G$.
Since \cref{rrule:largeclique:edgedist} is not applicable, we know that each isolated clique has size at most $k+1$. 
This yields a maximum number of $2 d k + 2 d + 3 k^2 + 4 k \in O(k^2+k\cdot d)$
vertices and $O(k^3+k^2\cdot d)$ edges.
\end{proof}

Finally, we can apply all data reduction rules exhaustively in~$O(|V|^3)$ time.

\begin{lem}
\label{lem:kernelcomputation}
Let $(G=(V,E),G_c=(V, E_c),k,d)$ be an instance of any one of the considered
problem variants of \DCE. Then the respective reduction rules can be
exhaustively applied in $O(|V|^3)$ time.
\end{lem}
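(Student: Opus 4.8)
The plan is to process the rules in the order they are stated and to show that all the expensive work is confined to two data structures that I precompute once: (i) for every vertex pair the number of induced $P_3$s it is contained in, and (ii) the partition of $G$ and of $G_c$ into connected components together with the information which of these components are isolated cliques. Structure (i) drives the classical-style rules (\cref{rrule:heavyedge:editing}, \cref{rrule:heavynonedge:editing}, \cref{rrule:classickernel} and their deletion/completion counterparts), while structure (ii) drives the clique rules (\cref{rrule:sameclique}--\cref{rrule:manycliques:edgedist}). Since neither removing an isolated clique nor inserting the fresh isolated clique of \cref{rrule:largeclique:matchingdist} can create a new induced $P_3$, the clique rules never re-enable the $P_3$-based rules, so a single pass through the rule list (with the $P_3$-rules iterated internally) suffices; \cref{rrule:trivial} is an $O(1)$ check performed whenever $k$ or $d$ changes.

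First I would compute structure (i) in $O(|V|^3)$ time by iterating over all triples. The crucial observation for the running time of \cref{rrule:heavyedge:editing} and \cref{rrule:heavynonedge:editing} is that a single vertex pair lies in at most $|V|-2$ induced $P_3$s, so these rules can only fire while $k\le |V|-3$; as every application decreases $k$ by one, they fire at most $O(|V|)$ times in total, independently of the magnitude of $k$. Each firing flips exactly one pair $\{u,v\}$, and flipping $\{u,v\}$ can only change the $P_3$-status of triples containing both $u$ and $v$. Hence after each flip I recompute the count of $\{u,v\}$ in $O(|V|)$ time and adjust the counts of the $O(|V|)$ pairs $\{u,w\}$ and $\{v,w\}$ by $\pm 1$ each, and then re-scan all pairs in $O(|V|^2)$ time to find the next applicable pair; this yields $O(|V|^3)$ in total. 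The ``output NO'' variants for the deletion/completion cases, as well as \cref{rrule:classickernel} (which only counts the vertices contained in some induced $P_3$), are single checks read off from structure (i).

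For the clique rules I would compute structure (ii) in $O(|V|^2)$ time. \cref{rrule:sameclique} then amounts to removing, in one sweep, all components that are identical isolated cliques in both $G$ and $G_c$. For the large-clique rules I use that each application strictly decreases the number of vertices: for the edge-based \cref{rrule:largeclique:edgedist} a clique is simply deleted, where the decrement of $d$ needs only $|E_c|$, $|E(G_c[C])|$ and $|E(G_c[V\setminus C])|$, all computable in $O(|V|^2)$; for the matching-based \cref{rrule:largeclique:matchingdist} the removed clique has size $>k+2d+2$ whereas the fresh clique $C_d$ has size $k+d+1<k+2d+2$, so the vertex count drops by more than $d+1$. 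Moreover, $|C_d|=k+d+1\le k+2d+2$ shows that $C_d$ is never itself ``large'', so this rule does not cascade on its own output. Thus each large-clique rule fires $O(|V|)$ times, each firing costing $O(|V|^2)$ (including a re-scan for the next large clique, whose threshold may shrink as $d$ decreases), for $O(|V|^3)$ overall. Finally, \cref{rrule:manycliques:matching} and \cref{rrule:manycliques:edgedist} are single count-the-isolated-cliques checks in $O(|V|^2)$, and summing all contributions gives $O(|V|^3)$.

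The main obstacle is \cref{rrule:heavyedge:editing}/\cref{rrule:heavynonedge:editing}: naively re-deriving the $P_3$ counts after every edge flip would cost $O(|V|^3)$ per flip and hence $O(|V|^4)$ in total. The two ingredients that avoid this are the $|V|-2$ cap on the number of $P_3$s through a fixed pair (which bounds the number of flips by $O(|V|)$) and the locality of a flip (only triples through the flipped pair change their status), which together permit $O(|V|)$-time incremental maintenance of structure (i). A secondary point requiring care is that \cref{rrule:largeclique:matchingdist} inserts new vertices; the size inequalities above guarantee a strictly decreasing vertex potential, so both termination and the $O(|V|)$ bound on the number of applications are preserved despite these insertions.
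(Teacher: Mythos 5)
Your overall strategy matches the paper's: exhaust the $P_3$-based rules first, observe that the clique rules never create new induced $P_3$s and never change $k$ (so the $P_3$-rules need not be revisited), and then bound the clique-rule phase by the fact that each application of \cref{rrule:largeclique:matchingdist} or \cref{rrule:largeclique:edgedist} strictly decreases $|V|$. Your re-derivation of the $O(|V|^3)$ bound for the classical rules is correct and self-contained (the cap of $|V|-2$ on the number of induced $P_3$s through a fixed pair, hence $O(|V|)$ firings, plus $O(|V|)$-time incremental maintenance of the pair counts and an $O(|V|^2)$ re-scan per firing); the paper simply cites~\cite{GGHN05} for this part, so this portion of your argument is, if anything, more explicit than the paper's.

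There is, however, one concrete gap in your ``single pass suffices'' claim: you only rule out interactions from the clique rules \emph{back to the $P_3$-rules}, not interactions \emph{among the clique rules themselves}, and such an interaction exists. \Cref{rrule:largeclique:matchingdist} can re-enable \cref{rrule:sameclique}: if the matched clique $C'$ of $G_c$ satisfies $C'\subseteq C$, then after the rule fires the fresh clique $C_d$ is an isolated clique in $G$ and also in $G_c$ (there its clique is $C_d\cup(C'\setminus C)=C_d$), so \cref{rrule:sameclique} is applicable again --- but in your schedule its single sweep has already been executed, and your algorithm terminates with an applicable rule. Since the lemma asserts precisely that the rules are \emph{exhaustively} applied (and the kernel-size lemmas downstream assume no rule is applicable), this leaves the proof incomplete as written. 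The paper handles this by exhaustively re-applying \cref{rrule:sameclique} after every application of \cref{rrule:largeclique:matchingdist}. The repair is cheap and fits your accounting --- one extra $O(|V|^2)$ sweep per application of a large-clique rule, of which there are $O(|V|)$ --- but it needs to be stated, together with a check (which does hold) that these extra \cref{rrule:sameclique} applications in turn cannot make any earlier rule applicable again.
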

\begin{proof}
We first exhaustively apply \cref{rrule:trivial},
\cref{rrule:heavyedge:editing}, \cref{rrule:heavyedge:completion}, \cref{rrule:heavynonedge:editing}, \cref{rrule:heavynonedge:deletion}, and
\cref{rrule:classickernel}. These rules are well-known data reduction rules for
classic \textsc{Cluster Editing} and it is known that these rules can
exhaustively be applied in $O(|V|^3)$ time~\cite{GGHN05} if the graph is
represented by an adjacency matrix.

It is easy to check that none of the remaining rules introduce new induced
$P_3$s to $G$, hence we know that once these rules (except
\cref{rrule:trivial}) are exhaustively applied, then they will not be applicable
after any of the other rules is applied.

From now on we assume that the graph $G$ is represented in the following way.
Adjacencies between vertices that are part of an induced $P_3$ are represented
in an adjacency matrix. We know that all other vertices are contained in
isolated cliques. We store a list of cliques and also a map from vertices to the
isolated clique they are contained in. We assume that $G_c$ is represented in
the same way. It is easy to check that this new representation can be computed in
$O(|V|^3)$ time from the adjacency matrix.

Using the new representation of $G$ and $G_c$, we can apply
\cref{rrule:sameclique} exhaustively in~$O(|V|^2)$~time.
We can check in~$O(|V|)$~time whether \cref{rrule:largeclique:matchingdist}
or \cref{rrule:largeclique:edgedist} is applicable and if so, apply the rule
in~$O(|V|^2)$~time. After each application of \cref{rrule:largeclique:matchingdist} we exhaustively apply
\cref{rrule:sameclique}. Since each application of \cref{rrule:largeclique:matchingdist}
or \cref{rrule:largeclique:edgedist} decreases the number of vertices in~$V$
by at least one, we can apply these rules exhaustively in~$O(|V|^3)$~time.
Using our representation of~$G$, we can apply
\cref{rrule:manycliques:matching} and \cref{rrule:manycliques:edgedist} in~$O(|V|)$~time.

Altogether, we obtain an overall running time in $O(|V|^3)$.
\end{proof}

It is easy to see that \cref{thm:polykernel} directly follows from
\cref{lem:kernelsizematching}, \cref{lem:kernelsizeedgebased}, and
\cref{lem:kernelcomputation}. We remark that the number of edges that are not
part of an isolated clique can be bounded by $O(k^3)$~\cite{GGHN05}.

\subsection{Fixed-Parameter Tractable Cases for Single Parameters}\label{ssec:general-approach}

In this section we show that several variants of \DCE\ are fixed-parameter
tractable with respect to either the budget $k$ or the distance $d$. 
\begin{thm}
	\label{thm:fpt}
	\DCDED  is in \FPT when parameterized by the budget~$k$.
	\DCCMD\ and \DCCED\ are in \FPT\ when parameterized by the distance~$d$.
\end{thm}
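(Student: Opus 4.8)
The plan is to reduce all three problems to \MCK and to exploit that an \MCK instance with $m$ items in total and knapsack capacity $W$ can be solved in $O(m\cdot W)$ time by the textbook dynamic program over the capacity. In each case the parameter ($k$ for \DCDED, $d$ for \DCCMD and \DCCED) will play the role of the capacity~$W$, so that, together with a polynomial bound on the number of items, one obtains an \FPT-algorithm.

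First, I would preprocess so that the instance is governed by a bounded family of essentially independent objects. For \DCDED I would apply \cref{rrule:heavyedge:editing,rrule:heavynonedge:deletion,rrule:classickernel}, after which the vertices lying on induced $P_3$s form a ``complex part'' of size $O(k^2)$ and all remaining vertices sit in isolated cliques. Since the budget is~$k$, any solution modifies the complex part with at most~$k$ deletions; I would enumerate all ways of turning it into a cluster graph by standard branching on $P_3$s (deleting one of the two edges of each $P_3$), giving $O(2^k)$ branches, each fixing a residual budget $k'\le k$ and a constant offset to the edge-based distance (the $G_c$-edges incident to, or inside, the complex part do not change once the branch is fixed, since deletion can neither merge the complex part with a clique nor move vertices). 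For the completion variants the natural objects are the connected components of~$G$: as only insertions are allowed, every component lies entirely inside a single cluster of any solution~$G'$, so a solution is exactly a grouping of the components into cliques.

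Second, I would phrase the remaining choice as \MCK. For \DCDED, within a fixed branch, I create one class per isolated clique whose items are the partitions of that clique reachable with at most~$k'$ deletions; the weight of an item is the number of deletions it uses and its profit is the resulting reduction of the edge-based distance to~$G_c$ (relative to~$G$). Because $d_E$ is a symmetric difference of edge sets and distinct isolated cliques are vertex-disjoint, both weight and profit are additive over the classes, so maximizing total profit subject to total weight $\le k'$ decides whether a cluster graph within budget~$k$ attains $d_E(G',G_c)\le d$. For \DCCED and \DCCMD I dually use~$d$ as the capacity: an item's weight is its contribution to $d_E$ resp.\ $d_M$, while its profit is the number of inserted edges, which I minimize and finally compare against the insertion budget~$k$.

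The main obstacle is twofold. The first difficulty is to bound the number of items per class polynomially: a clique on~$s$ vertices has super-polynomially many set partitions, so I would argue that only polynomially many distinct (cost, distance)-profiles are attainable within the budget and keep, for each attainable cost, a single distance-optimal representative. The second and harder difficulty is specific to \DCCMD: unlike $d_E$, the distance $d_M$ is defined through one global maximum-weight matching between the clusters of~$G'$ and those of~$G_c$, and therefore does not obviously decompose additively over the objects. The crux will be a structural lemma showing that in an optimal solution with $d_M(G',G_c)\le d$ the matching may be assumed to pair each large cluster of~$G_c$ with a unique cluster of~$G'$; this alignment localizes the distance so that each component's contribution depends only on the $G_c$-cluster it is assigned to, restoring the per-class additivity that the \MCK formulation requires.
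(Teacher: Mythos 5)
Your high-level strategy -- decompose the instance into parts, encode the per-part choices as \MCK items, and solve the knapsack by dynamic programming -- is exactly the paper's strategy, but the proposal has genuine gaps in all three problem-specific instantiations.

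For \DCDED, the step that fails is fixing the complex part's distance contribution to a constant per branch. Standard $P_3$-branching enumerates only inclusion-\emph{minimal} deletion sets: it stops as soon as the complex part has become a cluster graph. In the dynamic setting, non-minimal deletions matter, because deleting further edges inside a clique of the resulting cluster graph can decrease the distance to~$G_c$. Concretely, let $G$ be a single path $a$--$b$--$c$, let $G_c$ be edgeless, and let $k=2$, $d=0$. All three vertices lie on an induced $P_3$, so they form your complex part and no reduction rule applies; your two branches delete $\{a,b\}$ or $\{b,c\}$, each leaving a cluster graph at edge-based distance $1>d$, yet deleting \emph{both} edges is a valid solution. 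Your algorithm answers NO on this yes-instance, so the claim that each branch fixes ``a constant offset to the edge-based distance'' is false. The paper avoids this by enumerating, for every part (including the complex one), \emph{all} edge subsets whose removal yields a cluster graph, not only the minimal ones; alternatively you could turn the cliques produced by each branch into further \MCK classes, but as written the argument breaks.

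For the completion variants the gap is more fundamental: you never define the \MCK classes. Since only insertions are allowed, a solution is a partition of the connected components of~$G$ into groups that get merged, and these merging decisions \emph{couple} components; the additivity of $d_E$ over vertex pairs therefore does not by itself provide the independence that \MCK requires. Taking single components as classes leaves the items ill-defined (an item would have to say which \emph{other} components to join), and taking all components as one class gives exponentially many items. What is needed, and what the paper proves, is a grouping lemma: assign each clique $C$ of $G$ to the clique of $G_c$ containing more than half of $C$'s vertices (if any), and show that some solution merges only cliques with the same assignment (\cref{lem: standard solution for DCCED}, via \cref{lem:T-map-difference}); in addition, the number of items per class must be bounded in~$d$, which the paper obtains by showing that a group with at least $d+2$ cliques must be merged entirely, brute-forcing over the at most $(d+1)^{d+1}$ partitions otherwise. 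None of this appears in your treatment of \DCCED, where decomposition is treated as automatic. For \DCCMD you do name the missing structural ingredient, but only as a wish: the concrete mechanism in the paper is a branching over the at most $d$ cliques of $G$ that intersect several cliques of $G_c$ (at most $d^{d+2}$ branches fixing how each such clique is matched), after which the same grouping and item-bounding arguments apply. So for the completion variants the proposal identifies the right target but omits precisely the lemmas that carry the proof.
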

All our \FPT results in \cref{ssec:general-approach} are using the same
approach:
We reduce (in \FPT time) the input to an instance of \MCK (MCK), formally defined as follows.
\decprob{\MCK (MCK)}
{A family of $\ell$ mutually disjoint sets~$S_1, \ldots, S_\ell$ of items, a weight~$w_{i,j}$ and a profit~$p_{i,j}$ for each item~$j \in S_i$, and two integers~$W$ and~$P$.}
{Is it possible to select one item from each set~$S_i$ such that the total
profit is at least~$P$ and the total weight is at most~$W$?} MCK is solvable
in pseudo-polynomial time by dynamic programming:
\begin{lem}[{\cite[Section 11.5]{KPP04}}]\label{lem:MCK-pseudopoly}
	MCK can be solved in~$O(W \cdot \sum_{i=1}^\ell |S_i|)$ time.
\end{lem}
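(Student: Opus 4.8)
The plan is to solve MCK by a standard dynamic program over the weight budget, processing the groups $S_1, \ldots, S_\ell$ one at a time. First I would define a table $T[i][w]$ for $0 \le i \le \ell$ and $0 \le w \le W$, where $T[i][w]$ is the maximum total profit attainable by selecting \emph{exactly one} item from each of the groups $S_1, \ldots, S_i$ subject to the constraint that the selected items have total weight at most~$w$; if no such selection exists, we set $T[i][w] = -\infty$. The instance is a yes-instance if and only if $T[\ell][W] \ge P$: a selection witnessing $T[\ell][W] \ge P$ has weight at most~$W$ and profit at least~$P$, and conversely any feasible selection certifies $T[\ell][W] \ge P$.

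The base case is $T[0][w] = 0$ for every $0 \le w \le W$, reflecting the empty selection of profit zero. For $i \ge 1$ I would use the recurrence
\begin{equation*}
T[i][w] = \max_{\,j \in S_i \,:\, w_{i,j} \le w} \bigl( T[i-1][\,w - w_{i,j}\,] + p_{i,j} \bigr),
\end{equation*}
where the maximum over an empty index set is taken to be $-\infty$. Correctness follows by induction on~$i$: any optimal selection for the first $i$ groups within weight budget~$w$ picks some item $j \in S_i$ with $w_{i,j} \le w$, and its remaining choices form a selection for the first $i-1$ groups within budget $w - w_{i,j}$ that must itself be optimal (a cut-and-paste exchange argument). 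Hence the right-hand side equals the left-hand side, and the induction carries through.

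For the running time I would fill the table in order of increasing~$i$, computing for each fixed~$i$ all $W+1$ entries $T[i][0], \ldots, T[i][W]$. Evaluating a single entry $T[i][w]$ scans the items of $S_i$ once, so computing the whole row for group~$i$ costs $O(W \cdot |S_i|)$ time. Summing over the groups yields $\sum_{i=1}^\ell O(W \cdot |S_i|) = O\bigl(W \cdot \sum_{i=1}^\ell |S_i|\bigr)$ time overall, matching the claimed bound; reading off the answer from $T[\ell][W]$ is an additional $O(1)$.

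There is no deep obstacle here, so the care is purely in the bookkeeping. I must ensure the ``exactly one item per group'' requirement is encoded correctly, which the recurrence enforces by always advancing~$i$ by exactly one and charging exactly one item's weight and profit. I must also propagate infeasible partial states as $-\infty$ rather than silently treating them as profit~$0$, and I must compute each row with a single pass over its group so as not to inflate the running time. Using the ``at most~$w$'' formulation (rather than ``exactly~$w$'') lets me read the answer directly from $T[\ell][W]$ without an extra maximization over smaller weight budgets.
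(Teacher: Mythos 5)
Your proposal is correct and matches the intended argument: the paper does not prove this lemma itself but cites it from the knapsack monograph of Kellerer, Pferschy, and Pisinger, and the proof there is precisely this standard dynamic program over groups and weight budgets, with the recurrence, $-\infty$ handling, and per-row $O(W \cdot |S_i|)$ accounting you describe. There is nothing to add or correct.
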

As our approach is easier to explain with the edge-based distance, we start with this case and afterwards show how to extend it to the matching-based distance.
As already exploited in our reductions showing \NP-hardness (see Theorem~\ref{thm:completionhardness}), all variants of \DCE carry some number-problem flavor.
Our generic approach will underline this flavor: 
We will focus on cases where we can partition the vertex set of the input graph into parts such that we will neither add nor delete an edge between two parts.
Moreover, we require that the parts are ``easy'' enough to list all Pareto-optimal (with respect to~$k$ and~$d$) solutions in \FPT-time (this is usually achieved by some kernelization arguments).
However, even with these strict requirements  we cannot solve the parts independently from each other:
The challenge is that we have to select for each part an appropriate Pareto-optimal solution.
Finding a feasible combination of these part-individual solutions
leads to a knapsack-type problem (in this case MCK).
Indeed, this is common to all studied variants of \DCE.


The details for our generic four-step-approach (for edge-based distance) are given subsequently.
In order to apply this approach on a concrete problem variant, we have to show (using problem specific arguments) how the requirements in the first two steps can be met.
\begin{enumerate}
	\item 
	When necessary, apply polynomial-time data reduction rules from
	\cref{ssec:kernelization}.
	Partition the input graph~$G = (V,E)$ into different parts~$G_1, G_2, \ldots,
	G_{\ell+1}$ for some $\ell\le|V|$
	such that 
	\begin{itemize}
		\item in~$G$ there is no edge between the parts and 
		\item if there is a solution, then there exists a solution where no edge between two parts will be inserted or deleted.
	\end{itemize}
	\label[step]{step:split}
	\item Compute for each part~$G_i = (V_i,E_i)$, $1\le i\le\ell$, a set~$S_i \subseteq \N^2$ encoding ``cost'' and ``gain'' of all ``representative'' solutions for~$G_i$.
		The size of the set~$S_i$ has to be upper-bounded in a function of the parameter~$p$. (Here, $p$ will be either~$k$ or~$d$.)
		
		More precisely, select a family~$\mathcal{E}_i$ of~$f(p)$ edge sets such that for each edge set~$E'_i \subseteq \binom{V_i}{2}$ in~$\mathcal{E}_i$ the graph~$G'_i = (V_i, E'_i \oplus E_i)$ is a cluster graph achievable with the allowed number of modification operations ($G'_i = (V_i, E_i \setminus E'_i)$ for edge deletions and~$G'_i = (V_i, E_i \cup E'_i)$ for edge insertions).
		For each such edge set~$E'_i$, add to~$S_i$ a tuple containing the cost ($=|E'_i|$) and ``decrease'' of the distance from~$G_i$ to the target cluster graph~$G_c$.
		More formally, for edge insertions add~$(|E'_i|, |E'_i \cap E_c| - |E'_i \setminus E_c|)$ to~$S_i$ or for edge deletions add~$(|E'_i|, |E'_i \setminus E_c|-|E'_i \cap E_c|)$ to~$S_i$, where~$E_c$ is the edge set of~$G_c$.
		Note that we allow~$E'_i = \emptyset$, that is, if~$G_i$ is a cluster graph, then~$S_i$ contains the tuple~$(0,0)$.
		
		The set~$S_i$ has to fulfill the following property:
		If there is a solution, then there is a solution~$G'$ such that restricting~$G'$ to~$V_i$ yields a tuple in~$S_i$.
		More precisely, we require that~$(|E(G'[V_i]) \oplus E_i|, |(E(G'[V_i]) \oplus E_i) \cap E_c| - |(E(G'[V_i]) \oplus E_i) \setminus E_c|) \in S_i$.%
		\label[step]{step:solutions-for-one-part}%
	\item Create an MCK instance~$I$ with~$W = k$, $P = |E \oplus E_c| - d$, and the sets~$S_1, S_2, \ldots, S_\ell$ where the tuples in the sets correspond to the items with the first number in the tuple being its weight and the second number being its profit. \label[step]{step:createMCK}
	\item Return true if and only if $I$ is a yes-instance. \label[step]{step:solveMCK}
\end{enumerate}
Note that the requirement in \cref{step:split} implies that a part is a collection of connected components in~$G$.
Furthermore, note that the part~$G_{\ell+1}$ will be ignored in the subsequent steps.
Thus~$G_{\ell+1}$ contains all vertices which are not contained in an edge of the edge modification set. 
Observe that~$\ell \le n$.
Hence, we have~$\sum_{i=1}^{\ell}|S_i| \in O(f(p)\cdot n)$. (The parameter~$p$ will be either~$k$ or~$d$.)
Moreover, as~$k$ and~$d$ are smaller than~$n^2$, it follows that~$W < n^2$ and
thus, by Lemma~\ref{lem:MCK-pseudopoly}, the MCK instance~$I$ created in
\cref{step:createMCK} can be solved in~$f(p)\cdot n^3$ time in \cref{step:solveMCK}.
This yields the following.
\begin{obs}\label{obs:fpt}
	If the partition in \cref{step:split} and the sets~$S_i$ in
	\cref{step:solutions-for-one-part} can be computed in $g(p)\cdot n^c$ time for
	some function $g$ and constant $c$ and the 
	then the above four-step-approach runs in $g(p)\cdot n^c + f(p)\cdot n^3$ time.
\end{obs}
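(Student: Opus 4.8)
The plan is to simply compose the running times of the four steps, using the assumed bounds for the first two steps together with \cref{lem:MCK-pseudopoly} for the last two. By hypothesis, the partition of \cref{step:split} and all the sets~$S_1,\dots,S_\ell$ of \cref{step:solutions-for-one-part} are produced in $g(p)\cdot n^c$ time, so the total cost of the algorithm is this term plus the time needed to build and solve the \MCK instance in \cref{step:createMCK,step:solveMCK}. Thus the only thing I would actually verify is that the \MCK phase runs in $O(f(p)\cdot n^3)$ time.

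First I would bound the size of the \MCK instance. Since the requirement in \cref{step:split} forces each part to be a union of connected components, there are at most~$\ell\le n$ sets~$S_i$ (the part~$G_{\ell+1}$ is discarded). Each set~$S_i$ has size at most~$f(p)$ by the requirement in \cref{step:solutions-for-one-part}, and hence $\sum_{i=1}^{\ell} |S_i|\in O(f(p)\cdot n)$. Assembling the \MCK instance in \cref{step:createMCK}---that is, reading off for every tuple its weight and profit and setting $W=k$ and $P=|E\oplus E_c|-d$---then takes time linear in the total number of items, which is absorbed into $O(f(p)\cdot n)$.

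The key quantitative observation I would use is that the weight bound $W=k$ is polynomially bounded: since any edge-modification set has size at most~$\binom{n}{2}$, we have $k<n^2$, so $W<n^2$. Feeding $W<n^2$ and $\sum_{i=1}^{\ell}|S_i|\in O(f(p)\cdot n)$ into the pseudo-polynomial bound of \cref{lem:MCK-pseudopoly} yields a running time of
\[
O\!\left(W\cdot \sum_{i=1}^{\ell} |S_i|\right)=O\!\left(n^2\cdot f(p)\cdot n\right)=O(f(p)\cdot n^3)
\]
for \cref{step:solveMCK}. Adding the $g(p)\cdot n^c$ cost of the first two steps gives the claimed total running time $g(p)\cdot n^c+f(p)\cdot n^3$.

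Since every ingredient is already in place, there is essentially no genuine obstacle: the argument is pure running-time bookkeeping. The one point that genuinely carries the result---and the only place I would be careful---is the conversion of the pseudo-polynomial bound of \cref{lem:MCK-pseudopoly} into a polynomial one, which hinges precisely on the fact that $W=k<n^2$. Everything else is immediate from the assumed bounds together with the structural requirements imposed in \cref{step:split,step:solutions-for-one-part}.
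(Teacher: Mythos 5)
Your proof is correct and follows essentially the same route as the paper: bound $\ell\le n$ to get $\sum_{i=1}^{\ell}|S_i|\in O(f(p)\cdot n)$, note that $W=k<n^2$, and plug both into the pseudo-polynomial bound of \cref{lem:MCK-pseudopoly} to obtain the $f(p)\cdot n^3$ term, adding the assumed $g(p)\cdot n^c$ cost of the first two steps.
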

Note that \cref{step:split} \cref{step:solutions-for-one-part} are different for every problem variant we consider.
There are, however, some similarities between the variants where only edge insertions are allowed.
 \subparagraph*{Edge-based distance.}
Next we use the above mentioned approach to show that \DCDED and \DCCED are both
fixed-para\-meter tractable with respect to~$k$ and that \DCCED is also fixed-parameter tractable with respect to~$d$.
Since \cref{step:split,step:solutions-for-one-part} are easiest to explain for
the edge-deletion variant, we start with \DCDED.
Note that the requirements of \cref{step:split,step:solutions-for-one-part}
seem impossible to achieve in FPT-time when allowing edge insertions and deletions.
Indeed, as shown in Theorem~\ref{thm:Whard}, the corresponding edge-edit
variants are \W1-hard with respect to the studied (single) parameters~$k$
and~$d$, respectively.

\begin{lem}
	\label{thm: DCDED is FPT wrt k}
	\DCDED can be solved in $O(4^{k^2}\cdot n^3)$ time and thus is in \FPT
	when parameterized by the budget~$k$.
\end{lem}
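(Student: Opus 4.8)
The plan is to instantiate the generic four-step approach for \DCDED{} with parameter $p = k$, exploiting that for the pure edge-deletion variant the connected components of $G$ can never merge. First I would apply the data reduction rules of \cref{ssec:kernelization}: the classical \CE\ rules together with \cref{rrule:classickernel} bound the set of vertices lying on an induced $P_3$ by $k^2 + 2k$, and \cref{rrule:largeclique:edgedist} removes every isolated clique of size larger than $k+1$. Since every vertex of a connected component that is not a clique lies on an induced $P_3$ (if $v$ has two non-adjacent neighbours they form a $P_3$ with $v$; otherwise connectivity yields a neighbour $u$ of $v$ together with a non-neighbour $w$ of $v$ that is adjacent to $u$), the ``complex'' (non-clique) components together contain at most $k^2 + 2k$ vertices and hence $O(k^3)$ edges, while every surviving isolated clique has at most $k+1$ vertices.

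For \cref{step:split} I would take the parts $G_1, \dots, G_\ell$ to be exactly the connected components of the reduced graph (with $G_{\ell+1}$ empty). There are no edges between distinct components, and because only deletions are allowed no edge between two parts can ever be inserted; thus both requirements of \cref{step:split} hold trivially. This is the decisive use of the deletion restriction: for the editing variant, components could be merged by insertions, which is precisely why the analogous construction fails there.

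For \cref{step:solutions-for-one-part} I would, for each part $G_i = (V_i, E_i)$, enumerate every edge set $E'_i \subseteq E_i$ with $|E'_i| \le k$ for which $(V_i, E_i \setminus E'_i)$ is a cluster graph, and record the tuple $(|E'_i|,\, |E'_i \setminus E_c| - |E'_i \cap E_c|)$ in $S_i$. The crucial counting step is that the number of candidate sets is at most $\binom{|E_i|}{\le k}$: for a complex component $|E_i| = O(k^3)$ and for an isolated clique $|E_i| = O(k^2)$, so in either case $|S_i| \le \binom{O(k^3)}{\le k} \le 4^{k^2}$. Here the prior application of \cref{rrule:largeclique:edgedist} is essential, since without bounding clique sizes, enumerating $\le k$-subsets of a size-$m$ clique would cost $m^{\Theta(k)}$ and would only place the problem in \XP. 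Each candidate is tested and evaluated in polynomial time, so all sets $S_i$ are computed in $4^{k^2} \cdot \poly(n)$ time; moreover, restricting any global solution to $V_i$ deletes such a set $E'_i$, so its tuple is indeed present in $S_i$.

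Finally, the last two steps (\cref{step:createMCK} and \cref{step:solveMCK}) build the \MCK\ instance with $W = k$ and $P = |E \oplus E_c| - d$. Correctness hinges on the identity that deleting an edge decreases $d_E(\cdot, G_c)$ by exactly $1$ if the edge is absent from $G_c$ and increases it by $1$ otherwise, so that the total gain of a selection equals $|E \oplus E_c| - d_E(G', G_c)$; hence a feasible \MCK\ selection (total weight $\le k$, total profit $\ge P$) corresponds exactly to a cluster graph $G'$ with $|E \oplus E(G')| \le k$ and $d_E(G', G_c) \le d$, where combining the chosen per-part deletions yields a cluster graph because no cross-part edges exist, and an empty $S_i$ (an unfixable complex component) correctly forces a no-answer. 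Since there are at most $n$ parts and each $|S_i| \le 4^{k^2}$, \cref{lem:MCK-pseudopoly} and \cref{obs:fpt} give total running time $O(4^{k^2} \cdot n^3)$. The main obstacle is the counting bound of \cref{step:solutions-for-one-part}, namely arguing that kernelization leaves only $O(k^3)$ complex edges and cliques of size at most $k+1$ so that the per-part enumeration stays within $4^{k^2}$; verifying the gain/distance identity underlying the \MCK\ reduction is the second, more routine, point to check.
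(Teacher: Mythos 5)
Your proposal is correct and follows essentially the same route as the paper: kernelize with the classical \CE{} rules (plus the large-clique rule), split the reduced graph into the bounded $P_3$-region and the small isolated cliques, enumerate per-part deletion sets together with their cost/gain tuples, and combine them via \MCK{} using \cref{obs:fpt}. Your two deviations are minor and in fact beneficial: you take parts to be individual connected components where the paper merges all non-clique components into one part~$G_1$, and you restrict the enumeration to deletion sets of size at most~$k$, which justifies the claimed $4^{k^2}$ bound more cleanly than the paper's unrestricted enumeration of all edge subsets.
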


\begin{proof}
We first apply the known data reduction rules for \CE\ (see discussion after
Theorem~\ref{thm:polykernel}).
	As a result, we end up with a graph where at most~$k^2 + 2k$ vertices are contained in an induced~$P_3$; all other vertices form a cluster graph with cliques containing at most $k$ vertices each.
Denote with~$G$ the resulting graph.
	
	Now we apply our generic four-step approach. 
	Thus we need to provide the details how to implement
	\cref{step:split,step:solutions-for-one-part}. We define the parts~$G_1, G_2, \ldots, G_\ell, G_{\ell+1}$ of \cref{step:split} as follows:
	The first part~$G_1 = (V_1, E_1)$ contains the graph induced by all vertices contained in a~$P_3$.
	Each of the cliques in the cluster graph~$G[V \setminus V_1]$ forms another part~$G_i$, $2 \le i \le \ell$.
	Finally, set~$G_{\ell+1} = (\emptyset,\emptyset)$, that is, we include all vertices in the subsequent steps of our generic approach.
	Clearly, each part contains less than~$2k^2$ vertices.
	Moreover, observe that there are no edges between the parts. 
	
	As to \cref{step:solutions-for-one-part}, we add, for every edge set~$E'_i
	\subseteq E_i$ with~$G'_i = (V_i, E_i \setminus E'_i)$ being a cluster graph, a
	tuple $(|E'_i|, |E'_i \setminus E_c|-|E'_i \cap E_c|)$ to~$S_i$.
	As this enumerates all possible solutions for~$G_i$, the requirement in \cref{step:solutions-for-one-part} is fulfilled. 
	Together with \cref{obs:fpt} we get the statement of the lemma. 
\end{proof}

Next we show that \DCCED is in \FPT 
with respect to~$d$.
Before applying our generic approach, we make some observations.
Since we can only insert edges in \DCCED, we can make all vertices in a connected component pairwise adjacent.

\begin{obs}
	\label{obs: DCCED with G cluster graph}
	Let~$(G,G_c,k,d)$ be an instance of \DCCED or of \DCCMD.
	If~$G$ is not a cluster graph, then there is an equivalent instance~$(G^*,G_c,k^*,d)$ with~$G^*$ being a cluster graph and $k^* < k$.
\end{obs}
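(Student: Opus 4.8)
The plan is to construct $G^*$ by completing each connected component of $G$ into a clique and to show that the edges added in this process are forced in every solution. First I would observe that in both \DCCED and \DCCMD only edge insertions are allowed, so any solution $G'$ satisfies $E(G) \subseteq E(G')$. Since $G'$ is a cluster graph, its connected components are precisely its clusters (cliques), and being in a common cluster is an equivalence relation. Because $E(G) \subseteq E(G')$, every connected component of $G$ is contained in a single connected component of $G'$, and hence in a single clique; therefore any two vertices lying in the same connected component of $G$ must be adjacent in $G'$.

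I would then define $G^*$ as the graph obtained from $G$ by adding, for each connected component $C$ of $G$, all missing edges within $C$. This makes every component of $G$ a clique, so $G^*$ is a cluster graph. Setting $t := |E(G^*) \oplus E(G)|$ to be the number of added edges and $k^* := k - t$, I would note that because $G$ is not a cluster graph it contains an induced $P_3$, so at least one component is not a clique and $t \ge 1$; thus $k^* < k$, as required.

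The equivalence of $(G, G_c, k, d)$ and $(G^*, G_c, k^*, d)$ then follows from the forcing property. For the forward direction, given a solution $G'$ for the first instance, the first paragraph yields $E(G^*) \subseteq E(G')$, so $G'$ is a completion of $G^*$ as well; moreover $|E(G') \oplus E(G^*)| = |E(G') \oplus E(G)| - t \le k - t = k^*$, and $\dist(G', G_c)$ is unchanged since the distance (edge-based or matching-based) depends only on $G'$ and $G_c$, not on the input graph. The backward direction is symmetric, using $E(G) \subseteq E(G^*) \subseteq E(G')$ and $|E(G') \oplus E(G)| = |E(G') \oplus E(G^*)| + t \le k^* + t = k$. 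If $k^* < 0$, then $t > k$ and both instances are trivially no-instances (the forced edges alone exceed the budget), so the equivalence still holds.

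The main obstacle, though a mild one, is the forcing argument itself: I must verify that completing each connected component is not merely a convenient normalization but is genuinely required by every solution. This is precisely where the completion restriction $E(G) \subseteq E(G')$ is essential; without it one could split a component instead of completing it, and the reduction would break, which is consistent with the corresponding edge-edit variants being \W1-hard.
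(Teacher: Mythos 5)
Your proposal is correct and follows essentially the same route as the paper's proof: since only edge insertions are allowed, every connected component of~$G$ is forced to become a clique in any solution, so one completes each component to obtain~$G^*$ and reduces the budget by the number of inserted edges. The paper states this tersely, whereas you additionally spell out the forcing argument, the budget arithmetic, and the degenerate case~$k^*<0$; these are worthwhile details but not a different approach.
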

{
\begin{proof}
	Since we are only allowed to insert edges and the solution graph is required to be a cluster graph, it follows that every connected component of~$G$ has to be made into clique. 
	Call the resulting cluster graph~$G^*$ and denote by~$k'$ the number of inserted edges. 
	Clearly, $(G,G_c,k,d)$ is a yes-instance if and only if~$(G^*,G_c,k-k',d)$ is a yes-instance. 
\end{proof}
}

\cref{obs: DCCED with G cluster graph} allows us in the following to assume that all connected components in~$G$ are cliques.
The main difference between \DCDED and \DCCED is that in the former we can partition cliques independently and thus the partition for \cref{step:split} is straightforward.
In the latter problem, \cref{step:split} requires some preliminary observations.
We subsequently show that for \DCCED we can partition the graph~$G$ according to cliques in~$G_c$ such that no edges between the parts are added.

We assign every clique~$C$ in~$G$ a number that indicates the ``best match'' in~$G_c$, that is, the clique~$D$ in~$G_c$---if existing---that contains more than half of the vertices of~$C$.
More formally, let~$\mathcal{C}$ be the set all cliques in~$G$ and~$\mathcal{D}
= \{D_1, D_2, \dots, D_q\}$ be the cliques in~$G_c$.
We define a function~$T\colon \mathcal{C} \rightarrow \{0,1,\dots,q\}$ mapping a clique~$C \in \mathcal{C}$ to a number between~$0$ and~$q$ as follows:
\[
  T(C) =
  \begin{cases}
    i & \text{if } \exists i\colon |C \cap D_i| > \frac{1}{2}|C|; \\
    0 & \text{otherwise}.
  \end{cases}
\]
We say that we \emph{merge} two cliques~$C_i$ and~$C_j$ when we add all edges between~$C_i$ and~$C_j$. 
We show next that we can assume that we only merge cliques~$C_i$ and~$C_j$ with~$T(C_i) = T(C_j)$ (see \cref{figure: completion} for an illustration).

\begin{defi}
	Let~$(G,G_c,k,d)$ be a yes-instance of \DCCED such that~$G$ is a cluster graph.
	A solution~$G' = (V,E')$, $E' \supseteq E$, is called a \emph{standard} solution if for each clique~$C'$ in~$G'$ the following holds:
	If~$C_1 \neq C_2$ are two cliques in~$G$ and~$C_1, C_2 \subseteq C'$, then~$T(C_1) = T(C_2) > 0$.
\end{defi}
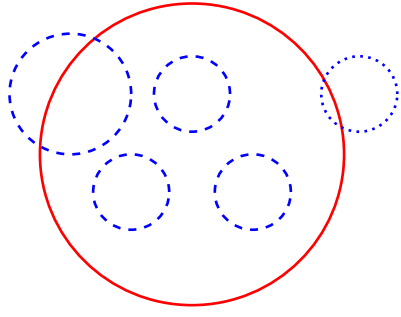
\begin{figure}[tb]
	\begin{center}
		\begin{tikzpicture}[line width=1pt, scale=1] 

			\draw[red] (0,0) circle (2cm);

			\draw[blue,dashed] (0,0.8) circle (0.5cm);
			\draw[blue,dashed] (0.8,-0.5) circle (0.5cm);
			\draw[blue,dashed] (-0.8,-0.5) circle (0.5cm);
			\draw[blue,dotted] (2.2,0.8) circle (0.5cm);
			\draw[blue,dashed] (-1.6,0.8) circle (0.8cm);
		\end{tikzpicture}
	\end{center}
	\caption{
		Illustration of a possible combination of cliques in $G$ (blue dashed or
		dotted circles).
		The red solid circle denotes a clique in $G_c$.
		Less than half of the vertices in the clique represented by the dotted (rightmost) circle are contained in the clique represented by the red solid circle.
		\cref{lem: standard solution for DCCED} states that the clique represented by the dotted circle will not be combined with a clique represented by a dashed circle.
	}
	\label{figure: completion}
\end{figure}
We start with a technical lemma that essentially states that merging two cliques~$C_i$ and~$C_j$ with~$T(C_i) \neq T(C_j)$ or~$T(C_i) = 0$ will not decrease the edge-based distance.
\begin{lem}
	\label{lem:T-map-difference}
	Let~$(G,G_c,k,d)$ be an instance of \DCCED, let~$C_0,\allowbreak C_1, \allowbreak \ldots, C_r$ be isolated cliques in~$G$ with~$r \ge 1$, and let~$E^* = \{\{u,v\} \mid u \in C_0, v \in C_1 \cup \ldots \cup C_r\}$.
	If~$T(C_0) = 0$ or if~$T(C_0) \neq T(C_i)$ for all~$1 \le i \le r$, then~$|E^* \cap E(G_c)| \le |E^* \setminus E(G_c)|$.
\end{lem}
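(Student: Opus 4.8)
The plan is to convert the stated inequality into a clean per-clique estimate and then discharge it with a short sign argument. First I would fix notation: let $D_1,\dots,D_q$ be the cliques of~$G_c$ and set $c_0^\ell = |C_0 \cap D_\ell|$ and $c_i^\ell = |C_i \cap D_\ell|$ for $1 \le i \le r$. Since $G_c$ is a cluster graph, every vertex lies in exactly one~$D_\ell$, so $\sum_\ell c_0^\ell = |C_0|$ and $\sum_\ell c_i^\ell = |C_i|$. An edge of $E^*$ joining $u \in C_0$ to $v \in C_i$ lies in $E(G_c)$ precisely when $u$ and $v$ share a clique~$D_\ell$; counting contributions clique by clique and using that $C_0, C_1, \dots, C_r$ are pairwise disjoint yields
\begin{equation*}
  |E^* \cap E(G_c)| = \sum_{i=1}^r \sum_{\ell=1}^q c_0^\ell c_i^\ell, \qquad |E^*| = |C_0| \sum_{i=1}^r |C_i|.
\end{equation*}
As $|E^* \setminus E(G_c)| = |E^*| - |E^* \cap E(G_c)|$, the claim is equivalent to $2\,|E^* \cap E(G_c)| \le |E^*|$. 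Hence it suffices to prove, for each fixed~$i$, the single-clique bound $\sum_{\ell} c_0^\ell c_i^\ell \le \tfrac12 |C_0|\,|C_i|$; summing over~$i$ then gives the result.

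The case $T(C_0) = 0$ is immediate: by definition of~$T$ we have $c_0^\ell \le \tfrac12 |C_0|$ for every~$\ell$, so $\sum_\ell c_0^\ell c_i^\ell \le \tfrac12 |C_0| \sum_\ell c_i^\ell = \tfrac12 |C_0|\,|C_i|$.

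The substantive case is $T(C_0) = t \neq 0$ with $T(C_i) \neq t$ for all~$i$. Here $c_0^t > \tfrac12 |C_0|$, while the hypothesis $T(C_i) \neq t$ forces $c_i^t \le \tfrac12 |C_i|$ (if $T(C_i)=0$ this is direct; if $T(C_i)=s\neq t$ then $c_i^t \le |C_i| - c_i^s < \tfrac12 |C_i|$, using uniqueness of the majority clique). The plan is to split off the index $\ell = t$ and bound the remainder by a product of sums: since all entries are nonnegative, $\sum_{\ell \neq t} c_0^\ell c_i^\ell \le \bigl(\sum_{\ell \neq t} c_0^\ell\bigr)\bigl(\sum_{\ell \neq t} c_i^\ell\bigr) = (|C_0| - c_0^t)(|C_i| - c_i^t)$, because the product of the two sums expands to this diagonal sum plus nonnegative cross terms. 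Writing $a = c_0^t$ and $b = c_i^t$, the upper bound becomes $ab + (|C_0|-a)(|C_i|-b)$, which expands and factors as $\tfrac12 |C_0|\,|C_i| + (2a - |C_0|)(b - \tfrac12 |C_i|)$. Since $a > \tfrac12|C_0|$ and $b \le \tfrac12|C_i|$, the correction term is nonpositive, so $\sum_\ell c_0^\ell c_i^\ell \le \tfrac12 |C_0|\,|C_i|$, as needed.

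I expect the main obstacle to be the bookkeeping of the second case rather than any deep idea: one must correctly extract $c_i^t \le \tfrac12 |C_i|$ from $T(C_i) \neq t$ via uniqueness of the majority clique, and then spot the factorization $(2a - |C_0|)(b - \tfrac12 |C_i|)$ that makes the sign of the correction term transparent. The product-of-sums estimate for the off-diagonal part is the easily overlooked step, but it is valid for arbitrary nonnegative entries and requires no structural assumption on the cliques.
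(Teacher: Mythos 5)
Your proof is correct, but it takes a genuinely different route from the paper's. The paper reduces to $r=1$, writes $|E^*\cap E(G_c)|=\sum_{i}|C_0\cap D_i|\cdot|C_1\cap D_i|$ and $|E^*\setminus E(G_c)|=\sum_{i}|C_0\setminus D_i|\cdot|C_1\cap D_i|$, and compares the two sums directly: termwise when $T(C_0)=0$ or $T(C_1)=0$, and in the case $T(C_0)=j_0\neq j_1=T(C_1)$ by splitting off the indices $j_0,j_1$, using the containments $C_0\cap D_{j_0}\subseteq C_0\setminus D_{j_1}$ and $C_0\cap D_{j_1}\subseteq C_0\setminus D_{j_0}$, and finishing with a rearrangement-style swap of two products. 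You instead recast the claim as $2\,|E^*\cap E(G_c)|\le|E^*|$ and prove the per-pair bound $\sum_{\ell}c_0^\ell c_i^\ell\le\tfrac12|C_0|\,|C_i|$, with three ingredients: the observation that $T(C_i)\neq t$ forces $c_i^t\le\tfrac12|C_i|$ (which handles $T(C_i)=0$ and $T(C_i)=s\neq t$ uniformly, subcases the paper treats separately), the product-of-sums bound on $\sum_{\ell\neq t}c_0^\ell c_i^\ell$, and the factorization $ab+(A-a)(B-b)=\tfrac12 AB+(2a-A)\bigl(b-\tfrac12 B\bigr)$ with $a=c_0^t$, $b=c_i^t$, $A=|C_0|$, $B=|C_i|$. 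I checked the algebra: both sides equal $2ab+AB-Ab-aB$, and the sign argument is sound since $2a-A>0$ and $b-\tfrac12 B\le 0$. What each approach buys: your reformulation to ``at most half of all possible edges are present in $G_c$'' gives a symmetric target and a uniform two-case argument with no rearrangement step; the paper's direct comparison keeps termwise domination outside $\{j_0,j_1\}$ and isolates the combinatorial content in a single swap inequality, but pays for this with a finer case analysis. Both arguments are elementary and complete.
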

\begin{proof}
	We first show the statement for~$r=1$.
	Let~$\mathcal{D} = \{D_1,D_2, \dots, D_q\}$ be the cliques in~$G_c$.
	Observe that~$|E^* \cap E(G_c)| = \sum_{i = 1}^{q} |C_0 \cap D_i| \cdot |C_1 \cap D_i|$.
	Furthermore, $$|E^* \setminus E(G_c)| = \sum_{i = 1}^{q} |C_0 \setminus D_i| \cdot |C_1 \cap D_i| = \sum_{i = 1}^{q} |C_0 \cap D_i| \cdot |C_1 \setminus D_i|.$$

	If~$T(C_0) = 0$, then, by definition of~$T$, we have~$|C_0 \setminus D_i| \ge
	|C_0 \cap D_i|$ for all~$1 \le i \le q$.
	Thus,~$|E^* \cap E(G_c)| \le |E^* \setminus E(G_c)|$.
	By symmetry, this follows also from~$T(C_1) = 0$.
	
	It remains to consider the case that~$T(C_0) = j_0 \ne j_1 = T(C_1)$ for
	some~$j_0,j_1 \in \{1,2, \ldots, q\}$.
	From the above argument for the case~$T(C_0) = 0$ we have
 	$$ \sum_{i \in \{1,\ldots,q\} \setminus \{j_0,j_1\}} |C_0 \cap D_i| \cdot |C_1 \cap D_i| \le \sum_{i \in \{1,\ldots,q\} \setminus \{j_0,j_1\}} |C_0 \setminus D_i| \cdot |C_1 \cap D_i|. $$ 
 	It remains to show that
 	\begin{align}
		\sum_{i \in \{j_0,j_1\}} |C_0 \cap D_i| \cdot |C_1 \cap D_i| \le \sum_{i \in \{j_0,j_1\}} |C_0 \setminus D_i| \cdot |C_1 \cap D_i|.  \label{eq:cliqueTypeEq1}
	\end{align}
 	To this end, observe that
	\begin{align}
		C_0 \cap D_{j_0} \subseteq C_0 \setminus D_{j_1} && \text{and} && C_0 \cap D_{j_1} \subseteq C_0 \setminus D_{j_0}.  \label{eq:cliqueTypeEq2}
	\end{align}
 	Furthermore, 
	\begin{align}
	 	|C_0 \setminus D_{j_1}| > |C_0 \setminus D_{j_0}| && \text{and} && |C_1 \setminus D_{j_1}| < |C_1 \setminus D_{j_0}|. \label{eq:cliqueTypeEq3}
	\end{align}
	From the above, we can deduce that
	\begin{align*}
				& |C_0 \cap D_{j_0}| \cdot |C_1 \cap D_{j_0}| + |C_0 \cap D_{j_1}| \cdot |C_1 \cap D_{j_1}| \\
		\overset{\eqref{eq:cliqueTypeEq2}}{\le} {} 	& |C_0 \setminus D_{j_1}| \cdot |C_1 \cap D_{j_0}| + |C_0 \setminus D_{j_0}| \cdot |C_1 \cap D_{j_1}| \\
		\overset{\eqref{eq:cliqueTypeEq3}}{\le} {} 	& |C_0 \setminus D_{j_0}| \cdot |C_1 \cap D_{j_0}| + |C_0 \setminus D_{j_1}| \cdot |C_1 \cap D_{j_1}|. 
 	\end{align*}
 	This completes the proof for the case~$T(C_0) \neq T(C_1)$.
 	
 	The proof for the case~$r > 1$ is now straightforward.
 	We have
 	\begin{align*}
		|E^* \cap E(G_c)| & = \sum_{j=1}^r\sum_{i = 1}^{q} |C_0 \cap D_i| \cdot |C_j \cap D_i|, \\
		|E^* \setminus E(G_c)| & = \sum_{j=1}^r\sum_{i = 1}^{q} |C_0 \setminus D_i| \cdot |C_j \cap D_i|.
 	\end{align*}
 	Using the above argument for~$r=1$ for every~$C_j$, $1 \le j \le r$, we have that
 	$$ \sum_{i = 1}^{q} |C_0 \cap D_i| \cdot |C_j \cap D_i| \le \sum_{i = 1}^{q} |C_0 \setminus D_i| \cdot |C_j \cap D_i|.$$
 	Thus, $|E^* \cap E(G_c)| \le |E^* \setminus E(G_c)|$.
\end{proof}

\begin{lem}
	\label{lem: standard solution for DCCED}
	Let~$(G,G_c,k,d)$ be a yes-instance of \DCCED.
	Then there exists a standard solution~$G'$ for~$(G,G_c,k,d)$.
\end{lem}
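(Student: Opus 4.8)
The plan is to start from an arbitrary solution $G'$ and rearrange, inside each clique of $G'$, the way the cliques of $G$ have been merged, so that only cliques sharing a common nonzero $T$-value end up together. By \cref{obs: DCCED with G cluster graph} we may assume that $G$ is a cluster graph, so every clique $C'$ of $G'$ is a disjoint union of cliques $C_1,\dots,C_s$ of $G$, and the only edges inserted inside $C'$ are the ``inter-clique'' edges running between distinct $C_i$ and $C_j$. First I would record the elementary bookkeeping that, since we only insert edges, $d_E(G',G_c)=|E(G)\oplus E(G_c)|-\mathrm{gain}(G')$, where $\mathrm{gain}(G')=|(E'\setminus E)\cap E(G_c)|-|(E'\setminus E)\setminus E(G_c)|$; thus a larger gain means a smaller distance, and the gain is additive over pairwise disjoint sets of inserted edges.

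The construction is as follows. Inside each clique $C'$ of $G'$, group the cliques $C_1,\dots,C_s$ by their $T$-value, putting all cliques of the same nonzero $T$-value into one group and each $T=0$ clique into its own singleton group; let $G''$ be the cluster graph obtained by making each group a clique and leaving all other cliques of $G'$ untouched. I would then verify the three easy properties: $G''$ is a cluster graph with $E\subseteq E''$; the set $A''$ of edges inserted by $G''$ is a subset of the set $A$ of edges inserted by $G'$, so $|E''\oplus E|\le|E'\oplus E|\le k$; and $G''$ is standard by construction, since every clique of $G''$ containing two distinct cliques of $G$ consists of cliques of equal nonzero $T$-value.

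The crux is the distance bound $d_E(G'',G_c)\le d_E(G',G_c)$, which by the above is equivalent to $\mathrm{gain}(A'')\ge\mathrm{gain}(A)$, i.e.\ to $\mathrm{gain}(A\setminus A'')\le 0$ (the cross-group edges that we discarded). Here I would invoke \cref{lem:T-map-difference}. Working inside one clique $C'$, I would order the groups $P_1,\dots,P_m$ and write $A\setminus A''$ as the disjoint union over $a$ of the edge sets $A_a$ joining $P_a$ to $P_{a+1}\cup\dots\cup P_m$, and then split each $A_a$ further over the individual cliques $C_0\in P_a$ into the edges from $C_0$ to all cliques in later groups. For each such $C_0$, the hypothesis of \cref{lem:T-map-difference} is met: if $T(C_0)=0$ the first alternative holds, and if $T(C_0)\neq 0$ then every later-group clique has $T$-value distinct from $T(C_0)$, so the second alternative holds. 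The lemma then gives that this block of edges has non-positive gain. Summing over $C_0\in P_a$, over $a$, and over all cliques $C'$ of $G'$ (all these blocks being pairwise disjoint) yields $\mathrm{gain}(A\setminus A'')\le 0$, whence $\mathrm{gain}(G'')\ge\mathrm{gain}(G')$ and $d_E(G'',G_c)\le d_E(G',G_c)\le d$. Thus $G''$ is a standard solution.

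I expect the main obstacle to be precisely this last piece of bookkeeping. \cref{lem:T-map-difference} controls only the edges emanating from a \emph{single} clique $C_0$ whose $T$-value differs from those of its partners, so the cross-group edges must be decomposed into blocks of exactly that shape, and I must make sure the decomposition is into pairwise disjoint blocks so that the per-block inequalities can simply be added. Once the decomposition is arranged correctly, each block reduces to a direct application of the lemma and the remainder is just additivity of the gain together with the monotone translation between gain and edge-based distance.
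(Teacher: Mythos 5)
Your proposal is correct and follows essentially the same route as the paper: both hinge on \cref{lem:T-map-difference} to show that deleting the inserted edges between cliques of~$G$ with differing (or zero) $T$-values never increases the edge-based distance while only lowering the insertion cost, so dropping all cross-group edges yields a standard solution. The only difference is organizational --- you regroup by $T$-value in one shot and sum the lemma over a disjoint decomposition of the discarded edges, whereas the paper performs the same regrouping iteratively (splitting a non-standard clique into a same-$T$ part and a remainder, then recursing), and the per-block inequalities you add up are exactly the per-split inequalities of the paper.
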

\begin{proof}
Let~$G'$ be a solution for $(G,G_c,k,d)$.
Assume that~$G'$ is not a \emph{standard} solution.
Let~$C'$ be a clique in~$G'$ which is not standard. (A clique~$C'$ is called a
\emph{standard} clique if either it only contains one clique
from~$\mathcal{C}$ or there is a number~$i$ with~$1 \le i \le q$ such that all
cliques from~$\mathcal{C}$ contained in~$C'$ have the value~$i$ under the
function~$T$.) Next we show that we can modify~$G'$ to get a \emph{standard}
solution by showing that each non-standard clique can be divided into some
standard cliques and the resulting cluster graph is still a solution.
We consider two cases distinguishing whether or not~$C'$ contains a clique~$C$ from~$G$ with~$T(C) = 0$.

If~$C'$ contains a clique~$C$ from~$G$ with~$T(C) = 0$, then splitting~$C$ from~$C'$ gives a cheaper solution that, by \cref{lem:T-map-difference}, also has distance at most~$d$ from~$G_c$.

Now consider the case that there is no clique~$C$ from~$G$ with~$T(C)=0$ contained in~$C'$.
Let~$S=\{C_1,C_2,\dots,C_{r'},C_{r'+1},\dots,C_r\}$ be the set of cliques from~$G$ contained in~$C'$ with~$C = C_1$.
Let~$S_1=\{C_1,C_2,\dots,C_{r'}\}$ be the set that contains all cliques~$C_i \in S$ with~$T(C_i)=T(C_1)$ and~$S_2 = S \setminus S_1$.
We divide~$C$ into two cliques~$C_{S_1}$ and~$C_{S_2}$ and get a new cluster graph~$G^*$, where~$C_{S_1}$ contains all vertices in~$C_1,C_2,\dots,C_{r'}$ and~$C_{S_2}$ contains all vertices in~$C_{r'+1},C_{r'+2},\dots,C_r$.
We show that~$G^*$ is also a solution.

Let~$E_\Delta^i$ be the set of edges between vertices in~$C_{S_1}$ and vertices in~$C_i$ for each~$r'+1 \le i \le r$.
Let~$E_\Delta=\bigcup_{r'+1 \le i \le r}E_\Delta^i$.
Then to get~$G^*$ from~$G'$, we need to delete all edges in~$E_\Delta$.
Now we split edges in $E_\Delta^i$ into two parts.
Let~$E_\Delta^i=E_+^i \uplus E_-^i$, where~$E_+^i=E_\Delta^i \cap(E(G) \oplus E(G_c))$ and~$E_-^i=E_\Delta^i \cap \big( \binom{V}{2} \setminus (E(G) \oplus E(G_c)) \big)$.
Since~$T(C_i) \neq T(C_1)$ for any~$r'+1 \le i \le r$, it follows from \cref{lem:T-map-difference} that~$|E_+^i| \le |E_-^i|$.
Then for the distance upper bound, we have that 
\begin{align*} 
|E(G^*) \oplus E(G_c)| &=  |E(G') \oplus E(G_c)| -\sum_{r'+1 \le i \le r}|E_+^i| + \sum_{r'+1 \le i \le r}|E_-^i| \\
&= |E(G') \oplus E(G_c)| -\sum_{r'+1 \le i \le r}(|E_+^i|-|E_-^i|) \\
&\le |E(G') \oplus E(G_c)| \\
&\le d.
\end{align*}
For the modification budget, we have that
\begin{equation*}
|E(G) \oplus E(G^*)| = |E(G) \oplus E(G')|-|E_\Delta| \le k.
\end{equation*}
Hence, $G^*$ is also a solution.
We can continue to divide clique~$C_{S_2}$ in~$G^*$ in the same way such that every sub-clique is standard.
Thus, for every non-standard clique in~$G'$, we can divide it into several standard cliques and the new cluster graph is still a solution. 
\end{proof}

Note that for a clique~$C_i$ in~$G$ which has exactly half of its vertices from
one clique in~$G_c$ and the remaining half of its vertices from another clique
in~$G_c$, no clique in~$G$ has the same type as it.
\cref{lem: standard solution for DCCED} allows us to focus on standard solutions.
This allows us to show our next fixed-parameter tractability result.

\begin{lem}
\label{lem: DCCED is FPT wrt d}
\DCCED can be solved in $O(d^{d+1}\cdot n^3)$ time and thus is in \FPT when
parameterized by the distance~$d$.
\end{lem}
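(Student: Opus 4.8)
The plan is to instantiate the generic four-step MCK-approach with parameter $p = d$. First I would invoke \cref{obs: DCCED with G cluster graph} to assume that $G$ is already a cluster graph, so that the set $\mathcal{C}$ of components of $G$ consists of cliques, and by \cref{lem: standard solution for DCCED} it suffices to search for a \emph{standard} solution. This dictates the partition required in \cref{step:split}: for each clique $D_i$ of $G_c$ I put all cliques $C \in \mathcal{C}$ with $T(C) = i$ into one part $G_i$ ($1 \le i \le q$), and I collect all cliques with $T(C)=0$ into the ignored part $G_{\ell+1}$. Since distinct components of $G$ share no edges, there is no edge between parts; and since a standard solution merges only cliques of equal positive $T$-value, no inserted edge ever runs between two parts (cliques with $T(C)=0$ are never merged, so the ignored part is genuinely inert). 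This establishes \cref{step:split}.

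The heart of the proof is \cref{step:solutions-for-one-part}: bounding, for each part $G_i$, the number of representative insertion sets $\mathcal{E}_i$ (equivalently, merge patterns of the cliques with $T(C)=i$) by a function $f(d)$, and I expect this to be the main obstacle. The key structural observation is that any solution $G'$ with $d_E(G',G_c)\le d$ deviates only boundedly from the ``canonical'' merge. Call a clique $C$ with $T(C)=i$ \emph{impure} if $C \not\subseteq D_i$. Each impure clique contributes at least $|C\cap D_i|\cdot|C\setminus D_i|\ge 1$ edges that lie in $E(G)\subseteq E(G')$ but not in $E_c$; these edge sets are vertex-disjoint across cliques, so the number of extra edges (edges of $G'$ outside $G_c$, which is at most $d_E(G',G_c)\le d$) is at least the number of impure cliques, whence there are at most $d$ impure cliques in total. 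For the \emph{pure} cliques ($C \subseteq D_i$), note that two pure cliques placed in different $G'$-groups leave all $E_c$-edges between them missing; if the pure cliques of $D_i$ are split into $G'$-groups of total pure sizes $s_1 \ge s_2 \ge \cdots \ge s_t$, then $s_1(s_2+s_3+\cdots+s_t) \le \sum_{a<b}s_a s_b \le d$, so all but at most $d$ vertices of the pure cliques lie in a single ``main'' group. Consequently, up to the bounded deviation controlled by $d$ (at most $d$ impure cliques and at most $d$ pure-clique vertices outside the main group), the merge pattern of $G_i$ is determined, and the number of representative patterns --- hence of distinct $(\text{cost},\text{gain})$-tuples added to $S_i$ --- is at most $f(d) = d^{O(d)}$, computable in $d^{O(d)}\cdot n^{O(1)}$ time.

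With the partition and the sets $S_i$ in hand, I would carry out \cref{step:createMCK,step:solveMCK} verbatim: build an MCK instance with $W = k$, $P = |E(G) \oplus E_c| - d$, where each tuple $(|E_i'|, |E_i'\cap E_c| - |E_i'\setminus E_c|) \in S_i$ becomes an item of weight $|E_i'|$ and profit equal to the induced decrease of the edge-based distance. Correctness follows from \cref{lem: standard solution for DCCED} together with the property guaranteed in \cref{step:solutions-for-one-part} that the restriction of some optimal standard solution to each $V_i$ is captured by a tuple in $S_i$, while \cref{lem:T-map-difference} justifies restricting attention to same-$T$ merges. Finally, \cref{obs:fpt} with $f(d)=d^{O(d)}$ and the pseudo-polynomial bound of \cref{lem:MCK-pseudopoly} (using $W = k < n^2$) yields the claimed $O(d^{d+1}\cdot n^3)$ running time. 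The delicate point throughout is making the ``at most $d$ deviating cliques and vertices'' argument precise enough that the enumeration provably contains the right tuple for some optimal standard solution while keeping $|\mathcal{E}_i|$ within $d^{O(d)}$.
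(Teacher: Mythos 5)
Your step-1 setup (the cluster-graph assumption via \cref{obs: DCCED with G cluster graph}, the $T$-based partition into parts $G_1,\dots,G_q$ plus an inert part for $T$-value $0$, and the restriction to standard solutions via \cref{lem: standard solution for DCCED}) matches the paper exactly, and your structural observations are correct: a yes-instance has at most $d$ impure cliques in total, and in any solution all but at most $d$ pure-clique vertices of each part lie in a single ``main'' group. However, there is a genuine gap precisely at the point you yourself flag as delicate: these deviation bounds do not by themselves yield a family $\mathcal{E}_i$ of size $f(d)$, as \cref{step:solutions-for-one-part} requires. The bounds tell you that the deviating pure cliques have total size at most $d$, but not \emph{which} pure cliques deviate; a part may contain $\Theta(n)$ pure cliques of size at most $d$, so the natural enumeration over deviating sets has $n^{\Theta(d)}$ candidates, which gives only an \XP algorithm, not an \FPT one. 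To close the gap you need an exchange/canonicalization argument --- for instance, that two pure cliques of equal size are interchangeable, because for a pure clique $C\subseteq D_i$ merged into a group, both the number of inserted edges and their split into $E_c$-edges versus non-$E_c$-edges depend only on $|C|$ --- so that one may enumerate merely the multisets of deviating sizes and pick representatives arbitrarily. You state neither this nor any substitute for it, so the claimed bound $|\mathcal{E}_i| \le d^{O(d)}$ is unproven.

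The paper's proof avoids this issue entirely with a simpler dichotomy in \cref{step:solutions-for-one-part}: if a part $G_i$ contains at most $d+1$ cliques, it brute-forces all $(d+1)^{d+1}$ partitions of these cliques; if $G_i$ contains at least $d+2$ cliques, then --- since every clique $C$ with $T(C)=i$ contains at least one vertex of $D_i$ --- splitting the part into two nonempty groups with $p$ and $q$ vertices of $D_i$ respectively leaves at least $p\cdot q \ge d+1$ edges of $G_c$ missing, so every solution must merge the entire part into one clique and $S_i$ is a single tuple. This sidesteps both the pure/impure distinction and the canonicalization your route requires, and it directly gives the $f(d)=(d+1)^{d+1}$ bound behind the stated $O(d^{d+1}\cdot n^3)$ running time.
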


\begin{proof}
	We apply our generic four-step approach and thus need to provide the details how to implement \cref{step:split,step:solutions-for-one-part}.
	
	By \cref{obs: DCCED with G cluster graph}, we can assume that our input graph is a cluster graph.
	Furthermore, exhaustively apply \cref{rrule:largeclique:edgedist} to delete too big cliques and denote with~$G$ the resulting cluster graph.
	Let~$\mathcal{C}$ be the set of all cliques in~$G$ and~$\mathcal{D} = \{D_1,D_2, \ldots, D_q\}$ be the set of all cliques in~$G_c$.
	We partition~$G$ into~$q+1$ groups~$G_1,G_2,\dots,G_q,G_{q+1}$ with~$G_i = G[V_i]$, where~$V_i=\{C \in \mathcal{C} \mid T(C)=i\}$ for~$1 \le i \le q$ and~$V_{q+1} = \{C \in \mathcal{C} \mid T(C)=0\}$.
	So~$G_{q+1}$ contains all cliques with value 0 under the function~$T$. 
	According to \cref{lem: standard solution for DCCED}, if there is a solution, then there is a solution only combining cliques within every group~$G_i$ for~$1 \le i \le q$.
	This shows that with~$\ell = q$ the requirements of \cref{step:split} of our generic approach are met.
		
	Next we describe \cref{step:solutions-for-one-part}, that is, for every part~$G_i$, we show how to compute a set~$S_i$ corresponding to all ``representative'' solutions.
	To this end, we distinguish two cases:~$G_i$~contains at most~$d+1$ cliques or at least~$d+2$ cliques.
	If~$G_i$ contains at most~$d+1$ cliques, then we can brute-force all possibilities to partition the cliques and merge the cliques in each partition. 
	There are less than~$(d+1)^{d+1}$ possibilities to do so and for each possibility we add to~$S_i$ a tuple representing the cost and gain of making all cliques in~$S$ into a clique.
	
	If~$G_i$ contains at least~$d+2$ cliques, then we show that we need to merge all cliques in~$G_i$:
	If not all cliques are merged into one clique, then we have a solution with two parts without any edge between the two parts (each part can be a single clique or a cluster graph).
	Let~$p$ and~$q$ be the number of vertices in the two parts that are also in~$D_i$.
	Since there are at least~$d+2$ cliques, each containing at least one vertex from~$D_i$, it follows that~$p+q \ge d+2$, $p\ge 1$, and~$q \ge 1$.
	Thus, at least~$p \cdot q \ge d+1$ edges in~$D_i$ (and thus in~$G_c$) are not in our solution, a contradiction to the fact that the solution needs to have a distance of at most~$d$ to~$G_c$.
	Hence, we only need to add one tuple to~$S_i$ encoding the cost and gain of making~$G_i$ into one clique.
	Together with \cref{obs:fpt} we get the statement of the lemma. 
\end{proof}

\subparagraph*{Matching-based distance.}

We next discuss how to adjust our generic four-step approach for \DCCMD.
The main difference to the edge-based distance variants is an additional search tree of size~$O(d^{d+2})$ in the beginning.
Each leaf of the search tree then corresponds to a simplified instance where we have additional knowledge on the matching defining the distance of a solution to~$G_c$.
With this additional knowledge, we can apply our generic four-step approach in each leaf, yielding the following.

\begin{lem}
	\label{lem: DCC is fpt wrt d}
	\DCCMD{} can be solved in $O(d^{d+2}\cdot n^3)$ time and thus is in \FPT when
parameterized by the distance~$d$.
\end{lem}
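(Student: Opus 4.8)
The plan is to mirror the fixed-parameter algorithm for \DCCED{} with respect to~$d$ (\cref{lem: DCCED is FPT wrt d}), but to precede the four-step approach with a bounded-depth branching that pins down the part of the matching realizing the distance. First I would invoke \cref{obs: DCCED with G cluster graph} (which also applies to \DCCMD) to assume that the input graph~$G$ is already a cluster graph, adjusting~$k$ accordingly; a solution is then simply a grouping of the cliques of~$G$ into super-cliques, where we insert all missing edges inside each group. As in the edge-based case I would use the type function~$T$ to define the parts~$G_1,\dots,G_q$ (cliques of type~$i$) and~$G_{q+1}$ (type~$0$). An analog of \cref{lem:T-map-difference} holds for the matching-based distance: merging two cliques whose types differ, or merging a type-$0$ clique, can never decrease~$d_M$, since the majority part of at least one of the merged cliques then lies outside the clique of~$G_c$ it is matched to. Hence, by the same reasoning as in \cref{lem: standard solution for DCCED}, there is an optimal solution that only merges cliques within a common type class, so the parts carry no edges between them and the first bullet of \cref{step:split} is met.

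The genuinely new difficulty, and the main obstacle, is that the matching-based distance does \emph{not} split additively over the parts a priori: each clique~$D_i$ of~$G_c$ can be matched to only one super-clique, so within a part~$G_i$ at most one ``main'' super-clique can be matched to~$D_i$, while every further (``leftover'') super-clique must be matched to some other clique of~$G_c$ or remain unmatched and thereby contribute to the distance. I would first observe that every leftover super-clique contributes at least one misplaced vertex, because its majority lies in~$D_i$ but~$D_i$ is no longer its matching partner; consequently a solution of distance at most~$d$ has at most~$d$ leftover super-cliques, and these touch at most~$d+1$ cliques of~$G_c$ through non-identity matches. The search tree is designed to guess exactly this localized matching information: it branches over how the leftover super-cliques are grouped and to which clique of~$G_c$ each of the~$O(d)$ non-trivially matched super-cliques is assigned, i.e.\ essentially a function from a set of size~$\le d+1$ into a set of size~$\le d+1$, which yields~$O(d^{d+2})$ leaves overall.

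In each leaf the targets of all non-identity matches are fixed. The key lemma I still have to prove here is that, under this fixed information, a maximum-weight matching can be assumed to respect the type partition, so that the global~$d_M(G',G_c)$ equals the sum of the per-part contributions; this is the step where the global matching must be localized, and it is precisely what forces the~$O(d^{d+2})$ branching instead of the single pass used for the edge-based distance. Once this decomposition is established, I would run the four-step approach of \cref{ssec:general-approach} inside each leaf: for each part~$G_i$ I compute the set~$S_i$ of tuples (insertion cost, matching gain) of its representative groupings, using the dichotomy from \cref{lem: DCCED is FPT wrt d} that a part with at most~$d+1$ cliques is handled by brute force over all its groupings, whereas a larger part must, up to the guessed leftovers, be merged into a single clique matched to~$D_i$. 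This keeps~$|S_i|$ bounded by a function of~$d$.

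Finally, I would set up the \MCK{} instance with weight bound~$W=k$ and profit bound~$P$ equal to the total matching gain needed to reach distance~$d$ (shifted by the fixed contribution of the guessed matches in the current leaf), and solve it via \cref{lem:MCK-pseudopoly}. By \cref{obs:fpt}, the work per leaf is~$f(d)\cdot n^3$, and multiplying by the~$O(d^{d+2})$ leaves of the search tree gives the claimed~$O(d^{d+2}\cdot n^3)$ running time, establishing that \DCCMD{} is in~\FPT{} when parameterized by~$d$.
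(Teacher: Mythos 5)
Your reduction to the edge-based machinery breaks at its very first step: the claimed matching-based analog of \cref{lem:T-map-difference} is false, and with it the claim that some optimal solution of \DCCMD{} only merges cliques of a common positive type. Concretely, take $V=\{a_1,a_2,a_3,b_1,b_2\}$, let $G_c$ consist of the cliques $D_1=\{a_1,a_2,a_3\}$ and $D_2=\{b_1,b_2\}$, and let $G$ consist of the cliques $C_1=\{a_1,a_2\}$ (type~$1$), $C_2=\{a_3,b_1\}$ (type~$0$), and $C_3=\{b_2\}$ (type~$2$). Without any merging, the best matching pairs $C_1$ with $D_1$ and one of $C_2,C_3$ with $D_2$, giving $d_M(G,G_c)=5-3=2$. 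Merging the type-$1$ clique $C_1$ with the type-$0$ clique $C_2$ yields a clique of weight $3$ towards $D_1$ while leaving $D_2$ free for $C_3$, so the resulting cluster graph has $d_M=5-4=1$. Hence merging across types (indeed, merging a type-$0$ clique) can strictly decrease the matching-based distance, and every solution achieving $d_M\le 1$ must perform such a merge; your algorithm, which places $C_2$ into the untouched part $G_{q+1}$ and never merges across type classes, would wrongly answer NO on the instance $(G,G_c,k=4,d=1)$. The underlying reason is that $d_M$ is not a sum of local edge counts: absorbing a clique into another group can \emph{free} a clique of $G_c$ to be matched to a third solution clique, a global exchange effect that the majority argument (which is exactly what makes \cref{lem:T-map-difference} work for the edge-based distance) does not control. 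Your own text flags the decomposition statement as ``the key lemma I still have to prove''; the counterexample shows it cannot be proved in this type-based form.

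The paper's proof of \cref{lem: DCC is fpt wrt d} avoids the type function altogether. It classifies the cliques of $G$ into $\mathcal{C}_1$ (contained in a single clique of $G_c$) and $\mathcal{C}_2$ (meeting at least two cliques of $G_c$), observes that $|\mathcal{C}_2|\le d$ and that each clique of $\mathcal{C}_2$ meets at most $d+1$ cliques of $G_c$ (else the instance is a no-instance), and branches over all ways to assign each clique of $\mathcal{C}_2$ to one of the cliques of $G_c$ it intersects or to $\emptyset$, giving the $O(d^{d+2})$ leaves. Only after this guess is the partition formed---part $V_i$ collects the $\mathcal{C}_1$-cliques inside $D_i$ together with the $\mathcal{C}_2$-cliques assigned to $D_i$---and then no cross-part merge is needed, precisely because all unassigned cliques of a part lie entirely inside $D_i$. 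In the counterexample above, $C_2\in\mathcal{C}_2$, and the branch assigning it to $D_1$ recovers the optimal solution. To repair your write-up, the branching must come first and define the partition, ranging over the input cliques that straddle several cliques of $G_c$, rather than being performed inside a type-based partition whose validity is exactly what fails.
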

{
\begin{proof}
	We apply our generic four-step approach and thus need to provide the details how to implement \cref{step:split,step:solutions-for-one-part}.
	
	We can assume that our input graph is a cluster graph. 
	Let~$\mathcal{C}$ be the set of all cliques in~$G$ and~$\mathcal{D} = \{D_1,D_2, \ldots, D_q\}$ the set of all cliques in~$G_c$.
	Then we classify all cliques in~$\mathcal{C}$ into two classes~$\mathcal{C}_1$ and~$\mathcal{C}_2$, where every clique in~$\mathcal{C}_1$ has the property that all its vertices are contained in one clique in~$\mathcal{D}$ and every clique in~$\mathcal{C}_2$ contains vertices from at least two different cliques in~$\mathcal{D}$.
	Observe that $|\mathcal{C}_2| \le d$ as otherwise the input is a no-instance.
	Similarly, every clique in~$\mathcal{C}_2$ contains vertices from at most~$d+1$ different cliques in~$\mathcal{D}$ as otherwise the input is a no-instance.
	
	This allows us to do the following branching step. 
	For each clique in $\mathcal{C}_2$ we try out all ``meaningful'' possibilities to match it to a clique in $\mathcal{D}$, where ``meaningful'' means that the cliques in $\mathcal{C}_2$ and $\mathcal{D}$ should share some vertices or we decide to not match the clique of $\mathcal{C}_2$ to any clique in $\mathcal{D}$. 
	For each clique this gives us $d+2$ possibilities and hence we have at most~$d^{d+2}$ different cases each of which defines a mapping $M\colon \mathcal{C}_2 \rightarrow \mathcal{D}\cup\{\emptyset\}$ that maps a clique in $\mathcal{C}_2$ to the clique in $\mathcal{D}$ it is matched to.


Given the mapping $M$ from cliques in $\mathcal{C}_2$ to cliques $\mathcal{D}$ or $\emptyset$, we partition~$G$ into~$q+1$ groups~$G_1,G_2,\dots,G_q,G_{q+1}$ with~$G_i = G[V_i]$, where $V_i=\{C \in \mathcal{C}_1 \mid C\subseteq D_i\}\cup\{C \in \mathcal{C}_2 \mid M(C)=D_i\}$ and~$V_{q+1} = \{C \in \mathcal{C}_2 \mid M(C)=\emptyset \}$.

If there is a solution with a matching that uses the matches given by $M$, then
there is a solution only combining cliques within every group~$G_i$,~$1 \le i
\le q$, since all cliques in~$G_i$ that are not matched by $M$ are completely
contained in $D_i$ and hence would not be merged with cliques in~$G_j$ for some
$i\neq j$.
	This shows that with~$\ell = q$ the requirements of \cref{step:split} of our generic approach are met.

Next we describe \cref{step:solutions-for-one-part}, that is, for every
part~$G_i$, we show how to compute a set~$S_i$ corresponding to all
``representative'' solutions. Note that all except for at most $d$ cliques
from~$G_i$ need to be merged into one clique that is then matched with~$D_i$,
otherwise the matching distance would be too large. For each clique in~$G_i$
that is not completely contained in~$D_i$ we already know that it is matched
to~$D_i$, hence we need to merge all cliques of this kind to one
clique~$C^\star_i$. Each clique in $G_i$ that is completely contained in~$D_i$
and has size at least~$d+1$ also needs to be merged to~$C^\star_i$, otherwise
the matching distance would be too large. For all cliques of $G_i$ that are
completely contained in~$D_i$ with size~$x$ for some $1\le x\le d$ we merge all
but $d$ cliques to~$C^\star_i$. This leaves us with one big clique~$C^\star_i$
and $d^2$ cliques of size at most~$d$ each.
Now we can brute-force all possibilities to merge some of the remaining cliques to~$C^\star_i$.
There are less than~$d^{d}$ possibilities to do so and for each possibility we
add to~$S_i$ a tuple representing the cost and gain of merging the cliques
according to the partition.
Together with \cref{obs:fpt} we get the statement of the lemma. 
\end{proof}
}

\section{Conclusion}\label{sec:concl}
Our work provides a first thorough (parameterized) analysis of \DCE{}, addressing a natural dynamic setting for graph-based data clustering.
We deliver both (parameterized) tractability and intractability results. 
Our positive algorithmic results (fixed-parameter tractability and kernelization) are mainly of classification nature. 
Hence, to get practically useful algorithms, one needs to further improve our
running times, a challenge for future research.

A key difference between \DCE{} and static \CE{} is that all six variants of \DCE{} remain NP-hard when the input graph is a cluster graph (see Theorem~\ref{thm:completionhardness}).
Moreover, \DCE{} (both matching- and edge-based distance) is \W1-hard with
respect to the budget~$k$ (see Theorem~\ref{thm:Whard}) whereas \CE is
fixed-parameter tractable with respect to~$k$.
An obvious approach to solve \DCE{} is to compute (almost) all cluster graphs achievable with at most~$k$ edge modifications, 
then pick from this set of cluster graphs one at distance at most~$d$ to the
target cluster graph.
However, listing these cluster graphs is computationally expensive.
Indeed, our \W1-hardness results indicate that we might not do much better than using this simple approach.

We mention in passing that our results partly transfer to the ``compromise
clustering'' problem, where, given two input graphs, one wants to find a
``compromise'' cluster graph that is close enough (in terms of edge-based distance) to both input graphs.
It is easy to see that our fixed-parameter tractability results carry over if
one of these two input graphs is already a cluster graph.
A direction for future research is to examine whether our results can also be adapted to the case where both input graphs are arbitrary.
Furthermore, we left open the parameterized complexity of \DCE{} (deletion
variant and completion variant) with matching-based distance as well as \DCE{}
(completion variant) with edge-based distance when parameterized by the
budget~$k$
, see
Table~\ref{table:main-results} in Section~\ref{sec:intro}.  
Moreover, the existence of polynomial-size problem kernels for our
fixed-parameter tractable cases for single parameters (budget~$k$ or
distance~$d$) is open.

\bibliographystyle{abbrvnat} 
\bibliography{bib}

\begin{thebibliography}{44}
\providecommand{\natexlab}[1]{#1}
\providecommand{\url}[1]{\texttt{#1}}
\expandafter\ifx\csname urlstyle\endcsname\relax
  \providecommand{\doi}[1]{doi: #1}\else
  \providecommand{\doi}{doi: \begingroup \urlstyle{rm}\Url}\fi

\bibitem[Abu{-}Khzam(2017)]{Abu17}
F.~N. Abu{-}Khzam.
\newblock On the complexity of multi-parameterized cluster editing.
\newblock \emph{Journal of Discrete Algorithms}, 45:\penalty0 26--34, 2017.

\bibitem[Abu-Khzam et~al.(2015)Abu-Khzam, Egan, Fellows, Rosamond, and
  Shaw]{Abu-Khzam+2015}
F.~N. Abu-Khzam, J.~Egan, M.~R. Fellows, F.~A. Rosamond, and P.~Shaw.
\newblock On the parameterized complexity of dynamic problems.
\newblock \emph{Theoretical Computer Science}, 607:\penalty0 426--434, 2015.

\bibitem[Abu{-}Khzam et~al.(2017)Abu{-}Khzam, Cai, Egan, Shaw, and
  Wang]{Abu-KhzamCESW17}
F.~N. Abu{-}Khzam, S.~Cai, J.~Egan, P.~Shaw, and K.~Wang.
\newblock Turbo-charging dominating set with an {FPT} subroutine: Further
  improvements and experimental analysis.
\newblock In \emph{Proceedings of the 14th Annual Conference on Theory and
  Applications of Models of Computation, {TAMC} 2017}, volume 10185 of
  \emph{LNCS}, pages 59--70. Springer, 2017.

\bibitem[Abu{-}Khzam et~al.(2018)Abu{-}Khzam, Egan, Gaspers, Shaw, and
  Shaw]{AEGS018}
F.~N. Abu{-}Khzam, J.~Egan, S.~Gaspers, A.~Shaw, and P.~Shaw.
\newblock Cluster editing with vertex splitting.
\newblock In \emph{Proceedings of the 5th International Symposium on
  Combinatorial Optimization, {ISCO} 2018}, volume 10856 of \emph{LNCS}, pages
  1--13. Springer, 2018.

\bibitem[Bansal et~al.(2004)Bansal, Blum, and Chawla]{bansal2004correlation}
N.~Bansal, A.~Blum, and S.~Chawla.
\newblock Correlation clustering.
\newblock \emph{Machine Learning}, 56:\penalty0 89--113, 2004.

\bibitem[Ben-Dor et~al.(1999)Ben-Dor, Shamir, and
  Yakhini]{ben-dor_clustering_1999}
A.~Ben-Dor, R.~Shamir, and Z.~Yakhini.
\newblock Clustering gene expression patterns.
\newblock \emph{Journal of Computational Biology}, 6\penalty0 (3-4):\penalty0
  281--297, 1999.

\bibitem[B{\"{o}}ckenhauer et~al.(2008)B{\"{o}}ckenhauer, Hromkovi\v{c},
  M{\"{o}}mke, and Widmayer]{BHMW08}
H.~B{\"{o}}ckenhauer, J.~Hromkovi\v{c}, T.~M{\"{o}}mke, and P.~Widmayer.
\newblock On the hardness of reoptimization.
\newblock In \emph{Proceedings of the 34th Conference on Theory and Practice of
  Computer Science, {SOFSEM} 2008}, volume 4910 of \emph{LNCS}, pages 50--65.
  Springer, 2008.

\bibitem[B{\"{o}}ckenhauer et~al.(2018)B{\"{o}}ckenhauer, Burjons, Raszyk, and
  Rossmanith]{abs-1809-10578}
H.~B{\"{o}}ckenhauer, E.~Burjons, M.~Raszyk, and P.~Rossmanith.
\newblock Reoptimization of parameterized problems.
\newblock \emph{CoRR}, abs/1809.10578, 2018.
\newblock URL \url{http://arxiv.org/abs/1809.10578}.

\bibitem[B{\"o}cker and Baumbach(2013)]{bocker_cluster_2013}
S.~B{\"o}cker and J.~Baumbach.
\newblock Cluster {Editing}.
\newblock In \emph{Proceedings of the 9th Conference on Computability in
  Europe, {CiE} 2013}, volume 7921 of \emph{LNCS}, pages 33--44. {Springer},
  2013.

\bibitem[Cai(2008)]{cai2008parameterized}
L.~Cai.
\newblock Parameterized complexity of cardinality constrained optimization
  problems.
\newblock \emph{The Computer Journal}, 51\penalty0 (1):\penalty0 102--121,
  2008.

\bibitem[Cao and Chen(2012)]{cao_cluster_2012}
Y.~Cao and J.~Chen.
\newblock Cluster {Editing}: {Kernelization} based on edge cuts.
\newblock \emph{Algorithmica}, 64\penalty0 (1):\penalty0 152--169, 2012.

\bibitem[Charikar et~al.(2004)Charikar, Chekuri, Feder, and
  Motwani]{charikar2004incremental}
M.~Charikar, C.~Chekuri, T.~Feder, and R.~Motwani.
\newblock Incremental clustering and dynamic information retrieval.
\newblock \emph{SIAM Journal on Computing}, 33\penalty0 (6):\penalty0
  1417--1440, 2004.

\bibitem[Chen and Meng(2012)]{chen20122k}
J.~Chen and J.~Meng.
\newblock A $2k$ kernel for the cluster editing problem.
\newblock \emph{Journal of Computer and System Sciences}, 78\penalty0
  (1):\penalty0 211--220, 2012.

\bibitem[Chen et~al.(2018)Chen, Molter, Sorge, and
  Such{\'{y}}]{chen2017parameterized}
J.~Chen, H.~Molter, M.~Sorge, and O.~Such{\'{y}}.
\newblock Cluster editing in multi-layer and temporal graphs.
\newblock In \emph{Proceedings of the 29th {I}nternational {S}ymposium on
  {A}lgorithms and {C}omputation ({ISAAC} '18)}, volume 123 of \emph{LIPIcs},
  pages 24:1--24:13. Schloss Dagstuhl - Leibniz-Zentrum fuer Informatik, 2018.

\bibitem[Cygan et~al.(2015)Cygan, Fomin, Kowalik, Lokshtanov, Marx, Pilipczuk,
  Pilipczuk, and Saurabh]{CyganFKLMPPS15}
M.~Cygan, F.~V. Fomin, {\L{}}.~Kowalik, D.~Lokshtanov, D.~Marx, M.~Pilipczuk,
  M.~Pilipczuk, and S.~Saurabh.
\newblock \emph{Parameterized Algorithms}.
\newblock Springer, 2015.

\bibitem[Dey et~al.(2017)Dey, Rossi, and Sidiropoulos]{dey_temporal_2017}
T.~K. Dey, A.~Rossi, and A.~Sidiropoulos.
\newblock Temporal {{Clustering}}.
\newblock In \emph{Proceedings of the 25th Annual European Symposium on
  Algorithms, {ESA} 2017}, volume~87 of \emph{LIPIcs}, pages 34:1--34:14.
  {Schloss Dagstuhl - Leibniz-Zentrum fuer Informatik}, 2017.

\bibitem[Dong et~al.(2012)Dong, Frossard, Vandergheynst, and
  Nefedov]{dong_clustering_2012}
X.~Dong, P.~Frossard, P.~Vandergheynst, and N.~Nefedov.
\newblock Clustering {{With Multi}}-{{Layer Graphs}}: {{A Spectral
  Perspective}}.
\newblock \emph{IEEE Transactions on Signal Processing}, 60\penalty0
  (11):\penalty0 5820--5831, 2012.

\bibitem[Downey and Fellows(2013)]{downey2013fundamentals}
R.~G. Downey and M.~R. Fellows.
\newblock \emph{Fundamentals of Parameterized Complexity}.
\newblock Springer, 2013.

\bibitem[Downey et~al.(2014)Downey, Egan, Fellows, Rosamond, and
  Shaw]{Downey+2014}
R.~G. Downey, J.~Egan, M.~R. Fellows, F.~A. Rosamond, and P.~Shaw.
\newblock Dynamic dominating set and turbo-charging greedy heuristics.
\newblock \emph{Tsinghua Science and Technology}, 19\penalty0 (4):\penalty0
  329--337, 2014.

\bibitem[Fellows et~al.(2009)Fellows, Hermelin, Rosamond, and
  Vialette]{fellows2009parameterized}
M.~R. Fellows, D.~Hermelin, F.~Rosamond, and S.~Vialette.
\newblock On the parameterized complexity of multiple-interval graph problems.
\newblock \emph{Theoretical Computer Science}, 410\penalty0 (1):\penalty0
  53--61, 2009.

\bibitem[Fellows et~al.(2012)Fellows, Fomin, Lokshtanov, Rosamond, Saurabh, and
  Villanger]{FFLRSV12}
M.~R. Fellows, F.~V. Fomin, D.~Lokshtanov, F.~A. Rosamond, S.~Saurabh, and
  Y.~Villanger.
\newblock Local search: Is brute-force avoidable?
\newblock \emph{Journal of Computer and System Sciences}, 78\penalty0
  (3):\penalty0 707--719, 2012.

\bibitem[Flum and Grohe(2006)]{flum2006parameterized}
J.~Flum and M.~Grohe.
\newblock \emph{Parameterized Complexity Theory}, volume XIV of \emph{Texts in
  Theoretical Computer Science. An EATCS Series}.
\newblock Springer, 2006.

\bibitem[Fomin et~al.(2014)Fomin, Kratsch, Pilipczuk, Pilipczuk, and
  Villanger]{fomin_tight_2014}
F.~V. Fomin, S.~Kratsch, M.~Pilipczuk, M.~Pilipczuk, and Y.~Villanger.
\newblock Tight bounds for parameterized complexity of {Cluster Editing} with a
  small number of clusters.
\newblock \emph{Journal of Computer and System Sciences}, 80\penalty0
  (7):\penalty0 1430--1447, 2014.

\bibitem[Garey and Johnson(1979)]{GJ79}
M.~R. Garey and D.~S. Johnson.
\newblock \emph{Computers and Intractability---{A} Guide to the Theory of
  {NP}-Completeness}.
\newblock W. H. Freeman and Company, 1979.

\bibitem[Gaspers et~al.(2012)Gaspers, Kim, Ordyniak, Saurabh, and
  Szeider]{GKOSS12}
S.~Gaspers, E.~J. Kim, S.~Ordyniak, S.~Saurabh, and S.~Szeider.
\newblock Don't be strict in local search!
\newblock In \emph{Proceedings of the Twenty-Sixth {AAAI} Conference on
  Artificial Intelligence}, pages 486--492. {AAAI} Press, 2012.

\bibitem[Gramm et~al.(2005)Gramm, Guo, H{\"{u}}ffner, and Niedermeier]{GGHN05}
J.~Gramm, J.~Guo, F.~H{\"{u}}ffner, and R.~Niedermeier.
\newblock Graph-modeled data clustering: Exact algorithms for clique
  generation.
\newblock \emph{Theory of Computing Systems}, 38\penalty0 (4):\penalty0
  373--392, 2005.

\bibitem[Guo(2009)]{Guo09}
J.~Guo.
\newblock A more effective linear kernelization for cluster editing.
\newblock \emph{Theoretical Computer Science}, 410\penalty0 (8-10):\penalty0
  718--726, 2009.

\bibitem[Guo et~al.(2013)Guo, Hartung, Niedermeier, and Such{\'{y}}]{GHNS13}
J.~Guo, S.~Hartung, R.~Niedermeier, and O.~Such{\'{y}}.
\newblock The parameterized complexity of local search for {TSP}, more refined.
\newblock \emph{Algorithmica}, 67\penalty0 (1):\penalty0 89--110, 2013.

\bibitem[Hartung and Hoos(2015)]{HH15}
S.~Hartung and H.~H. Hoos.
\newblock Programming by optimisation meets parameterised algorithmics: {A}
  case study for cluster editing.
\newblock In \emph{Proceedings of the 9th International Conference on Learning
  and Intelligent Optimization, {LION} 2015}, volume 8994 of \emph{LNCS}, pages
  43--58. Springer, 2015.

\bibitem[Hartung and Niedermeier(2013)]{HN13}
S.~Hartung and R.~Niedermeier.
\newblock Incremental list coloring of graphs, parameterized by conservation.
\newblock \emph{Theoretical Computer Science}, 494:\penalty0 86--98, 2013.

\bibitem[Karp(1972)]{karp1972reducibility}
R.~M. Karp.
\newblock Reducibility among combinatorial problems.
\newblock In \emph{Complexity of Computer Computations}, pages 85--103.
  Springer, 1972.

\bibitem[Kellerer et~al.(2004)Kellerer, Pferschy, and Pisinger]{KPP04}
H.~Kellerer, U.~Pferschy, and D.~Pisinger.
\newblock \emph{Knapsack Problems}.
\newblock Springer, 2004.

\bibitem[Komusiewicz and Uhlmann(2012)]{komusiewicz2012cluster}
C.~Komusiewicz and J.~Uhlmann.
\newblock Cluster editing with locally bounded modifications.
\newblock \emph{Discrete Applied Mathematics}, 160\penalty0 (15):\penalty0
  2259--2270, 2012.

\bibitem[Krithika et~al.(2018)Krithika, Sahu, and Tale]{krithika2018dynamic}
R.~Krithika, A.~Sahu, and P.~Tale.
\newblock Dynamic parameterized problems.
\newblock \emph{Algorithmica}, 80\penalty0 (9):\penalty0 2637--2655, 2018.

\bibitem[Luo et~al.(2018)Luo, Molter, Nichterlein, and Niedermeier]{LMNN18}
J.~Luo, H.~Molter, A.~Nichterlein, and R.~Niedermeier.
\newblock Parameterized dynamic cluster editing.
\newblock In \emph{Proceedings of the 38th IARCS Annual Conference on
  Foundations of Software Technology and Theoretical Computer Science, FSTTCS
  '18}, volume 122 of \emph{LIPIcs}, pages 46:1--46:15, 2018.

\bibitem[Marx and Schlotter(2010)]{MS10}
D.~Marx and I.~Schlotter.
\newblock Parameterized complexity and local search approaches for the stable
  marriage problem with ties.
\newblock \emph{Algorithmica}, 58\penalty0 (1):\penalty0 170--187, 2010.

\bibitem[Meil{\u{a}}(2005)]{meila2005comparing}
M.~Meil{\u{a}}.
\newblock Comparing clusterings: an axiomatic view.
\newblock In \emph{Proceedings of the 22nd International Conference on Machine
  Learning}, pages 577--584. ACM, 2005.

\bibitem[Meil{\u{a}}(2012)]{meila2012local}
M.~Meil{\u{a}}.
\newblock Local equivalences of distances between clusterings--a geometric
  perspective.
\newblock \emph{Machine Learning}, 86\penalty0 (3):\penalty0 369--389, 2012.

\bibitem[Niedermeier(2006)]{Nie06}
R.~Niedermeier.
\newblock \emph{Invitation to Fixed-Parameter Algorithms}.
\newblock Oxford University Press, 2006.

\bibitem[Schieber et~al.(2018)Schieber, Shachnai, Tamir, and
  Tamir]{schieber2018theory}
B.~Schieber, H.~Shachnai, G.~Tamir, and T.~Tamir.
\newblock A theory and algorithms for combinatorial reoptimization.
\newblock \emph{Algorithmica}, 80\penalty0 (2):\penalty0 576--607, 2018.

\bibitem[Shamir et~al.(2004)Shamir, Sharan, and Tsur]{SST04}
R.~Shamir, R.~Sharan, and D.~Tsur.
\newblock Cluster graph modification problems.
\newblock \emph{Discrete Applied Mathematics}, 144\penalty0 (1-2):\penalty0
  173--182, 2004.

\bibitem[Tang et~al.(2009)Tang, Lu, and Dhillon]{tang_clustering_2009}
W.~Tang, Z.~Lu, and I.~S. Dhillon.
\newblock Clustering with {{Multiple Graphs}}.
\newblock In \emph{Proceedings of the {{Ninth IEEE International Conference}}
  on {{Data Mining}}, {ICDM} 2009}, pages 1016--1021, 2009.

\bibitem[Tantipathananandh and
  Berger-Wolf(2011)]{tantipathananandh_finding_2011}
C.~Tantipathananandh and T.~Y. Berger-Wolf.
\newblock Finding {{Communities}} in {{Dynamic Social Networks}}.
\newblock In \emph{Proceedings of the {{IEEE}} 11th {{International
  Conference}} on {{Data Mining}}, {ICDM} 2011}, pages 1236--1241, 2011.

\bibitem[Tantipathananandh et~al.(2007)Tantipathananandh, Berger-Wolf, and
  Kempe]{tantipathananandh_framework_2007}
C.~Tantipathananandh, T.~Y. Berger-Wolf, and D.~Kempe.
\newblock A {{Framework}} for {{Community Identification}} in {{Dynamic Social
  Networks}}.
\newblock In \emph{Proceedings of the 13th SIGKDD Conference on Knowledge
  Discovery and Data Mining, {KDD} 2007}, pages 717--726. {ACM}, 2007.

\end{thebibliography}

\end{document}